\documentclass[12pt,a4paper]{article}
\linespread{1.2}
\usepackage[centertags]{amsmath}
\usepackage{amsfonts,amsthm,amssymb}
\usepackage{mathrsfs}
\usepackage{amssymb}
\usepackage{amsmath}
\usepackage{graphicx}
\usepackage{booktabs}
\usepackage{appendix}
\usepackage{thmtools,thm-restate}
\usepackage[hmargin=3.175cm,vmargin=3.175cm]{geometry}
\usepackage{etoolbox}
\usepackage[colorlinks=true,
linkcolor=red,
urlcolor=blue,
citecolor=blue]{hyperref}
\usepackage{tikz}
\usetikzlibrary{calc, positioning, intersections, through}

\usepackage{natbib}
\bibliographystyle{plainnat}

\setcounter{MaxMatrixCols}{10}

\newcommand{\Ex}{\mathbb{E}}
\newcommand{\Pro}{\mathbb{P}}
\newcommand{\bX}{\mathbf{X}}

\newcommand{\R}{\mathbb{R}}

\newcommand{\s}{\sigma}

\DeclareMathOperator{\supp}{supp}

\newtheorem{lemma}{Lemma}

\newtheorem{proposition}{Proposition}

\newtheorem{theorem}{Theorem}

\newtheorem{corollary}{Corollary}

\theoremstyle{definition}
\newtheorem{definition}{Definition}
\newtheorem{example}{Example}

\begin{document}
	\title{Persuading while Learning}
	
	\author{
		Itai Arieli,\thanks{Faculty of Data and Decision Sciences, Technion, E-mail: iarieli@technion.ac.il.}
		\hspace{0.1cm}
		Yakov Babichenko,\thanks{Faculty of Data and Decision Sciences, Technion, E-mail: yakovbab@technion.ac.il.}
		\hspace{0.1cm}
		Dimitry Shaiderman,\thanks{Faculty of Data and Decision Sciences, Technion, E-mail: dima.shaiderman@gmail.com.}
		\hspace{0.1cm}
		Xianwen Shi\thanks{Department of Economics, University of Toronto, E-mail: xianwen.shi@utoronto.ca. }
		\hspace{1cm}
	}
	
	\maketitle

	\begin{abstract}
		We propose a dynamic product adoption persuasion model involving an impatient partially informed sender who gradually learns the state.  In this model, the sender gathers information over time, and hence her posteriors' sequence forms a discrete-time martingale. The sender commits to a dynamic revelation policy to persuade the agent to adopt a product. We demonstrate that under the assumption that the sender's martingale possesses Blackwell-preserving kernels, the family of optimal strategies for the sender takes an interval form; namely, in every period the set of martingale realizations in which adoption occurs is an interval. Utilizing this, we prove that if the sender is sufficiently impatient, then under a random walk martingale, the optimal policy is fully transparent up to the moment of adoption; namely, the sender reveals the entire information she privately holds in every period.
	\end{abstract}
	\section{Introduction}
	In many practical applications of information design, the informed party (the sender) is not always fully aware of the realized state of the world. This information may be gradually revealed to her over time. In such dynamic persuasion, the sender simultaneously reveals information to the receiver while learning about the state.
	
	For a concrete example, consider a state's approach to COVID-19 vaccination policy amid the pandemic. On one side, the state advocated for widespread vaccination, prioritizing public inoculation despite uncertainties surrounding vaccine effectiveness. This position was justified by the established safety of vaccines, asserting that the collective benefits of maximizing vaccination coverage outweigh the potential risks of vaccine ineffectiveness. On the other side, some individuals sought assurance regarding the vaccine's efficacy before opting for vaccination. 
	Such an interaction might be captured by a binary-state environment of \emph{medium effectiveness} and \emph{high effectiveness}. While the state wants the public to be vaccinated even if the effectiveness is medial, the public agrees to be vaccinated only once their belief about the effectiveness being high exceeds a certain threshold.
	
	A fundamental question evolves around how the state should disclose information to the public regarding vaccine effectiveness, considering its objectives. Should transparency be prioritized, with all information made available, or should it carefully select the information that should be disclosed to the public?
	
	
		Other related scenarios include an entrepreneur of a new product or app who learns additional information through market testing and must decide how much of this new information to reveal to potential investors or venture capitalists. Similarly, a central bank defending a currency peg must decide how to disclose information about the foreign exchange market to potential speculators. Additionally, a central bank can choose how to reveal information about bank stress test outcomes to dissuade investors from withdrawing their funds, with the format of stress tests assumed to be fixed.
		
		In all these cases, the sender faces the challenge of acquiring information while needing to persuade the receiver during this learning phase.  The primary objective of this paper is to explore the dynamics of optimal information disclosure for an adaptive sender that steadily acquires information over time. We study this question in the context of the classic product adoption model. In particular, a sender partially learns over time information about the quality of a product that could be either of good or bad quality. The sender reveals information to a short-lived sequence of receivers. The receivers arrive sequentially, observe the past revelations, and need to decide whether to adopt the product or not.\footnote{Equivalently, our model can be viewed as an interaction with a single long-lived receiver; once the posterior of the receiver has passed the desired threshold for adoption she can stop revealing information which incentivizes the long-lived receiver to adopt immediately.}

		The quality of the product is gradually revealed to the sender through martingale realizations.
		The dilemma faced by the sender is similar to the scenarios mentioned above. A distinguished policy in this context is the \emph{greedy policy}, where the sender, at each stage, selects a policy that maximizes the probability of the receiver adopting the product in the same period. On the other extreme, the sender could choose to wait, revealing no information initially, and then utilize the accumulated information later. Indeed as we shall show both policies may be optimal. More generally, the interplay that has a significant role in our analysis is whether to reveal more today or to wait for tomorrow and learn more about the state.
		
		\paragraph{Our contribution}
		In the setting we have just described, the sender's learning process is fully specified by a belief martingale. One way to describe the martingale is by using a probability kernel, which specifies the belief distribution tomorrow based on the sender's belief distribution today and thus defines a mapping between probability measures. A notion that plays a significant role in our analysis is that of \emph{Blackwell-order preserving} kernels. A probability kernel is said to be Blackwell order-preserving if this mapping preserves the Blackwell order on probability measures. In simple terms, if we consider two senders such that the belief distribution of one of them today is Blackwell more informative than the other, 
		then it holds that the more informed sender will retain his advantage tomorrow and will remain more informed. We say that a martingale is Blackwell-order preserving if its kernels are Blackwell-order preserving for every period $t$. We show that some standard martingales, like the belief martingale generated by conditionally independent signals or the martingale which induces a random walk, are Blackwell-order preserving.
		
		To understand the role that these martingales play in our analysis, we note that in general the problem that the sender is facing can be essentially translated to an \emph{optimal stopping problem}. That is, she needs to specify the persuasion probability for every privately observed posterior. Indeed, the information revelation is public; once a receiver has adopted the product, persuasion is stopped, and all subsequent receivers will also adopt, with no further information revealed. In contrast to standard optimal stopping problems, the state variables are the set of positive measures, which makes our problem intractable in general.
		
		Our first main result, Theorem \ref{theo:main}, shows that under a Blackwell order-preserving martingale, the class of optimal policies has a simple structure which we term \emph{interval policies}. Interval policies are a class of policies in which the sender chooses to stop whenever the posterior belief lies in a certain time-dependent closed interval around the threshold probability of the receiver. This essentially transforms the stopping problem from an infinite-dimensional object to a one-dimensional object, making the problem much more tractable.
		
		Our main application of the interval stopping problem is to the case where the belief martingale is formed by a random walk on the grid. We view this case as closely related to our motivating COVID example. We further consider the case where the utility that the sender assigns to each time period is determined by a discount factor $\delta$. Our second main result, Theorem \ref{theorem:greedy} defines a threshold discount factor as a function of the grid and shows that a sender whose discount factor lies below this threshold will optimally act greedily in each period. This essentially provides a partial characterization for conditions under which full transparency of the sender is optimal.

		\subsection{Related Literature}
		
		This paper contributes to the growing literature on dynamic persuasion which extends the classical model of Bayesian persuasion (\citep{AM95, KG11}) to various dynamic settings.
		
		One branch of literature focuses on a setting in which the state of the world evolves as a Markov chain, and the sender privately observes the state in each period. \citep{RSV17} characterize cases in which the greedy policy (namely, persuading optimally in each period) is optimal for a sender with discounted utilities. This includes Markov chains with binary states and renewing Markov chains. Our problem (if we restrict attention to a sender with discounted utilities) can be viewed as a special case of this setting once we identify the Markov chain's states with the posteriors of the sender. \citep{RSV17} demonstrate that the problem becomes involved already for ternary states, while the above reduction creates a Markov chain with as many states as the number of possible posteriors of the sender along the learning process. This number is large (or even infinite) in most of our applications. 
		\citep{LS21} study, in the same setting, when is the statically optimal value achievable in the dynamic setting.
		A continuous-time analog of this setting, including some important binary-state examples has been studied by \citep{E17}.

		Several papers such as \citep{HO19,OSZ20,BRV21,M22} study models in which, similarly to ours, the sender is initially uninformed and dynamically learns the state. These models assume that the information acquired by the sender is \emph{public} to the sender and the receiver. A crucial aspect of our model is that the dynamically learned information is exogenous and \emph{private} to the sender. In particular, we assume that the learning procedure is independent 
		of the experiment that the information is revealed to the receiver.\footnote{We note that in the COVID example, one may argue that the number of vaccinated people also affects the information available to the state. While this is true, we neglect this effect in our analysis. Thus, we assume that information about the vaccine will be released (say, from other sources) regardless of the number of vaccinated individuals.}
		
		Different sender's objectives such as incentivizing exploration \citep{KMP14,CH18}, maximizing suspense and surprise \citep{EFK15}, or maximizing effort \citep{ES20} have been studied in a dynamic persuasion setting in the case where the sender is initially fully informed about the state.

		Several papers have developed models of dynamic persuasion to provide justifications for the sender-optimal equilibrium outcome with commitment 
		in the model of static persuasion. For example, Che, Kim and Mierendorff (2023) show in a model of dynamic persuasion where information generation and processing are both costly for the sender and the receiver that the sender-optimal outcomes in Kamenica and Gentzkow (2011) can be approximated in a Markov perfect equilibrium as the persuasion cost vanishes.\footnote{See Honryo (2018) for an earlier contribution to dynamic persuasion with persuasion cost, and Escudé and  Sinander (2023) for a model of slow persuasion where the sender is restricted to a graduality constraint.} Best and Quigley (2023) and Mathevet, Pearce, and Stacchetti (2022) show that, in a model where a long-lived sender plays a cheap-talk game against a sequence of short-lived receivers, the sender-optimal equilibrium can be supported by reputation. 
		
		Finally, besides the persuasion literature, there is also the literature on the disclosure of verifiable information in dynamic settings. This includes \citep{A15,K24}. Notice that the disclosure problem restricts the sender's strategies to the timing of revealing any evidence while the persuasion problem allows for much richer policies for the sender which include a partial revelation of these pieces of information and their timing.
		
		\section{Model and Preliminaries}\label{sec:preliminaries}
		
		
		In our setting, the sender (she) dynamically receives information about an unknown state $\omega\in \{0,1\}$. 
		Beliefs about the state $\omega$ are identified with $[0,1]$, where $x\in [0,1]$ captures the probability that is assigned to state $\omega=1$.
		Specifically, we assume that the process according to which the sender learns about  the state is given by a (commonly known) martingale $\bX=(X_t)_{t=0,1,...,T}$ supported on the interval $[0,1]$ with $X_0$ being the Dirac measure on the common prior $\pi$. The number of steps in the learning process can be either finite ($T\in  \mathbb{N}$) or infinite ($T=\infty)$.
		We restrict attention to cases in which $\bX$ is a \emph{Markovian} martingale. That is, the behavior of the martingale from time $t+1$ on depends only on the realization of $X_t$ (rather than the realization of the entire history $(X_1,...,X_t)$). 
		
		In this paper, the sender interacts with a sequence of short-lived receivers who have a binary action set $A=\{0,1\}$. In each time $t=1,2,...$ the corresponding receiver chooses an action $a_t\in \{0,1\}$. 
		Action $a_t=1$ will be called \emph{adoption}.
		The instantaneous receiver's utility is 
		\begin{align*}
			u(\omega,a_t)=u_t(\omega,a_t)= \begin{cases}
				0 &\text{ if } a_t=0 \\
				1 &\text{ if } a_t=\omega=1 \\
				-\frac{l}{1-l} &\text{ if } a_t=1, \omega=0.
			\end{cases} 
		\end{align*}
		
		Namely, at any time $t$, the corresponding receiver prefers adoption if and only if his belief lies in $[l,1]$.

			
			
			The sender wishes to exploit her learning about the state, to encourage the receiver to adopt. To do so, she has to decide on the amount of information she wishes to reveal about her learning, which in turn generates the receiver's belief process about the state. A baseline observation, stemming from the martingale property of $\bX$, is that the sender may keep the receiver's belief fixated by refraining from revealing information. In the given model, this observation implies that upon adoption at some stage $t$, the sender can guarantee adoption in all subsequent stage as well.
			
			Therefore, the only relevant sequences of actions for which we shall define the sender's utility are of the form $(a_t)_{t=1,2,\ldots,T}=(0,...,0,1,1,1,...)\in\{0,1\}^T$.
			We denote by $w_t$ the sender's utility from adoption at time $t$, where $w_t$ is a decreasing sequence.  Alternatively, we may think of $w_t-w_{t+1}$ as the mass of consumers that arrive at time $t$ (where $w_{T+1}=0)$, and thus the utility of the sender equals the expectation of $\sum_{t=1}^T (w_t-w_{t+1}) a_t$.
			
			One particular example is $w_t=\delta^{t}$ which corresponds to the $\delta$-discounted utilities for the sender.

			\subsection{Persuasion as a stopping problem} 
			The sender commits to an information revelation policy which is a mapping from realizations of the martingale $(x_1,...,x_t)$ to distributions over abstract signals. By applying standard direct revelation arguments, the abstract signals can be replaced by the recommendation for adoption or waiting. In this equivalent formulation, an information revelation policy is simply a stopping time
			$\tau$ on the martingale $\bX$. To understand this equivalence note that once the receiver has adopted the product at time $t$ it is optimal for the sender to reveal no further information at subsequent periods.
			
			Formally, we identify the sender's policy with a \emph{randomized} stopping rule defined in terms of a random time $\tau$ and measurable mappings $\tau_t:[0,1]^t\to[0,1]$, $t=1,\ldots,T$ so that  $\mathbb{P}(\tau=t|\tau\geq t,X_1=x_1,\ldots,X_t=x_t) = \tau_t(x_1,\ldots,x_t)$. That is, conditional on not stopping prior to time $t$ the stopping probability is determined by $\tau_t$. Moreover, we require our stopping rule to satisfy an \emph{incentive compatibility constraint} in the form of:
			\begin{align}\label{Incentive Compatibility Constraint}
				\mathbb{E} [X_{t}|\tau=t]\geq l, \quad \forall t=1,...,T.
			\end{align}  
			This constraint asserts that conditional on stopping (recommendation to adopt) at time $t$, the receiver will be (weakly) better off following this recommendation. This is because, by Bayes rule, the posterior of the receiver given recommendation at time $t$, we have $\Pro(\omega=1|\tau=t) = \mathbb{E}[X_{t}|\tau=t]$. Let us denote by $\mathcal{T}$ the set of all randomized stopping rules $\tau$ satisfying \eqref{Incentive Compatibility Constraint}.

			We note that in any optimal solution, the incentive compatibility constraint will always be satisfied as equality since, otherwise, the sender gives information ``for free.'' 
			In addition, our assumption that $\pi<l$ implies that $\mathbb{E}[X_t|\tau>t]<l$ and thus the receiver  will take action $1$ by time $t$ iff $\tau\leq t$. 
			The sender's optimization problem is thus\footnote{For the case $T<\infty$, the supremum can be replaced by a maximum using backward-induction arguments. For the case $T=\infty$ we prove the existence of the maximum for all the special cases that we consider.} 
			\begin{align}\label{eq:sender-u}
				V^* = \sup_{\tau \in \mathcal{T}} \left\{\sum_{t=1}^T \Pro[\tau = t] w_t \right\}.
			\end{align}
			
			\begin{example}\label{ex1}
				To demonstrate the subtleties in the above persuasion problem, consider a two-period model where the sender who is sequentially exposed to two conditionally independent signals $S_1$ and $S_2$ with binary support $\{L,H\}$, one signal each period. The state is equally likely ex ante, i.e., $\pi=1/2$. For $t=1,2$, the distribution of signal $S_t$  conditional on state $\omega$ is  $\Pro(S_t=H|\omega=1) = \Pro(S_t=L|\omega=0)=q_t$  with $q_t\in [1/2,1]$. Let 
				$$x_s = \Pro(\omega=1|S_1=s), s\in \{L,H\}$$ 
				be the sender's period-one posterior estimate of the state after receiving signal $S_1=s$ and 
				$$x_{ss'} = \Pro(\omega=1|S_1=s, S_2=s'), s,s'\in \{L,H\}$$ 
				be the period-two posterior estimate after receiving signals $S_1=s$ and $S_2=s'$.  Then the martingale $\bX=(X_0,X_1,X_2)$ is given by $\supp(X_0)=\{\pi\}, \supp(X_1) = \{x_L,x_H\}$, and $\supp(X_2) = \{x_{LL},x_{LH},x_{HL}, x_{HH} \}$. We normalize the sender's utility by setting $w_1=1$. 
				
				Since there are only two periods, the optimal policy must be greedy at $t=2$. That is, in the second period the sender persuades the receiver with the maximal possible probability.

				Is greedy policy also optimal at $t=1$? Suppose that $l=18/25$, the information at the first period is generated by a signal with precision $q_1=3/4$ and in the second period by a conditionally independent signal with precision $q_2=4/5$. Then for $w_2\in [0,0.618]$,\footnote{The numerical cutoff is approximate.} the optimal policy is greedy in $t=1$ with $|\nu_1|=25/47$. If $w_2\in [0.955, 1]$, the optimal policy stays mute at time $t=1$ (i.e.,  $|\nu_1|=0$). If $w_2\in (0.618,0.955)$, the optimal policy persuades at time $t=1$, but not greedily with $|\nu_1|=25/58<25/47$.
				\begin{center}
					
					\begin{tikzpicture}
						[dot/.style={circle,inner sep=1pt,fill,label={#1},name=#1},
						help lines/.style={blue!10,ultra thin},scale=3,
						extended line/.style={shorten >=1cm, shorten <=-2.2cm}, 
						extended line/.default=1cm]
						
						\draw[-, black, dashed] (0,2) -- (2,2) node [black, right]{\small $t=0$};
						\draw[-, black, dashed] (0,1) -- (2,1) node [black, right]{\small $t=1$};
						\draw[-, black, dashed] (0,0) -- (2,0) node [black, right]{\small $t=2$};
						
						\draw [-, black, thin] (1,2) -- (0.6,1) node [blue!80!black, xshift=0.5cm, yshift=0.3cm]{\small $x_L$};
						\draw [-, black, thin] (1,2) -- (1.4,1) node [blue!80!black, xshift=0.5cm, yshift=0.3cm]{\small $x_H$};
						\draw [-, black, thin] (0.6,1) -- (0.1,0) node [blue!80!black, xshift=0cm, yshift=-0.3cm]{\small $x_{LL}$};
						\draw [-, black, thin] (0.6,1) -- (1.4,0) node [blue!80!black, xshift=0.1cm, yshift=-0.3cm]{\small $x_{HL}$};
						\draw [-, black, thin] (1.4,1) -- (0.8,0) node [blue!80!black, xshift=0cm, yshift=-0.3cm]{\small $x_{LH}$};
						\draw [-, black, thin] (1.4,1) -- (1.9,0) node [blue!80!black, xshift=0cm, yshift=-0.3cm]{\small $x_{HH}$};
						\draw [-, blue, dashed] (1.2,2) -- (1.2,0) node [blue!80!black, xshift=0cm, yshift=-0.3cm]{\small $l$};
						
						\filldraw[red](1,2) circle (0.5pt) node [blue!80!black, xshift=0cm, yshift=0.2cm]{\small $\pi$};
						\filldraw[red](0.6,1) circle (0.5pt);
						\filldraw[red](1.4,1) circle (0.5pt);
						\filldraw[red](0.1,0) circle (0.5pt);
						\filldraw[red](0.8,0) circle (0.5pt);
						\filldraw[red](1.4,0) circle (0.5pt);
						\filldraw[red](1.9,0) circle (0.5pt);
					\end{tikzpicture}
					
				\end{center}


				Another interesting phenomenon in this simple example is the non-monotonic relationship between the optimality of the greedy policy and the precision of the second signal. If we set $q_2 = \frac{1}{2}$ or $q_2 = 1$, the greedy policy is optimal for all $w_2$ in both cases. However, as shown above, this is not true for $q_2 = \frac{4}{5}$.
				
				Solving these optimization problems can be done by observing that the only free parameter of the optimization problem is the probability of stopping at time $t=1$. Once this parameter is set, the conditional expectation of stopping being $l$ uniquely defines the probability of stopping at $x_L$ and at $x_h$. At time $t=2$, the policy is greedy with respect to the remaining mass.
				
			\end{example}
			
			

			\subsection{Measure Theoretic Formulation}
			
			A probability kernel is a measurable function $\sigma:[0,1]\to\Delta([0,1])$ such that 
			$\mathbb{E}_{X\sim \sigma(x)}[X]=x$ for every $x\in[0,1]$. That is, the expectation under the probability measure $\sigma(x)$ is $x$ for any $x\in[0,1]$. A probability kernel defines an operator from $\Delta([0,1])$ to itself where for every probability measure $\mu$, $\sigma\circ \mu\in\Delta([0,1])$ is the \emph{pushforward} probability measure of $\mu$ by $\sigma$ such that for every Borel measurable set $B\subseteq [0,1]$ it holds that $$\sigma\circ\mu\, (B)=\int_{[0,1]}\sigma(x)(B)\mathrm{d}\mu(x).$$
			
			We note that a probability measure $\mu_1\in\Delta([0,1])$ together with $T$ probability kernels $\sigma_1,\ldots,\sigma_{T-1}$ determines a martingale $\bX=(X_t)_{t=0,1,...,T}$ where 
			$X_1\sim\mu_1$ and for every $1\leq t< T$
			$$\Pro_{X_{t+1}}(\ \cdot \ |X_{t}=x_{t},\ldots,X_1=x_1)=\Pro_{X_{t+1}}(\ \cdot \ |X_{t}=x_{t})=\sigma_t(x_t).$$
			That is, the conditional distribution of $X_{t+1}$ given $X_{t}=x_{t},\ldots,X_1=x_1$ is determined by $\sigma_t(x_t).$ We note that for every $t=2,\ldots ,T-1$ the distribution $\mu_t$ of $X_t$ is given by $\sigma_{t-1}\circ\ldots\circ\sigma_{1}\circ\mu_1.$

			Conversely, for a given Markovian martingale $\bX=(X_t)_{t=0,1,...,T}$ supported on $[0,1]$, there exist probability kernels $\sigma_1,\ldots,\sigma_{T-1}$ such that if $X_1\sim\mu_1$ then the martingale generated by the above procedure is $\bX=(X_t)_{t=0,1,...,T}$. This can be easily shown by taking $\sigma_t(x_t)=\Pro_{X_{t+1}}(\ \cdot \ |X_t=x_t)$. 
			Note that for every $1\leq t\leq T-1$ the kernel $\sigma_t$ is a.s. uniquely defined on the support of $X_t$ and can be arbitrarily defined outside of the support.
			
			For any positive and finite measure $\nu$ over $[0,1]$ we let $|\nu|=\nu([0,1])$ and $\overline{\nu}$ be the expectation of its normalization $\frac{1}{|\nu|}\nu$.
			We next represent an alternative formulation to the optimization problem. Given two positive measures $\nu, \mu$ on $[0,1]$ we denote $\nu\leq\mu$ if $\nu(B)\leq\mu(B)$ for any Borel measurable set $B\subseteq[0,1]$.
			\begin{restatable}{lemma}{firstlem}
				\label{lem:altern}
				An equivalent reformulation of the sender optimization problem given in Equation  \eqref{eq:sender-u} is the following: maximize $\sum_{t=1}^T |\nu_t|w_t$ subject to the following recursively defined constraints: $X_1\sim\mu_1$,  $\nu_t\leq\mu_t$ for every $t=1,\ldots ,T$, $\overline{\nu_t}\geq l$ for every $t=1,\ldots ,T$, and $\mu_t :=\sigma_{t-1} \circ (\mu_{t-1} - \nu_{t-1})$ for every $t=2,\ldots,T$.
			\end{restatable}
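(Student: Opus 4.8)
The plan is to prove the lemma by exhibiting a value‑preserving two‑way correspondence between randomized stopping rules $\tau\in\mathcal{T}$ and sequences of positive measures $(\nu_t)_{t=1}^T$ feasible for the reformulated program. To handle the randomization cleanly I would first fix a representation of a randomized stopping rule by an auxiliary i.i.d.\ sequence $U_1,U_2,\dots$ of uniform variables independent of $\bX$, so that the event $\{\tau\le t\}$ is measurable with respect to $\mathcal{G}_t:=\sigma(X_1,\dots,X_t,U_1,\dots,U_t)$ and, by the Markov property of $\bX$, the conditional law of $X_{t+1}$ given $\mathcal{G}_t$ is $\sigma_t(X_t)$.

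\emph{From a stopping rule to measures.} Given $\tau\in\mathcal{T}$, define positive measures on $[0,1]$ by $\nu_t(B):=\Pro[\tau=t,\ X_t\in B]$ and $\mu_t(B):=\Pro[\tau\ge t,\ X_t\in B]$. Then $|\nu_t|=\Pro[\tau=t]$, so the objectives $\sum_t|\nu_t|w_t$ and $\sum_t\Pro[\tau=t]w_t$ coincide, and $\nu_t\le\mu_t$ trivially. Since $\tau\ge1$ always, $\mu_1$ is the law of $X_1$. Noting that $\mu_t-\nu_t$ is the sub‑probability law of $X_t$ on $\{\tau>t\}\in\mathcal{G}_t$, conditioning on $\mathcal{G}_t$ and invoking the Markov property gives, for every Borel $B$,
\begin{align*}
\mu_{t+1}(B)&=\Pro[\tau>t,\ X_{t+1}\in B]=\Ex\!\left[\mathbf{1}_{\{\tau>t\}}\,\sigma_t(X_t)(B)\right]\\
&=\int\sigma_t(x)(B)\,\mathrm{d}(\mu_t-\nu_t)(x)=\big(\sigma_t\circ(\mu_t-\nu_t)\big)(B),
\end{align*}
which is precisely the stated recursion. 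Finally $\overline{\nu_t}=\Ex[X_t\mid\tau=t]\ge l$ by the incentive‑compatibility constraint \eqref{Incentive Compatibility Constraint} (vacuously so when $|\nu_t|=0$), so $(\nu_t)$ is feasible.

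\emph{From measures to a stopping rule.} Conversely, given a feasible $(\nu_t)$, the inclusion $\nu_t\le\mu_t$ lets me take the Radon--Nikodym derivative $f_t:=\mathrm{d}\nu_t/\mathrm{d}\mu_t\in[0,1]$, defined $\mu_t$‑a.e.; set $\tau_t(x_1,\dots,x_t):=f_t(x_t)$ (extended by $0$ off $\supp\mu_t$), which is measurable and thus defines a randomized stopping rule. I would then show by induction on $t$ that under this rule the law of $X_t$ on $\{\tau\ge t\}$ equals $\mu_t$ and on $\{\tau=t\}$ equals $\nu_t$: the base case is the hypothesis $X_1\sim\mu_1$ together with $\Pro[\tau=1,X_1\in B]=\int_B f_1\,\mathrm{d}\mu_1=\nu_1(B)$, and the inductive step repeats the Markov‑property pushforward above to pass from the law $\mu_t-\nu_t$ of $X_t$ on $\{\tau>t\}$ to the law $\mu_{t+1}=\sigma_t\circ(\mu_t-\nu_t)$ of $X_{t+1}$ on $\{\tau\ge t+1\}$, then reads off $\nu_{t+1}$ as in the base case. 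Hence $\Pro[\tau=t]=|\nu_t|$, the objectives agree, and $\Ex[X_t\mid\tau=t]=\overline{\nu_t}\ge l$, so $\tau\in\mathcal{T}$. Combining the two directions, the set of attainable objective values is the same for both programs, hence so are their suprema.

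I expect the one genuinely delicate point to be the pushforward identity: one must be scrupulous that the stopping event $\{\tau>t\}$ lies in the $\sigma$‑algebra generated by $(X_1,\dots,X_t)$ and the randomization up to time $t$ — so that the kernel $\sigma_t$ truly governs $X_{t+1}$ conditionally on that event — and that this is unaffected by the fact that a general $\tau_t$ may depend on the whole history $(x_1,\dots,x_t)$ while the reconstructed rule depends on $x_t$ alone (the Markovian structure is exactly what makes the projection to a measure on $[0,1]$ lossless for the objective). The remaining checks — measurability of $f_t$, the $|\nu_t|=0$ edge case in \eqref{Incentive Compatibility Constraint}, and mass accounting $|\mu_{t+1}|=|\mu_t|-|\nu_t|$ — are routine.
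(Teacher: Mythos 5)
Your proposal is correct and follows essentially the same route as the paper's own proof: in both, the forward direction identifies $\nu_t$ and $\mu_t$ with the (unnormalized) laws of $X_t$ on $\{\tau=t\}$ and $\{\tau\ge t\}$ and verifies the recursion via the Markov property, while the converse direction builds the stopping probability from the Radon--Nikodym derivative $\mathrm{d}\nu_t/\mathrm{d}\mu_t$ and proceeds by the same induction. Your explicit auxiliary-uniform construction for the randomization and the remark on the $|\nu_t|=0$ edge case are sound elaborations rather than a different argument.
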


			In this formulation, we have a mass $\mu_t$ that evolves according to the probability kernels $\sigma_t$, and in each step, we decide which parts of this mass, as described by $\nu_t$, to eliminate from the process to induce an immediate adoption. The proof of the lemma, as well as the proofs of all other lemmas, are relegated to Appendix \ref{ap:sec2}.


			\subsection{Blackwell Order Preserving Kernels}
			A concept that will play a fundamental role in our analysis is Blackwell order preserving probability kernels. 
			
			Consider two probability measures $\mu,\nu\in\Delta([0,1])$ with the same mean
			$\overline{\mu}=\overline{\nu}$. Recall that $\nu$ dominates $\mu$ with respect to the Blackwell ordering, denoted as $\mu \preceq_B \nu$, if $\nu$ is a \emph{mean preserving spread} of $\mu$, or equivalently, if there exists a probability kernel $\sigma$ such that $\sigma\circ\mu=\nu.$ In terms of random variables one has $\mu \preceq_B \nu$ if and only if there exist random variables $X$ and $Y$, where $X \sim \mu$, $Y \sim \nu$ and $\Ex [Y\,|\, X] = X$. In functional form, the order $\mu \preceq_B \nu$ is equivalent to requiring that $\nu (f) \leq \mu(f)$ for every non-decreasing concave $f:\mathbb{R} \to \mathbb{R}$, where we denote $\rho(f) = \int_{\mathbb{R}} f(x) \mathrm{d}\rho (x)$ for a positive measure $\rho$ defined on $\mathbb{R}$.
			
			We extend the Blackwell order to (finite) positive measures on $[0,1]$. For two positive measures $\mu,\nu$ we write $\mu \preceq_B \nu$ if $|\mu|= |\nu|$ and $\frac{\mu}{|\mu|} \preceq_B \frac{\nu}{|\nu|}$. 
			
			We next present the central notion that we shall use throughout the paper.
			\begin{definition}
				A probability kernel $\sigma:[0,1]\to\Delta([0,1])$ is called \emph{Blackwell order preserving} kernel if for every two probability measures $\mu,\nu\in\Delta([0,1])$ such that $\overline{\mu}=\overline{\nu}$ and $\mu \preceq_B \nu$ it holds that $\sigma\circ\mu \preceq_B \sigma\circ\nu$.  A Markovian martingale is called \emph{Blackwell order preserving} if $\sigma_t$ is\footnote{In case where $\sigma_t$ is not uniquely defined we only require that there exists a version of $\sigma_t$ that is Blackwell preserving.} Blackwell preserving for every time $t=1,\ldots T-1.$  
			\end{definition}
			There is a simple interpretation of Blackwell order-preserving kernels. Consider $\mu,\nu\in\Delta([0,1])$ such that $\overline{\mu}=\overline{\nu}$. As above, the measures can be identified with a signal that generates the distribution of the posterior probability to the event $\{\omega=1
			\}$. Consider a decision maker with a certain utility $u:\Omega\times A\to\mathbb{R}$ that observes a signal and then takes an expected utility-maximizing action. The signal induced by $\nu$ is better for the decision maker than the signal induced by $\mu$ if their expected utility from observing the signal induced by $\nu$ is higher than their expected utility from observing the signal induced by $\mu$.
			
			According to the Blackwell order characterization, \citep{blackwell1953equivalent} it holds that $\mu \preceq_B \nu$ if and only if the signal induced by $\nu$ is better for \emph{any} decision maker than the signal induced by $\mu$.
			
			Assume now that the decision-maker receives information in two periods where the second period's information is generated by the kernel $\sigma$ based on the first period. The Blackwell order-preserving condition over $\sigma$ simply requires that if all decision-makers prefer certain information in the first period, this preference will be preserved in the second period. We note that a probability kernel preserves the Blackwell ordering between probability measures if and only if it preserves it between positive measures with equal mass.
			
			The following lemma shows that the Blackwell preserving property should be verified for binary-supported measures only. That is, the lemma says that in order to verify that a kernel is Blackwell order preserving it is (necessary and) sufficient to show that the push forward of a binary-supported distribution Blackwell dominates the push forward of its expectation. 
			
			\begin{restatable}{lemma}{seclem}\label{lemma:blacwell preservingK}
				A probability kernel $\sigma$ is Blackwell preserving if and only if for every binary supported $\mu\in\Delta([0,1])$ it holds that $$\sigma\circ\delta_{\overline{\mu}}\preceq \sigma\circ\mu.$$
				Similarly a martingale $(X_t)_{t=1,\ldots,T}$ is Blackwell preserving if and only if for every $t\leq T-1$ the kernel $\sigma_t$ satisfies the above condition on every binary supported measure $\mu$ such that both the support of $\mu$ and $\overline{\mu}$ is contained in the support of $X_t$. 
				
			\end{restatable}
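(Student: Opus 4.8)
The plan is to translate everything into concavity of ``test‑function averages.'' For a probability kernel $\sigma$ and a bounded measurable $g\colon[0,1]\to\mathbb{R}$, set $(T_\sigma g)(x):=\int_{[0,1]}g\,\mathrm{d}\sigma(x)$. Unwinding the definition of the pushforward gives the identity $(\sigma\circ\mu)(g)=\mu(T_\sigma g)$ for every positive measure $\mu$ on $[0,1]$, and since a kernel is mean preserving, $\overline{\sigma\circ\mu}=\overline{\mu}$. I will use the functional description of the Blackwell order recalled in the excerpt in the following form: for equal‑mass, equal‑mean positive measures $\rho,\rho'$ on $[0,1]$, one has $\rho\preceq_B\rho'$ iff $\rho'(g)\le\rho(g)$ for every concave $g$. (The restriction to \emph{nondecreasing} concave functions in the excerpt is harmless here: adding a sufficiently steep increasing affine function to a concave $g$ makes it nondecreasing, and, the measures sharing a mean, changes neither side of the inequality.)

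With this in hand, fix a concave $g$ and a binary $\mu=p\delta_a+(1-p)\delta_b$ with barycenter $m=pa+(1-p)b$. Then $\sigma\circ\delta_{\overline\mu}=\sigma(m)$ and $\sigma\circ\mu=p\,\sigma(a)+(1-p)\sigma(b)$ have equal mass and equal mean, so $\sigma\circ\delta_{\overline\mu}\preceq_B\sigma\circ\mu$ holds iff $(\sigma\circ\mu)(g)\le(\sigma\circ\delta_{\overline\mu})(g)$ for every concave $g$, i.e.\ iff $p\,(T_\sigma g)(a)+(1-p)(T_\sigma g)(b)\le(T_\sigma g)\big(pa+(1-p)b\big)$ for every concave $g$. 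Quantifying over $a,b\in[0,1]$ and $p\in[0,1]$, the binary condition on $\sigma$ is \emph{exactly} the statement that $T_\sigma g$ is concave on $[0,1]$ for every concave $g$. To match this with Blackwell preservation: if $T_\sigma g$ is concave for every concave $g$, then for $\mu\preceq_B\nu$ with $\overline\mu=\overline\nu$ and any concave $g$ we get $(\sigma\circ\nu)(g)=\nu(T_\sigma g)\le\mu(T_\sigma g)=(\sigma\circ\mu)(g)$, and as $\sigma\circ\mu,\sigma\circ\nu$ are equal‑mass and equal‑mean this is $\sigma\circ\mu\preceq_B\sigma\circ\nu$; conversely, applying Blackwell preservation to $\delta_m\preceq_B\mu$ for binary $\mu$ recovers the binary condition. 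This settles the first equivalence.

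For the martingale version, ``only if'' is immediate: choose for each $t$ a Blackwell‑preserving version of $\sigma_t$; by the first part it satisfies $\sigma_t\circ\delta_{\overline\mu}\preceq_B\sigma_t\circ\mu$ for every binary $\mu$, in particular whenever $\supp(\mu)\cup\{\overline\mu\}\subseteq\supp(X_t)$, and this restricted family of inequalities involves only the values of $\sigma_t$ on $\supp(X_t)$, where $\sigma_t$ is a.s.\ determined, hence is version‑independent. For ``if'', observe that only measures supported on $S_t:=\supp(X_t)$ are ever relevant, so it suffices to rerun the previous argument with ``concave'' replaced by ``concave on $S_t$'': the restricted binary condition says precisely that $h_g:=(T_{\sigma_t}g)\big|_{S_t}$ obeys $p\,h_g(a)+(1-p)h_g(b)\le h_g\big(pa+(1-p)b\big)$ whenever $a,b$ and $pa+(1-p)b$ all lie in $S_t$. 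One then extends $h_g$ to a function $\widehat h_g$ that is genuinely concave on the convex hull of $S_t$ and agrees with $h_g$ on $S_t$ — concretely, by linearly interpolating the kernel across the gaps of $S_t$, which turns $T_{\sigma_t}g$ into the piecewise‑linear interpolant of $h_g$, whose concavity is exactly the consecutive‑triple instances of the binary condition — so that for $\mu\preceq_B\nu$ supported on $S_t$, $\nu(T_{\sigma_t}g)=\nu(\widehat h_g)\le\mu(\widehat h_g)=\mu(T_{\sigma_t}g)$, i.e.\ $\sigma_t\circ\mu\preceq_B\sigma_t\circ\nu$.

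I expect the delicate step to be this last extension of the kernel: one must exhibit, off $S_t$, an honest kernel (measurable, mean preserving) whose averages $T_{\sigma_t}g$ equal the chosen concave extensions $\widehat h_g$ \emph{simultaneously for all concave $g$}. Linear interpolation disposes of the interior gaps of $S_t$, but the two extreme arcs $[0,\min S_t]$ and $[\max S_t,1]$ require care, since a naive prescription there (say $\sigma_t(x)=\delta_x$) can create an upward jump of $T_{\sigma_t}g$ at $\min S_t$ or $\max S_t$ whenever $\sigma_t$ is already non‑degenerate at that point, which would destroy concavity; this is exactly why one should work with the notion localized to measures supported on $S_t$. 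Two further, routine, points to dispatch along the way: the passage from two‑atom Jensen inequalities to full concavity is free, since the binary condition already carries arbitrary weights $p$; and when $S_t$ is not discrete, passing from ``concavity on consecutive triples'' to concavity of the interpolant/extension needs a brief monotone‑limit (or density) argument.
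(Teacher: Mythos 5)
Your argument for the first equivalence is correct, and it takes a genuinely different route from the paper. The paper proves the sufficiency of the binary condition by writing the witnessing kernel $\rho$ (with $\rho\circ\nu=\mu$) as a Choquet mixture of binary-support kernels and then integrating the pointwise Blackwell inequalities $\sigma\circ\delta_x\preceq_B\sigma\circ\rho(x)$. You instead dualize: $(\sigma\circ\mu)(g)=\mu(T_\sigma g)$, and the binary condition is, word for word, the statement that $T_\sigma g$ is concave for every concave $g$; Blackwell preservation then drops out by applying the functional characterization once more. Your version avoids Choquet's theorem and the implicit lemma that Blackwell dominance is stable under mixtures, and it makes the ``binary test'' conceptually transparent — it \emph{is} Jensen's inequality for the averaged test functions. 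Both proofs are correct; yours is arguably cleaner and more self-contained. The remark that ``concave'' and ``non-decreasing concave'' coincide as test classes once means are matched is right and worth spelling out, as the paper uses the latter phrasing.

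On the martingale half (which the paper's written proof does not actually address), your ``only if'' direction is fine. For the ``if'' direction you set up the right localized statement, but I think you have talked yourself into a harder problem than necessary. You worry about exhibiting an honest \emph{kernel} off $\supp(X_t)$ whose averages realize your chosen concave extensions $\widehat h_g$ simultaneously for every concave $g$. That step is not needed for the conclusion you want: to show $\sigma_t\circ\mu\preceq_B\sigma_t\circ\nu$ for $\mu\preceq_B\nu$ both supported on $S_t=\supp(X_t)$, it suffices to extend each \emph{function} $h_g=(T_{\sigma_t}g)|_{S_t}$ to some concave $\widehat h_g$ on $[0,1]$ (you never evaluate $T_{\sigma_t}g$ off $S_t$, because $\mu,\nu$ charge only $S_t$, so $\nu(T_{\sigma_t}g)=\nu(\widehat h_g)\le\mu(\widehat h_g)=\mu(T_{\sigma_t}g)$). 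Such a concave extension exists whenever the binary condition holds on $S_t$: for discrete $S_t$ it is the piecewise-linear interpolant, whose concavity is exactly the consecutive-triple instances of the binary condition, and one continues affinely past the extreme points of $S_t$; your worry about an upward jump at $\min S_t$ or $\max S_t$ disappears because we are free to choose the boundary slopes of $\widehat h_g$ — we are not tied to $\sigma_t(x)=\delta_x$ or any other kernel prescription off $S_t$. The only place your ``honest kernel off $S_t$'' would genuinely be needed is to match the paper's \emph{literal} definition (``there exists a version of $\sigma_t$ that is Blackwell preserving'' globally on $[0,1]$). You are right that this literal reading adds a technical wrinkle, and it is fair to flag it, but for the operational content — which is all the paper uses downstream, e.g.\ in the proof of Theorem~1 — the localized function extension closes the argument without any kernel extension. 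Finally, your remark that a monotone-limit or density argument is needed to pass from consecutive triples to concavity of the extension when $S_t$ is not discrete is accurate and easily supplied (e.g., take the concave envelope of the hypograph of $h_g$ over $\mathrm{conv}(S_t)$ and verify it agrees with $h_g$ on $S_t$ using the binary condition).
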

			\subsection{Examples of Blackwell preserving martingales}\label{sec:BP-martingales}
			\paragraph{Conditionally Independent Signals}
			As a first example of a Blackwell preserving martingale consider a kernel that is obtained by receiving a conditionally independent signal as a function of the prior on $\{0,1\}$. More precisely, let $S$ be some measurable signal space and let $G:\{0,1\}\to\Delta(S)$ be a probability kernel. 
			We note that a prior $y\in[0,1]$ over the set $\Omega=\{0,1\}$ together with $G$ generate a probability distribution $\Pro_y\in\Delta(\Omega\times S)$. Let $p_y(s)=\Pro_y(\omega=1|s)$ be the conditional probability of $\{\omega=1\}$ given $s$. Denote $\sigma_y$ the posterior distribution of $p_y(s)$. That is, for every Borel subset $A\subseteq[0,1]$, $\sigma_y(A)=\Pro_y(p_y(s)\in A)$. Consider the probability kernel $\s: [0,1] \to \Delta(\{0,1\})$ defined by $\s: y \mapsto \s_y$. By the law of iterated expectation, this is indeed a probability kernel. 
			\begin{restatable}{lemma}{thlem}\label{lemma:conditional ind} Consider a martingale $\bX=(X_t)_{t=1,\ldots,T}$ that is generated by the kernels $(\sigma_t)_{t=1,\ldots,T-1}$ such that for every $t=1,\ldots,T-1$ the kernel $\sigma_t$ represent conditionally independent signal, then the martingale is a Blackwell order preserving martingale.   
			\end{restatable}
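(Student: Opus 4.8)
The plan is to reduce the statement, via Lemma~\ref{lemma:blacwell preservingK}, to a single kernel acting on binary measures, and then to realize the required Blackwell comparison by an explicit coupling together with the tower property. By the definition of a Blackwell order preserving martingale it suffices to show that each kernel $\sigma_t$ is Blackwell order preserving; since by hypothesis $\sigma_t$ is the posterior-updating kernel $\sigma\colon y\mapsto\sigma_y$ associated to some conditionally independent signal $G\colon\{0,1\}\to\Delta(S)$, it is enough to treat one such kernel $\sigma$. By Lemma~\ref{lemma:blacwell preservingK} we only need to verify that for every binary-supported $\mu=\alpha\delta_a+(1-\alpha)\delta_b\in\Delta([0,1])$ with mean $m:=\overline{\mu}=\alpha a+(1-\alpha)b$, one has $\sigma\circ\delta_m=\sigma_m\preceq_B \alpha\sigma_a+(1-\alpha)\sigma_b=\sigma\circ\mu$ (both measures have mean $m$, since $\sigma$ is mean-preserving).

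Next I would construct a single probability space carrying a triple $(Z,\omega,s)$ defined as follows: $Z\in\{a,b\}$ with $\Pro(Z=a)=\alpha$; conditionally on $Z=z$, the state $\omega$ is Bernoulli with mean $z$; and conditionally on $\omega$, the signal $s\sim G(\omega)$, with $s$ independent of $Z$ given $\omega$. Then the marginal law of $\omega$ is Bernoulli with mean $m$ and $s\mid\omega\sim G(\omega)$, so the joint law of $(\omega,s)$ coincides with the one induced by prior $m$; hence $P_g:=\Pro(\omega=1\mid s)=p_m(s)$ and therefore $P_g\sim\sigma_m$. On the other hand, conditionally on $Z=z$ the pair $(\omega,s)$ has the law induced by prior $z$, so $P_\ell:=\Pro(\omega=1\mid s,Z)=p_Z(s)$; computing the law of $s$ given $Z=z$ (this is where conditional independence of $s$ and $Z$ given $\omega$ is used) one gets the $\Pro_z$-marginal on $S$, and consequently $P_\ell\sim\alpha\sigma_a+(1-\alpha)\sigma_b=\sigma\circ\mu$. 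Finally, by the tower property $\Ex[P_\ell\mid s]=\Ex[\Pro(\omega=1\mid s,Z)\mid s]=\Pro(\omega=1\mid s)=P_g$, and since $P_g$ is $\sigma(s)$-measurable this gives $\Ex[P_\ell\mid P_g]=P_g$. By the random-variable characterization of the Blackwell order recalled in the text, $\sigma_m\preceq_B\sigma\circ\mu$, completing the reduction and hence the proof.

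The routine-but-delicate point is the identification of the pushforward laws in the coupling: one must check carefully that $P_g$ really equals $p_m(s)$ and that the law of $P_\ell$ really is the mixture $\alpha\sigma_a+(1-\alpha)\sigma_b$ (so that it is exactly $\sigma\circ\delta_m$ and $\sigma\circ\mu$ being coupled), and both of these rely on the conditional independence structure of the signal $G$. Everything else—the reduction to binary measures through Lemma~\ref{lemma:blacwell preservingK}, and the tower-property identity—is immediate once the probability space is set up. An alternative, more computational route would verify $\nu(f)\le\mu(f)$ for every nondecreasing concave $f$ directly; I prefer the conditional-expectation argument above because it is shorter and keeps the informational meaning (``seeing $Z$ in addition to $s$ can only help'') transparent.
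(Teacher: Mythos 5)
Your proof is correct and follows essentially the same strategy as the paper's: both rest on realizing the comparison on a common probability space in which the signal $s$ is conditionally independent of the extra information (your $Z$, the paper's $m$) given the state $\omega$, identifying the two posteriors $\Pro(\omega=1\mid s)$ and $\Pro(\omega=1\mid s,Z)$ with draws from $\sigma\circ\delta_{\overline\mu}$ and $\sigma\circ\mu$ respectively, and invoking the tower property. The only organizational difference is that you route the general-measure reduction through Lemma~\ref{lemma:blacwell preservingK}, whereas the paper first proves the Dirac-versus-general-spread case via the same coupling and then handles general $\mu\preceq_B\nu$ by a direct disintegration argument; both are valid and lead to the same conclusion.
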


			
			\paragraph{Random Walk on a Grid}
			Consider  a discrete set $\Gamma=\{z_i\}_{i\in Z}\subseteq[0,1]$ such that $Z=\mathbb{Z}\cap [a,b]$ (where $a$ and $b$ might be finite or equal $-\infty$ and $+ \infty$ respectively) $z_i>z_j$ for every $i>j \in Z$. Let $\sigma$ be a kernel that represents a random walk on $\Gamma$. That is, $\sigma(z_i)=\delta_{z_i}$ if $i\in \{a,b\}$ and $\sigma(z_i)=\frac{z_{i+1}-z_i}{z_{i+1}-z_{i-1}}\delta_{z_{i-1}}+\frac{z_i-z_{i-1}}{z_{i+1}-z_{i-1}}\delta_{z_{i+1}},$ otherwise. 
			A martingale $\bX=(X_t)_{t=0,1,...,T}$ is a \emph{random walk} if $X_0=z_i$ with probability $1$ for some $z_i\in \Gamma$ and $\sigma_t=\sigma$ for every $t=1,\ldots,T-1.$ 
			Random walk on the grid plays a fundamental role in our analysis. It corresponds to the discrete version of the Brownian motion kernels considered in \citep{HO19,OSZ20,BRV21,M22}.

			\begin{restatable}{lemma}{folem}
				A martingale that is induced by a random walk on a grid is Blackwell order-preserving.    
			\end{restatable}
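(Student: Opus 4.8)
The plan is to reduce, via Lemma~\ref{lemma:blacwell preservingK}, to a one-dimensional statement about binary measures, and then to extract everything from the single observation that the random-walk operator $\sigma$ is monotone and reproduces grid-convexity. Fix the grid $\Gamma=\{z_i\}_{i\in Z}$ and let $\sigma$ be the associated random-walk kernel. By Lemma~\ref{lemma:blacwell preservingK} it suffices to take an arbitrary binary measure $\mu=p\,\delta_{z_i}+(1-p)\,\delta_{z_j}$ with $p\in(0,1)$, $i<j$, and $\overline\mu=z_k\in\Gamma$ (which forces $i<k<j$ and $p=\tfrac{z_j-z_k}{z_j-z_i}$), and to verify $\sigma\circ\delta_{z_k}\preceq_B\sigma\circ\mu$. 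Both sides are probability measures with the same mean $z_k$, so $\preceq_B$ is just the convex order; writing $(\sigma\psi)(z):=\int\psi\,\mathrm{d}\sigma(z)$ and using $\int\phi\,\mathrm{d}(\sigma\circ\rho)=\int(\sigma\phi)\,\mathrm{d}\rho$, it is enough to check, for every convex $\phi$ (e.g.\ every $\phi=(\cdot-a)_+$, $a\in\mathbb{R}$), that
\begin{equation*}
(\sigma\phi)(z_k)\ \le\ p\,(\sigma\phi)(z_i)+(1-p)\,(\sigma\phi)(z_j).
\end{equation*}

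The key lemma I would isolate is: \emph{if $\phi$ is convex then $g:=\sigma\phi$ is grid-convex}, i.e.\ for every interior node $z_m$ one has $g(z_m)\le\lambda_m g(z_{m-1})+(1-\lambda_m)g(z_{m+1})$ with $\lambda_m=\tfrac{z_{m+1}-z_m}{z_{m+1}-z_{m-1}}$. Its proof is the short chain $\phi\le\sigma\phi\le\sigma^2\phi$. Indeed, at an interior node Jensen's inequality applied to $\sigma(z_m)=\lambda_m\delta_{z_{m-1}}+(1-\lambda_m)\delta_{z_{m+1}}$ gives $\phi(z_m)\le(\sigma\phi)(z_m)$, and at a boundary node $\sigma(z_m)=\delta_{z_m}$ gives equality; hence $\phi\le\sigma\phi$ pointwise on $\Gamma$. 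Applying the monotone operator $\sigma$ (integration against the positive measures $\sigma(z)$, which are supported on $\Gamma$) yields $\sigma\phi\le\sigma^2\phi$. Finally, for interior $z_m$ one has $(\sigma^2\phi)(z_m)=\lambda_m(\sigma\phi)(z_{m-1})+(1-\lambda_m)(\sigma\phi)(z_{m+1})$, so $\sigma\phi\le\sigma^2\phi$ is precisely grid-convexity of $g$.

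To finish, grid-convexity of $g$ means the slopes $\tfrac{g(z_{m+1})-g(z_m)}{z_{m+1}-z_m}$ are non-decreasing in $m$, so $g$ lies below every chord joining two of its grid values; in particular $g(z_k)\le\tfrac{z_j-z_k}{z_j-z_i}\,g(z_i)+\tfrac{z_k-z_i}{z_j-z_i}\,g(z_j)=p\,g(z_i)+(1-p)\,g(z_j)$, which is the desired inequality. Boundary indices $i,j\in\{a,b\}$ need no separate treatment, since the argument above already covers boundary nodes, and $z_i=z_j$ is excluded once $p\in(0,1)$. I expect the only genuine content — and the step most likely to need care in the write-up — is the equivalence ``$\sigma\phi\le\sigma^2\phi$ $\iff$ $\sigma\phi$ grid-convex'' together with stating ``grid-convex $\Rightarrow$ below every chord'' correctly for non-uniform grids; everything else is Jensen plus monotonicity of $\sigma$.
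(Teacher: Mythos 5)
Your proof is correct, and it takes a genuinely different (dual) route to the same conclusion. The paper works directly on the primal side with measures: it shows $\sigma\circ\delta_{z_i}\preceq_B\nu$, where $\nu$ is the binary spread on $\{z_j,z_l\}$ with mean $z_i$ (because each of $\delta_{z_{i\pm1}}$ is dominated by a binary measure on the same two support points, and these combine by linearity), and then chains $\nu\preceq_B\sigma\circ\nu$ via transitivity. You instead pass to functions and isolate the lemma ``$\sigma$ maps convex $\phi$ to grid-convex $\sigma\phi$,'' proved via $\phi\le\sigma\phi\le\sigma^2\phi$. The two arguments are dual to one another: your ``$\phi\le\sigma\phi$'' step is exactly the paper's ``$\nu\preceq_B\sigma\circ\nu$,'' and your ``grid-convex $\Rightarrow$ below every chord'' step is exactly the paper's ``$\sigma\circ\delta_{z_i}\preceq_B\nu$.'' What your route buys is a more modular and reusable key lemma (grid-convexity of $\sigma\phi$) and the need to check only interior nodes once, rather than handling the two spread points $z_{i\pm1}$ separately; what the paper's route buys is staying entirely in the measure picture already used throughout the paper, avoiding the adjoint operator on functions. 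One small point worth making explicit in a final write-up: when you invoke monotonicity of $\sigma$ on functions, you only use $\phi\le\sigma\phi$ \emph{on the grid} (the kernels $\sigma(z)$ are supported on $\Gamma$), which is precisely what Jensen gives you; you already say ``supported on $\Gamma$,'' but it is the load-bearing observation and deserves emphasis.
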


			
			\subsection{Classes of policies} 
			In this section we provide two classes of policies which will be discussed throughout the paper.

			\subsubsection{The greedy policy}\label{sec:greedy}
				
				Let $\tau$ be the stopping rule defined in terms of the function $\tau_1$ that satisfies $\tau_1(x_1)=1$ if
				$x_1\in (y,1]$, $\tau_1(x_1)=0$ if $x_1\in [0,y)$, where $y$ is chosen to satisfy $\Ex[X_1|\tau=1]=l$.
				
				We note that the point $y$ and the probability $\tau_1(y)$ are almost surely uniquely determined. 
				
				The above stopping rule has the property that it maximizes the probability  $\Pro(\tau=1)$ across all stopping rules $\tau$ for which $\Ex [X_1|\tau=1]\geq l$. In particular, it solves the optimization problem for the weights $w_1=1$, $w_t = 0$, $t \geq 2$. For this reason, we refer to the function $\tau_1$ as the \emph{greedy policy} with respect to $\mu_1$.
				
				In the equivalent reformulation, the measure $\nu_1 = \Pro [\tau=1] \, \Pro[\ \cdot \ | \tau=1]$, which is eliminated from $\mu_1$ by the above stopping rule is referred to as the \emph{greedy measure}. A well-known fact is that the greedy measure is obtained by taking the top $q$-quantile of the probability measure $\mu_1$ for which the points above the quantile have a conditional expectation $l$. More precisely, let $F$ be the CDF of $\mu_1$ and let $F^{-1}(x)=\inf\{y|F(y)\geq x\}$. Then, for a $[0,1]$-uniform random variable $U$ it holds that the random variable $X=F^{-1}(U)\sim \mu_1$. The corresponding 
				$q$-quantile is the unique value $p$ such that $\Ex [X|U\geq p]=l$. Using such an approach one may compute $y$ and $\tau_1(y)$ that define the greedy policy in terms of $p$ and $F$. Namely, if $\mu_1$ has no atoms, one defines $y := F^{-1}(p)$ and $\tau_1(y)$ arbitrarily. In case $F^{-1}(p)$ is an atom of $\mu_1$, one defines $y := F^{-1}(p)$, and the probability $\tau_1 (y)$ of stopping on $y$ may be verified to be given by the expression:
				\begin{align*}
					\tau_1 (y) = \frac{F(F^{-1}(p)) - p}{\mu_1(\{F^{-1}(p)\})}.
				\end{align*}

				\begin{definition}
					For a general martingale $\bX=(X_t)_{t=1,\ldots,T}$ the \emph{greedy policy} $\tau$ is defined recursively where $\tau_1$ is the greedy policy with respect to $X_1\sim\mu_1$ and for every $t\geq 2$, $\tau_t$ is taken to be the greedy policy with respect to the measure $\mu_t=\Pro_{X_t}[\ \cdot \ |\tau\geq t]\Pro[\tau\geq t]$.
				\end{definition}
				
				\subsubsection{Interval policies}
				We next define a class of policies that contains the greedy policy as a special case.
				\begin{definition}
					Let $\bX=(X_t)_{t=1,\ldots,T}$ be a martingale. A stopping rule $\tau$ is called an  \emph{interval stopping rule} if there exists a sequence of intervals $\{[\underline{y}_t,\overline{y}_t]\}_{t=1,\ldots,T}$ such that for every $t=1,\ldots,T$ it holds that 
					\begin{itemize}
						\item[(i)] $\tau_t(x_1,\ldots,x_t)=1$ if $x_t\in(\underline{y}_t,\overline{y}_t)$,
						\item[(ii)] $\tau_t(x_1,\ldots,x_t)=0$ if $x_t\notin [\underline{y}_t,\overline{y}_t]$,
						\item[(iii)] $\mathbb{E}[X_t|\tau=t]=l$.
					\end{itemize}
				\end{definition}
				
				In terms of the mass elimination reformulation,  choosing the eliminated sequence of masses $\{\nu_t\}_{t=1,\ldots, T}$ such that $\nu_t((\underline{y}_t,\overline{y}_t))=\mu_t((\underline{y}_t,\overline{y}_t))$,
				$\nu_t([0,1]\setminus[\underline{y}_t,\overline{y}_t])=0$, and $\overline{\nu}_t=l$ for some sequence of intervals $\{[\underline{y}_t,\overline{y}_t]\}_{t=1,\ldots,T}$ induces an almost sure unique interval stopping rule $\tau$.

				Essentially, an interval stopping rule $\tau$ can be characterized by the interval
				$[\underline{y}_t,\overline{y}_t]$ at which it stops at time $t$ and the probabilities $\tau_t(\{\underline{y}_t\})=\underline{\gamma}$, $\tau_t(\{\overline{y}_t\})=\overline{\gamma}$ that play a role only if $\mu_t$ has atoms on either $\underline{y}_t \text{ or }\overline{y}_t$. Generalizing the top quantile approach discussed for the greedy measure, we note that at each time $t$ an interval stopping rule may be identified with the two top quantiles $\{\underline{q}_t,\overline{q}_t\}$ of $\mu_t$ it induces. Formally, if we let $F_{\mu_t}$ be the CDF of $\mu_t$ then $\underline{y}_t = F^{-1}_{\mu_t} (\underline{p}_t) $ and $\overline{y}_t = F^{-1}_{\mu_t} (\overline{p}_t)$, where $\Ex_{X \sim \mu_t} [X|U \in (\underline{p}_t, \overline{p}_t)]=l$, $\underline{q}_t = \Ex_{X \sim \mu_t} [X|U \geq \underline{p}_t]$, $\overline{q}_t = \Ex_{X \sim \mu_t} [X|U \geq \overline{p}_t]$, 
				$$\frac{F_{\mu_t}(F^{-1}_{\mu_t}(\underline{p}_t)) -\underline{p}_t }{\mu_t(\{F^{-1}_{\mu_t}(\underline{p}_t) \})} = \underline{\gamma},$$ and $$\frac{F_{\mu_t}(F^{-1}_{\mu_t}(\overline{p}_t)) -\overline{p}_t }{\mu_t(\{F^{-1}_{\mu_t}(\overline{p}_t) \})} = \overline{\gamma}.$$

				
				\subsubsection{Properties of interval policies}\label{Properties of interval policies}
				
				\paragraph{Existence and uniqueness}
				In the next lemma, we show that an interval stopping is uniquely determined by the mass (i.e., the probability) of stopping. 
				
				\begin{restatable}{lemma}{lemexist}\label{lem:exist}
					We let $\alpha$ be the persuasion probability of the greedy policy with respect to $\mu_1$. For any $0<\beta\leq\alpha$ there exists an almost surely unique interval stopping rule $\tau$ for which $\Pro(\tau=1)=\beta$.      
				\end{restatable}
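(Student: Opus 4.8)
The plan is to collapse this into a one–dimensional monotone equation using the top–quantile description of interval policies given just above the statement, solve it by the intermediate value theorem, and then read off uniqueness from monotonicity. Throughout we work at $t=1$ only, and we may assume $\alpha>0$, since otherwise the range $0<\beta\le\alpha$ is empty.

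First I would set up the reduction. Let $F=F_{\mu_1}$ be the CDF of $\mu_1$ and $F^{-1}(u)=\inf\{y:F(y)\ge u\}$, so that $F^{-1}(U)\sim\mu_1$ for $U$ uniform on $[0,1]$. By the quantile characterization of interval rules, a time-$1$ interval stopping rule with $\Pro(\tau=1)=\beta$ is the same data as a quantile window $[\underline p,\underline p+\beta]\subseteq[0,1]$ (the boundary probabilities $\underline\gamma,\overline\gamma$ are then pinned down by the atom formulas already displayed), and its conditional mean is $g(\underline p):=\frac1\beta\int_{\underline p}^{\underline p+\beta}F^{-1}(u)\,\mathrm du$, so constraint (iii) reads exactly $g(\underline p)=l$. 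Hence it suffices to show that $g(\underline p)=l$ has a solution on $[0,1-\beta]$ and that every solution induces the same eliminated measure $\nu_1$ (equivalently the same $\tau_1$ up to $\mu_1$-null sets, which is the meaning of ``almost surely unique''; $\tau_1$ off $\supp(\mu_1)$ is irrelevant).

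For existence: $g$ is Lipschitz (because $F^{-1}\in[0,1]$), hence continuous, and for a.e.\ $\underline p$ one has $g'(\underline p)=\frac1\beta\bigl(F^{-1}(\underline p+\beta)-F^{-1}(\underline p)\bigr)\ge 0$ since $F^{-1}$ is nondecreasing, so $g$ is nondecreasing. At the left endpoint, $g(0)$ is the mean of the bottom $\beta$-quantile of $\mu_1$, which is at most the overall mean $\overline{\mu_1}=\pi<l$; note $\alpha<1$ (again because $\overline{\mu_1}=\pi<l$), so the domain $[0,1-\beta]$ is nondegenerate. At the right endpoint, $g(1-\beta)$ is the mean of the top $\beta$-quantile; since $\beta\le\alpha$ and the top $\alpha$-quantile has conditional mean exactly $l$ by the defining property of the greedy policy, its upper portion, the top $\beta$-quantile, has mean $\ge l$. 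The intermediate value theorem then gives $\underline p^\ast\in(0,1-\beta]$ with $g(\underline p^\ast)=l$. This is precisely the step where the hypothesis $\beta\le\alpha$ is used.

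For uniqueness: by continuity and monotonicity the solution set is a closed interval $[\underline p_-,\underline p_+]$. If it is a single point we are done. Otherwise $g\equiv l$ on it, so $g'=0$ a.e.\ there, i.e.\ $F^{-1}(\underline p)=F^{-1}(\underline p+\beta)$ for a.e.\ $\underline p\in[\underline p_-,\underline p_+]$; since $F^{-1}$ is monotone, chaining this identity across overlapping sub-windows of length $\le\beta$ forces $F^{-1}$ to be constant, equal to some $c$, on the whole quantile range $[\underline p_-,\underline p_++\beta]$ swept by the candidate windows. Thus $\mu_1$ has an atom of mass $\ge\beta$ at $c$, every candidate window lies inside that atom's quantile block, and so the eliminated measure is $\nu_1=\beta\,\delta_c$ independently of $\underline p$; the barycenter constraint forces $c=l$. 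In every case $\nu_1$, and hence $\tau_1$ on $\supp(\mu_1)$, is uniquely determined. I expect this uniqueness step — converting the a.e.\ shift-identity for $F^{-1}$ into genuine equality of the competing measures, with the attendant care about atoms of $\mu_1$ and the endpoint/left-continuity conventions of $F^{-1}$ — to be the only real obstacle; existence is a routine IVT once the two boundary inequalities are recorded.
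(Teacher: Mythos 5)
Your proof is correct, and it takes essentially the same route as the paper's: reduce the problem to a one-dimensional quantile parametrization and invoke the intermediate value theorem. The paper parametrizes by the lower quantile $p$, solves the constraint $\Ex[X\mid U\in(p,x)]=l$ for $x(p)$, and runs IVT on the width function $p\mapsto x(p)-p$ to hit the desired mass $\beta$; you instead fix the width at $\beta$ and run IVT on the conditional-mean function $g(\underline p)$. These are equivalent ways of slicing the same two-parameter constraint surface.

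There is one place where your write-up is actually more careful than the paper's. The paper asserts that $f(p,x)=\Ex[X\mid U\in(p,x)]$ is \emph{strictly} increasing in both arguments, and from this concludes that $x(p)$ is uniquely defined and $x(p)-p$ is strictly decreasing. Strict monotonicity fails when $F^{-1}$ is flat on the relevant quantile range, i.e., when $\mu_1$ has a sufficiently large atom at $l$; in that case $x(p)$ is not unique and the paper's monotonicity argument does not literally go through. Your proof addresses exactly this degeneracy: when the solution set of $g(\underline p)=l$ is a nondegenerate interval, you show $F^{-1}$ must be constant (and equal to $l$) across all candidate windows, so every solution produces the same eliminated measure $\nu_1=\beta\,\delta_l$ and the stopping rule is still a.s.\ unique. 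The one point to be pedantic about in your chaining step is that the a.e.\ identity $F^{-1}(\underline p)=F^{-1}(\underline p+\beta)$ together with monotonicity of $F^{-1}$ forces constancy on each $[\underline p,\underline p+\beta]$, and overlapping such windows (using density of the a.e.\ set) covers the full sweep; you have that right. So: same approach, cleaner parametrization, and an explicit resolution of the atom case that the paper glosses over.
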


				\paragraph{Minimality}
				We recall that the Blackwell ordering $\preceq_B$ is extendable to positive finite measures as follows. For any two positive measures $\tilde\nu,\nu'$ on $[0,1]$ with $\overline{\tilde \nu} = \overline{\nu'}$,  we write $\tilde\nu \preceq_B \nu'$ if $  \frac{1}{|\tilde\nu|}\tilde\nu \preceq_B \frac{1}{|\nu'|}\nu'$. That is, $\tilde\nu \preceq_B \nu'$ if the probability measure $\frac{1}{|\nu'|}\nu'$ forms a mean-preserving spread of the probability measure $\frac{1}{|\tilde\nu|}\tilde\nu$. 
				
				The Blackwell ordering relation forms a partial order relation in the class of all stopping rules $\nu \leq \mu_1$ of a given mass. In the following lemma we establish that the interval stopping rules are the minimal elements in this partial order. 
				
				
				\begin{restatable}{lemma}{lemods}
					\label{lem:obs}
					Let $\tilde\nu\leq \mu_1$ be the measure that corresponds to the interval stopping rule that stops on $[\underline y, \overline y]$ with $\overline{\tilde \nu}=l$. If $\nu'\leq \mu_1$, $\overline{\nu'}=l$,  is another stopping rule with $|\nu'|=|\tilde\nu|$ then $\tilde\nu\preceq_B \nu'$.
				\end{restatable}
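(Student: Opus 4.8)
The plan is to establish the Blackwell inequality through its functional characterization. Since $|\tilde\nu|=|\nu'|$ and $\overline{\tilde\nu}=\overline{\nu'}=l$, by the definition of the Blackwell order on positive measures it suffices to show $\nu'(f)\le\tilde\nu(f)$ for every non-decreasing concave $f$; in fact I would prove this for \emph{every} concave $f:\mathbb{R}\to\mathbb{R}$, which is no harder since the two measures have equal mean. Set $\rho:=\tilde\nu-\nu'$. Because $\rho([0,1])=|\tilde\nu|-|\nu'|=0$ and $\int x\,\mathrm{d}\rho(x)=l|\tilde\nu|-l|\nu'|=0$, every affine function integrates to $0$ against $\rho$, so it is enough to prove $\int(f-\ell)\,\mathrm{d}\rho\ge 0$ for one well-chosen affine $\ell$.

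The key structural observation concerns the sign of the density $h:=\mathrm{d}\rho/\mathrm{d}\mu_1$, which exists and is bounded since both $\tilde\nu$ and $\nu'$ are dominated by $\mu_1$ (so each has a Radon--Nikodym derivative in $[0,1]$). By the description of interval measures, $\tilde\nu$ coincides with $\mu_1$ on $(\underline y,\overline y)$, vanishes on $[0,1]\setminus[\underline y,\overline y]$, and equals a fraction of $\mu_1$ at the two endpoints; hence $h\ge 0$ on $(\underline y,\overline y)$ and $h\le 0$ on $[0,1]\setminus[\underline y,\overline y]$, with no sign control on $\{\underline y,\overline y\}$. I would then take $\ell$ to be the secant line of $f$ through $(\underline y,f(\underline y))$ and $(\overline y,f(\overline y))$ (a supporting line of $f$ at that point if $\underline y=\overline y$, which then forces $\underline y=l$). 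Concavity of $f$ gives $f-\ell\ge 0$ on $[\underline y,\overline y]$ and $f-\ell\le 0$ off $[\underline y,\overline y]$, while $f-\ell$ vanishes exactly at $\underline y$ and $\overline y$. Matching this sign pattern against that of $h$ region by region yields $(f-\ell)\,h\ge 0$ $\mu_1$-almost everywhere — the endpoint atoms are harmless precisely because $f-\ell$ is $0$ there — and therefore $\int(f-\ell)\,\mathrm{d}\rho=\int(f-\ell)\,h\,\mathrm{d}\mu_1\ge 0$, which is exactly $\nu'(f)\le\tilde\nu(f)$.

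I do not expect a serious obstacle: the whole argument rests on the one clean idea of pairing the $-,+,-$ sign pattern of $h$ (forced by the interval geometry of $\tilde\nu$) with the $-,+,-$ sign pattern of $f-\ell$ (forced by concavity plus interpolation at the endpoints), so the "hard part" is really just identifying this right affine function $\ell$. The remaining points are routine bookkeeping: checking absolute continuity of $\nu'$ with respect to $\mu_1$ so that $h$ is well defined and bounded; handling the degenerate case $\underline y=\overline y$ via a supporting line; and treating the boundary cases $\underline y=0$ or $\overline y=1$, where one of the two "outside" pieces is empty. None of these affects the argument, and notably monotonicity of $f$ is never used, nor is Lemma~\ref{lem:exist} needed here (it only guarantees that such a $\tilde\nu$ exists in the first place).
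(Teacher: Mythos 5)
Your proof is correct, and it takes a genuinely different route from the paper's. The paper works with the cumulative distribution functions $\tilde F$ and $F'$ of the normalized measures and establishes $\int_0^x \tilde F(y)\,dy \le \int_0^x F'(y)\,dy$ for all $x$ by contradiction: assuming the inequality fails at some $x$, it shows $x$ must lie strictly inside $(\underline y,\overline y)$, uses the fact that $\tilde\nu$ saturates $\mu_1$ on that interval to propagate $\tilde F\ge F'$ forward to $\overline y$, and then observes the gap cannot close by $x=1$ because the means are equal. You instead work directly with the Radon--Nikodym density $h$ of $\tilde\nu-\nu'$ with respect to $\mu_1$ and the secant (or supporting) affine function $\ell$ of $f$ at $\underline y$ and $\overline y$. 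The interval structure gives $h\ge 0$ on $(\underline y,\overline y)$ and $h\le 0$ outside $[\underline y,\overline y]$, while concavity gives $f-\ell\ge 0$ on the interval and $f-\ell\le 0$ outside, with $f-\ell=0$ at the (uncontrolled) endpoints; so $(f-\ell)h\ge 0$ $\mu_1$-a.e., and since $\ell$ integrates to zero against $\tilde\nu-\nu'$ (equal mass and mean), the functional inequality drops out directly. Both arguments exploit the same structural fact about $\tilde\nu$, but yours is a clean direct argument rather than a contradiction, and your remarks — that monotonicity of $f$ is dispensable because equal mass and mean absorb affine terms, that the degenerate case $\underline y=\overline y$ is handled by a supporting line, and that Lemma~\ref{lem:exist} plays no role here — are all correct.
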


				\section{The Optimality of Interval Policies}\label{sec:inter}

				Our first main result demonstrates a strong connection between interval-stopping rules and Blackwell order-preserving martingales. 
				\begin{theorem}\label{theo:main}
					If $\bX=(X_t)_{t=1,\ldots,T}$ is Blackwell order preserving martingale, then there exists an optimal interval stopping rule for the sender. 
				\end{theorem}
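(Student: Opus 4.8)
The plan is to work entirely in the mass–elimination reformulation of Lemma~\ref{lem:altern} and to convert an arbitrary optimal policy into an interval policy one period at a time, proceeding \emph{forward} in time, while maintaining the invariant that the converted process is always Blackwell at least as informative as the original one. Since at an optimum the incentive compatibility constraints bind, I would fix an optimal sequence $(\nu_t)_t$ with $\overline{\nu_t}=l$ whenever $|\nu_t|>0$, and let $(\mu_t)_t$ be the associated masses, $\mu_{t+1}=\sigma_t\circ(\mu_t-\nu_t)$, $\mu_1=X_1$. I will then construct interval measures $(\tilde\nu_t)_t$ with associated masses $(\tilde\mu_t)_t$, $\tilde\mu_1=\mu_1$, such that for every $t$: (i) $|\tilde\nu_t|=|\nu_t|$ and $\overline{\tilde\nu_t}=l$ (so that $(\tilde\nu_t)$ is a feasible interval policy); and (ii) $\mu_t\preceq_B\tilde\mu_t$, with $|\mu_t|=|\tilde\mu_t|$ and $\overline{\mu_t}=\overline{\tilde\mu_t}$. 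Granting (i)--(ii) for all $t$, the policy $(\tilde\nu_t)$ is an interval policy, it is feasible, and its value is $\sum_t w_t|\tilde\nu_t|=\sum_t w_t|\nu_t|=V^*$, which proves the theorem.

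The inductive step is where the hypotheses get used. Assume (ii) at time $t$. By the kernel characterization of $\preceq_B$ (applied to the normalizations, then rescaled), there is a probability kernel $\kappa$ with $\kappa\circ\mu_t=\tilde\mu_t$. Put $\nu_t':=\kappa\circ\nu_t$; then $\nu_t'\le\kappa\circ\mu_t=\tilde\mu_t$ (since $\kappa\circ(\mu_t-\nu_t)\ge0$), $|\nu_t'|=|\nu_t|$, and $\overline{\nu_t'}=\overline{\nu_t}=l$. The same push shows the greedy (top–quantile) mass is $\preceq_B$–monotone, so the greedy mass of $\tilde\mu_t$ is at least $|\nu_t|$, and Lemma~\ref{lem:exist} produces the unique interval measure $\tilde\nu_t\le\tilde\mu_t$ with $|\tilde\nu_t|=|\nu_t|$, $\overline{\tilde\nu_t}=l$; this is (i). By Lemma~\ref{lem:obs} applied inside $\tilde\mu_t$ we get $\tilde\nu_t\preceq_B\nu_t'$, and since these have equal mass and equal mean, subtracting them from $\tilde\mu_t$ gives $\tilde\mu_t-\nu_t'\preceq_B\tilde\mu_t-\tilde\nu_t$. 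On the other hand $\tilde\mu_t-\nu_t'=\kappa\circ(\mu_t-\nu_t)\succeq_B\mu_t-\nu_t$ because $\kappa$, being a probability kernel, spreads in the convex order (Jensen). Chaining the two relations, $\mu_t-\nu_t\preceq_B\tilde\mu_t-\tilde\nu_t$ with equal mass and mean; applying the Blackwell order–preserving kernel $\sigma_t$ then yields $\mu_{t+1}=\sigma_t\circ(\mu_t-\nu_t)\preceq_B\sigma_t\circ(\tilde\mu_t-\tilde\nu_t)=:\tilde\mu_{t+1}$, with equal mass and mean, which is (ii) at $t+1$. The base case $t=1$ is immediate since $\tilde\mu_1=\mu_1$.

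The only step requiring genuine care is existence of the optimal policy and legitimacy of the construction when $T=\infty$: for $T<\infty$ the supremum in \eqref{eq:sender-u} is attained by the backward–induction argument noted there, and the construction above is literally a finite recursion; for $T=\infty$ one runs the argument on the $N$–period truncations and passes to the limit, using that the interval eliminations of a given period are parametrized by their two top quantiles and hence range over a compact set, which is exactly the point where the paper's standing assumption that the optimum exists in the infinite–horizon examples enters. The conceptual core, and hence the part I expect to be the main obstacle to get exactly right, is the triple combination in the inductive step: $\tilde\nu_t\preceq_B\nu_t'$ (Lemma~\ref{lem:obs} — interval eliminations are Blackwell–minimal), $\tilde\mu_t-\nu_t'\succeq_B\mu_t-\nu_t$ (transporting the \emph{old} elimination through the coupling kernel $\kappa$, rather than trying to re-derive an interval elimination for $\tilde\mu_t$ from scratch), and the order being propagated by $\sigma_t$; granting Lemmas~\ref{lem:exist}--\ref{lem:obs} and the Blackwell order–preserving hypothesis, the rest is bookkeeping about equality of masses and means.
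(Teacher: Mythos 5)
Your proof is correct and takes essentially the same route as the paper's: the same forward induction, the same invariant that the interval-policy mass Blackwell-dominates the original, the same use of a coupling kernel $\kappa$ to push the original elimination across, and the same appeal to Lemmas~\ref{lem:exist}, \ref{lem:obs}, the subtraction lemma, and the Blackwell-order-preserving hypothesis. The only differences are cosmetic (your $\nu'_t$ is the paper's $\psi_t$, your tildes are the paper's unprimed symbols).
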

				This informational finding has an important practical implication. It  reduces the optimization problem stated in Lemma \ref{lem:altern} to finding the optimal sequence of stopping masses $\{|\nu_j||\,:\, j \geq 1\}$, rather then the optimal sequence of measures $\{\nu_j \,:\, j \geq 1\}$. Thus when an optimal interval policy exists at any time period the sender only needs to decide at every time period $t$ what mass of the available persuasion mass should she used to persuade the receiver today. Once the decision has been made the interval measure $\nu_j$ is defined uniquely.

				Already in simple scenarios such as Example \ref{ex1} (see also Examples \ref{ex2} below) we saw that the stopping mass in each period might be quite a complicated object and it might depend on delicate details of the instance. Our first main Theorem is applicable in the quite general class of Blackwell-preserving martingales. It does not characterize the stopping masses. Instead, it uniquely specifies the optimal policy as a function of these stopping masses.

				Note that the theorem also implies that an optimal interval policy exists for the case $T=\infty$. Thus the $\sup$ in Equation \eqref{eq:sender-u} can be replaced with $\max$ for Blackwell order-preserving kernels.
				
				
				\noindent \textbf{The informational interpretation.}
				Theorem \ref{theo:main} identifies an informational measurement with respect to which the sender's use of information should be minimal. Indeed, as discussed in Subsection \ref{Properties of interval policies}, the interval policies possess the property of minimality among all stopping rules, with respect to the Blackwell ordering partial order relation.

				

				\begin{example}\label{ex2}
					
					We demonstrate here how to utilize Theorem \ref{theo:main} in an example with $T=2$. We set $w_1=1$. Suppose that the sender receives two conditionally independent signals, $S_1 \in [0,1]$ in period 1 and $S_2 \in \{0,1, \phi\}$ in period 2. The common prior over the binary state is $\pi=1/2$. The distribution of the first period signal $S_1$, conditional on the state, is given by 
					\[\Pro(S_1 \le s|\omega =1) = s^2\quad \text{and} \quad \Pro(S_1 \le s|\omega =0) = s(2-s).\]
					It is easy to verify that signal $S_1$ induces a uniform posterior over $[0,1]$. The second period signal $S_{2}$ is either a perfect one or a white noise: with probability $q\in [0,1]$, signal $S_2$ perfectly reveals the state $\omega \in \{0,1\} $, and with probability $1-q$, $S_2$ is a null signal (denoted by $\phi$) which does not contain any information. 
					
					It is easy to verify that the martingale $\bX=(X_1,X_2)$ must satisfy $X_1(S_1) = S_1$ and
					\begin{equation*}
						X_{2}(S_1,S_2) =
						\left\{ 
						\begin{array}{ccc}
							1 & \text{if} & S_2=1 \\ 
							0 & \text{if} & S_2=0 \\ 
							S_{1} & \text{if} & S_2=\phi%
						\end{array}%
						\right.
					\end{equation*}
					Therefore, $X_1$ is uniformly distributed on $[0,1]$, and $X_2$ is also uniformly distributed but with atoms at $0$ and $1$ splitting a total mass of $q$. 
					
					\begin{center}
						
						\begin{tikzpicture}
							[dot/.style={circle,inner sep=1pt,fill,label={#1},name=#1},
							help lines/.style={blue!10,ultra thin},scale=4,
							extended line/.style={shorten >=1cm, shorten <=-2.2cm}, 
							extended line/.default=1cm]
							
							\draw[-, black, dashed] (0,2) -- (2,2) node [black, right]{\small $t=0$};
							\draw[-, black, dashed] (0,1) -- (2,1) node [black, right]{\small $t=1$};
							\draw[-, black, dashed] (0,0) -- (2,0) node [black, xshift=0.6cm, yshift=0.3cm]{\small $t=2$};
							
							\draw [-, black, dashed] (1,2) -- (0,1) node [blue!80!black, xshift=0.5cm, yshift=0.3cm]{};
							\draw [-, black, dashed] (1,2) -- (0.5,1) node [blue!80!black, xshift=-0.2cm, yshift=0.2cm]{};
							\draw [-, black, dashed] (1,2) -- (1,1) node [blue!80!black, xshift=0.5cm, yshift=0.3cm]{};
							\draw [-, black, dashed] (1,2) -- (1.5,1) node [blue!80!black, xshift=0.5cm, yshift=0.3cm]{};
							\draw [-, black, dashed] (1,2) -- (2,1) node [blue!80!black, xshift=0.5cm, yshift=0.3cm]{};
							\draw [-, black, dashed] (1,2) -- (0.2,1) node [blue!80!black, xshift=0.5cm, yshift=0.3cm]{};
							\draw [-, black, dashed] (1,2) -- (0.8,1) node [blue!80!black, xshift=0.0cm, yshift=0.3cm]{\small $S_1=s$};
							\draw [-, black, dashed] (1,2) -- (1.2,1) node [blue!80!black, xshift=0.5cm, yshift=0.3cm]{};
							\draw [-, black, dashed] (1,2) -- (1.8,1) node [blue!80!black, xshift=0.5cm, yshift=0.3cm]{};
							\draw [-, black, dashed] (0,1) -- (0,0) node [blue!80!black, xshift=0cm, yshift=-0.3cm]{};
							\draw [-, black, thin] (0.8,1) -- (0,0) node [blue!80!black, xshift=1.1cm, yshift=2.4cm]{\small $q(1-s)$} node [blue!80!black, xshift=0.8cm, yshift=1cm]{\small $S_2=0$};
							\draw [-, black, thin] (0.8,1) -- (0.8,0) node [blue!80!black, xshift=-0.55cm, yshift=2.4cm]{\small $(1-q)$} node [blue!80!black, xshift=0.1cm, yshift=1cm]{\small $S_2=\phi$};
							\draw [-, black, thin] (0.8,1) -- (2,0) node [blue!80!black, xshift=-3.3cm, yshift=2.4cm]{\small $qs$} node [blue!80!black, xshift=-1.2cm, yshift=1cm]{\small $S_2=1$};
							\draw [-, black, dashed] (1.5,1) -- (0,0) node [blue!80!black, xshift=0cm, yshift=-0.3cm]{};
							\draw [-, black, dashed] (1.5,1) -- (1.5,0) node [blue!80!black, xshift=0cm, yshift=-0.3cm]{};
							\draw [-, black, dashed] (1.5,1) -- (2,0) node [blue!80!black, xshift=0cm, yshift=-0.3cm]{};
							\draw [-, black, dashed] (2,1) -- (2,0) node [blue!80!black, xshift=0cm, yshift=-0.3cm]{};
							
							\filldraw[red](1,2) circle (0.5pt) node [blue!80!black, xshift=0cm, yshift=0.2cm]{\small $\pi$};
							\filldraw[red](0,0) circle (0.5pt) node [blue!80!black, xshift=0cm, yshift=-0.3cm]{\small atom at $0$};
							\filldraw[red](2,0) circle (0.5pt) node [blue!80!black, xshift=0cm, yshift=-0.3cm]{\small atom at $1$};
						\end{tikzpicture}
					\end{center}

					What is the optimal persuasion policy in this example? Suppose that in the optimal policy a probability mass of size $\alpha \le 2(1-l)$ is persuaded in period 1. That is, $|\nu_1|=\alpha$ and $\overline{\nu_1}=l$. By Theorem 1, this probability mass must be taken from the interval $[l-\alpha/2, l+\alpha/2]$. Let $\hat{x}(\alpha)$ denote the optimal cutoff such that all probability mass remaining in period 2 with $X_2 \ge \hat{x}(\alpha) $ will be persuaded in period 2. With some algebra, one can show that $\hat{x}(\alpha)$ is given by
					\begin{equation*}
						\hat{x}(\alpha) =l-\sqrt{\frac{(1-l) ^{2}+(1-l) lq ( 1-2\alpha) }{1-q}}
					\end{equation*}
					
					Again we normalize the sender's utility by setting $w_1=1$. Then we can write the sender's utility as a function of $\alpha$:
					\[\Gamma(\alpha)=\alpha + w_2 \left[(1-q)(1-\alpha-\hat{x}(\alpha))+\frac{1}{2}q(1-2l\alpha) \right],\]
					where the first term in the bracket captures the uniform probability mass that is truncated below by $\hat{x}(\alpha)$ and is persuaded when $S_2=\phi$, and the second term represents the atom at $X_2=1$ formed when $S_2=1$. Both terms take into account the fact that the probability mass of $\alpha$ is persuaded in period 1. It is easy to verify that $\Gamma(\alpha)$ is concave in $\alpha$ for all feasible $\alpha \in [0,2(1-l)]$.
					
					The optimal policy is characterized a pair of cutoff functions, $w_L(q,l)$ and $w_H(q,l)$ with $w_L(q,l) < w_H(q,l)$. If $w_2 \le w_L(q,l)$, $\Gamma(\alpha)$ is increasing for all $\alpha \in [0,2(1-l)]$ and hence the optimal policy is greedy in both periods with $\alpha^*=2(1-l)$. If $w_2 \ge w_H(q,l)$, $\Gamma(\alpha)$ is decreasing for all $\alpha \in [0,2(1-l)]$ and hence the optimal policy persuades only in period 2 with $\alpha^*=0$. Finally, if $w_2 \in (w_L(q,l), w_H(q,l))$, an interior $\alpha^* \in (0,2(1-l))$ is optimal and varies with $q$. Interestingly, for a typical fixed $l$, as $q$ increases from $0$, optimal $\alpha^*$ is first $2(1-l)$, 
					then interior, then $0$, then interior again, and finally $2(1-l)$ again as $q$ approaches $1$, as shown in the following figure (where we take $l=3/4,w_2=24/25$):
					
					\includegraphics[scale=0.4]{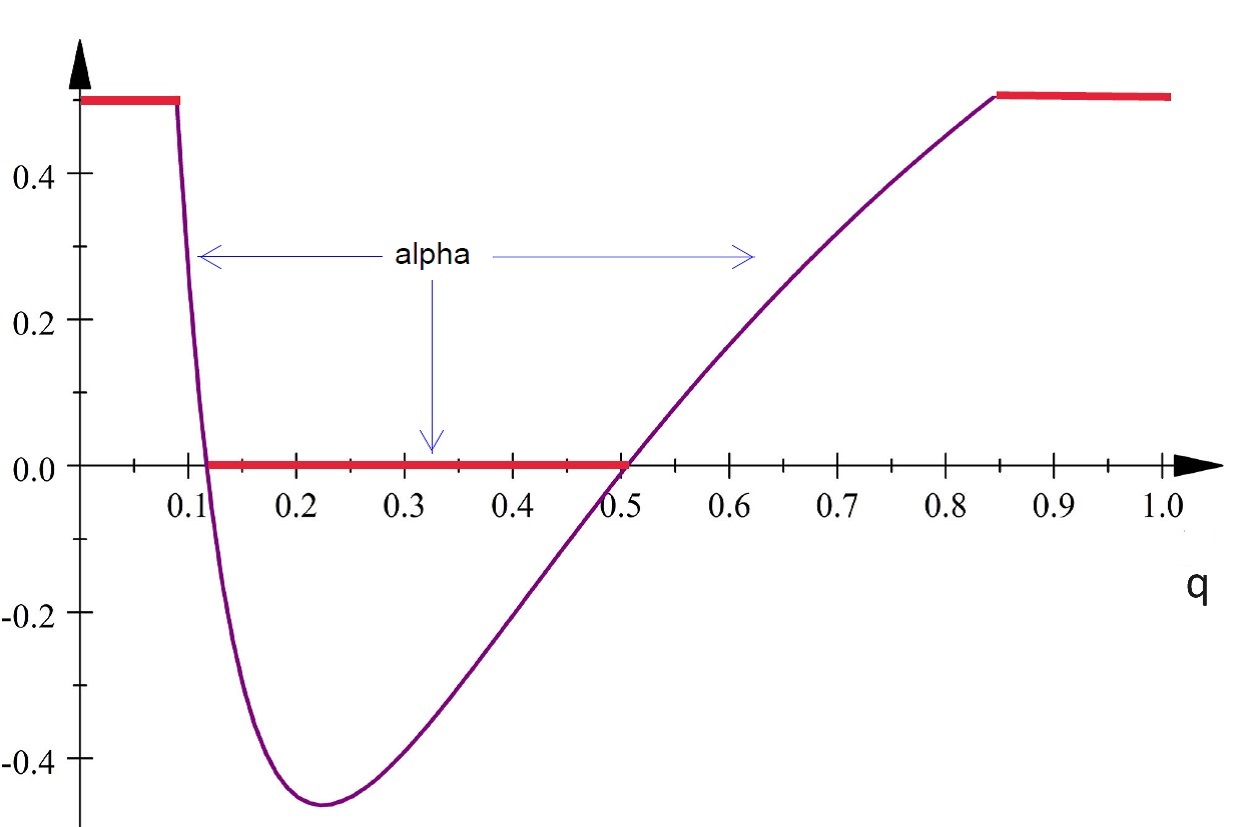}
					
				\end{example}
				
				The following example demonstrates the necessity of the Blackwell order preserving property in Theorem \ref{theo:main}. Namely, we illustrate an example of a two-period model where the kernel is not Blackwell-order preserving and the optimal policy is not an interval policy. 
				
				\begin{example} 
					Intuitively, if for some posteriors much information will be revealed tomorrow while for other posteriors little information will be revealed, then the sender may want to persuade at the posteriors with little future information. For general martingales, the set of posteriors with little information might not form an interval. 
					
					Formally, we consider a two-period interaction with a sender's utility being $w_1=1$ and $w_2=\frac{3}{4}$. 
					Consider the case where \(\mu_1=\frac{1}{7} \delta_{\frac{1}{3}}+ \frac{2}{7} \delta_{\frac{1}{2}}+ \frac{4}{7}\delta_{\frac{3}{4}}\) and \(l=\frac{2}{3}\). The kernel \(\sigma\) is defined as follows:
					\[
					\sigma(\frac{1}{2})=\frac{1}{2}\delta_1+\frac{1}{2}\delta_0, \quad \sigma(\frac{1}{3})=\delta_{\frac{1}{3}}, \quad \text{and} \quad \sigma(\frac{3}{4})=\delta_{\frac{3}{4}}.
					\]
					We note that for \(\nu=\frac{2}{7} \delta_{\frac{1}{2}}+ \frac{4}{7}\delta_{\frac{3}{4}}\) it holds that \(\overline{\nu}=l\).
					Therefore, a policy $\nu'$ is an interval policy iff $\nu':=\nu^\beta=(1-\beta) \nu$ for some
					$\beta\in[0,1]$. In this case $\mu^\beta_2:=\sigma\circ(\mu_1-(1-\beta) \nu)=\beta\frac{1}{7} \delta_{0}+\frac{1}{7} \delta_{\frac{1}{3}}+ \beta\frac{4}{7}\delta_{\frac{3}{4}}+\beta\frac{1}{7} \delta_{1}$. We note that applying the greedy policy to $\mu^\beta_2$ gives the measure
					$\nu^\beta_2:=\beta(\frac{1}{14} \delta_{0}+\frac{1}{7} \delta_{\frac{1}{3}}+ \frac{4}{7}\delta_{\frac{3}{4}}+\frac{1}{7} \delta_{14})$. Overall applying $\nu^\beta$ yields a utility:
					$$(\frac{1}{4}(1-\beta)\frac{6}{7}+\frac{3}{4}((1-\beta)\frac{6}{7}+\beta\frac{13}{14}))=\frac{6}{7}-\frac{9\beta}{56}.$$
					The optimal $\beta$ is therefore $\beta=0$ and the optimal utility for interval policy is $\frac{6}{7}$. 
					
					Consider in contrast an alternative policy of $\nu=\frac{1}{7} \delta_{\frac{1}{3}}+ \frac{4}{7}\delta_{\frac{3}{4}}$. This gives in the next period distribution $\mu_2:=\sigma\circ(\mu_1- \nu)=\frac{1}{7} \delta_{0}+\frac{1}{7} \delta_{1}$ with $\nu_2=\frac{1}{14} \delta_{0}+\frac{1}{7} \delta_{1}$. The utility of this policy is $$(\frac{1}{4}\cdot\frac{5}{7}+\frac{3}{4}(\frac{13}{14}))=\frac{7}{8}>\frac{6}{7}.
					$$
					This implies that any interval policy is sub-optimal for this problem.
					
				\end{example}

				
					

				\subsection{Proof of Theorem \ref{theo:main}}
				
				Our starting point for the proof of Theorem \ref{theo:main} is the following simple claim:
				
				\begin{lemma}\label{lem:sim}
					If $\nu,\nu'\leq\mu_1$ are two measures such that $|\nu| = |\nu'|$, $\overline{\nu}=\overline{\nu'}$, and  $\nu\preceq_B {\nu'}$, then 
					$\mu_1-\nu'\preceq_B\mu_1-\nu$.
				\end{lemma}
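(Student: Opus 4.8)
The plan is to reduce the statement directly to the functional (Strassen--Blackwell) characterization of $\preceq_B$ quoted above: once the definitions are unwound, the desired conclusion is literally the hypothesis rewritten, so the only work is bookkeeping with masses and means.

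\emph{Step 1 (the objects are legitimate and have matching mass).} Since $\nu\le\mu_1$ and $\nu'\le\mu_1$, both $\mu_1-\nu$ and $\mu_1-\nu'$ are finite positive measures on $[0,1]$, so the comparison $\mu_1-\nu'\preceq_B\mu_1-\nu$ is meaningful. Their total masses coincide: $|\mu_1-\nu|=|\mu_1|-|\nu|=|\mu_1|-|\nu'|=|\mu_1-\nu'|=:c$, using $|\nu|=|\nu'|=:m$.

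\emph{Step 2 (matching means).} One has $\int_{[0,1]}x\,\mathrm d\nu(x)=m\,\overline{\nu}=m\,\overline{\nu'}=\int_{[0,1]}x\,\mathrm d\nu'(x)$, and subtracting this from $\int_{[0,1]}x\,\mathrm d\mu_1(x)$ gives $\int_{[0,1]}x\,\mathrm d(\mu_1-\nu)(x)=\int_{[0,1]}x\,\mathrm d(\mu_1-\nu')(x)$; dividing by $c$ yields $\overline{\mu_1-\nu}=\overline{\mu_1-\nu'}$. Hence $\tfrac1c(\mu_1-\nu)$ and $\tfrac1c(\mu_1-\nu')$ are probability measures on $[0,1]$ with the same mean, and likewise $\tfrac1m\nu$ and $\tfrac1m\nu'$ are probability measures with the same mean, so the functional characterization of $\preceq_B$ applies to each pair.

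\emph{Step 3 (apply the characterization and cancel).} By that characterization, $\mu_1-\nu'\preceq_B\mu_1-\nu$ is equivalent to $(\mu_1-\nu)(f)\le(\mu_1-\nu')(f)$ for every non-decreasing concave $f$ (the common factor $1/c$ cancels). Cancelling $\mu_1(f)$ from both sides, this is exactly $\nu'(f)\le\nu(f)$ for every such $f$. But, again by the characterization applied to $\tfrac1m\nu\preceq_B\tfrac1m\nu'$ (which is the hypothesis $\nu\preceq_B\nu'$, the factor $1/m$ cancelling), this is precisely what $\nu\preceq_B\nu'$ asserts. This finishes the argument.

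I expect no genuine obstacle here: the lemma is an immediate consequence of the functional description of the Blackwell order together with linearity of the integral. The only points that need (routine) care are that the hypotheses $\nu,\nu'\le\mu_1$ are exactly what make $\mu_1-\nu$ and $\mu_1-\nu'$ positive measures in the first place, and that the equal-mass/equal-mean bookkeeping makes the normalizations in the definition of $\preceq_B$ on positive measures cancel, so that no spurious scaling factors appear.
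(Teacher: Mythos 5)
Your proof is correct and takes essentially the same route as the paper: both arguments rest on a linear characterization of $\preceq_B$ (you use the non-decreasing concave test functions, the paper uses the integrated-CDF condition $\int_0^x\nu([0,y])\,dy$), and both conclude by noting that linearity in the measure flips the inequality under subtraction from $\mu_1$. Your Steps 1--2 make explicit the equal-mass/equal-mean bookkeeping that the paper leaves implicit, which is a welcome addition.
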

				\begin{proof}
					This simple observation follows 
					from the fact that $$\int_0^x\nu([0,y])dy\geq\int_0^x \nu'([0,y])dy$$ for every $x\in[0,1]$, which implies 
					$$\int_0^x (\mu_1-\nu)([0,y]) dy\leq\int_0^x (\mu_1-\nu')([0,y])dy$$ for every $x\in[0,1]$.
				\end{proof}
				\begin{proof}[\textbf{Proof of Theorem \ref{theo:main}}]
					Consider a sender strategy $\{\nu'_t,\mu'_t\}_{t=1,\ldots, T}$ that satisfies the recursive formulation of Lemma \ref{lem:altern}. We first show that there exists an interval policy $\{\nu_t,\mu_t\}_{t=1,\ldots, T}$ that also satisfies the conditions of Lemma   \ref{lem:altern}
					such that $|\nu_t|=|\nu'_t|$. To see this we will use induction and prove that for every \( t \leq T \) it holds that if \( \{\nu_j,\mu_j\}_{j=1,\ldots,t-1} \) satisfy \( |\nu_j|=|\nu'_j| \) for every \( j\leq t-1 \) and \( \mu'_t\preceq_B \mu_t \), then there exists an interval policy \(\nu_t\) with \( |\nu_t|=|\nu'_t| \) such that \( \nu_t\preceq_B \mu_t \) and \( \mu'_{t+1}\preceq_B \mu_{t+1} \). This readily follows from a repeated application of Lemmas \ref{lem:exist}, \ref{lem:obs}, and \ref{lem:sim}. 
					
					For $t=1$ it follows from Lemma \ref{lem:exist} that there exists an interval measure \( \nu_1\) with \( |\nu_1|=|\nu'_1| \) and \( \nu_1 \leq\mu_1 \). Lemma \ref{lem:obs} implies that \( \nu_1 \preceq_B\nu'_1 \). Lemma \ref{lem:sim}
					implies that $\mu'_1-\nu'_1 \preceq_B \mu_1-\nu_1$ (note that $\mu_1=\mu'_1$). Since $\sigma_1$ is Blackwell order preserving it follows that $\mu'_{2}=\sigma_1\circ (\mu'_1-\nu'_1)\preceq_B\sigma_1\circ (\mu_1-\nu_1)=\mu_{2}$ as desired.
					
					Assume the hypothesis holds for \( t-1 \leq T-1 \). We have that $\mu'_t\preceq_B\mu_t$. Therefore there exists a probability kernel \( \kappa :[0,1]\to\Delta([0,1]) \) such that $\kappa \circ\mu'_t=\mu_t$. Let $\psi_t=\kappa \circ\nu'_t$. It holds that $|\psi_t|=|\nu'_t|$ and $\overline{\psi}_t=\overline{\nu'}_t=l$.
					Since  $\mu_t-\psi_t=\kappa \circ(\mu'_t-\nu'_t)$ it holds that 
					\begin{equation}\label{Thm1 proof Eq.1}
						\mu'_t-\nu'_t \preceq_B \mu_t-\psi_t.
					\end{equation}
					By Lemma \ref{lem:exist} there exists an interval policy $\nu_t\leq\mu_t$ such that  
					$|\nu_t|=|\psi_t|$, $\nu_t \preceq_B \psi_t$, and $\overline{\nu}_t=l$. Lemma \ref{lem:sim} therefore implies that 
					\begin{equation}\label{Thm1 proof Eq.2}
						\mu_t-\psi_t\preceq_B \mu_t-\nu_t
					\end{equation}
					Eqs.\ \eqref{Thm1 proof Eq.1} and \eqref{Thm1 proof Eq.2} together with the fact that $\sigma_t$ is Blackwell order preserving yield that $\mu'_{t+1}=\sigma_t\circ (\mu'_t-\nu'_t)\preceq_B\sigma_t\circ (\mu_t-\nu_t)=\mu_{t+1}$ thus concluding the induction step. 
					
					This shows that for every policy there exists an internal policy that achieves at least the same utility for the sender. This implies that the $\sup$ in Equation \eqref{eq:sender-u} can be taken over interval policies to get the optimal utility for the sender.
					The fact that the $\sup$ can be replaced by $\max$ follows from Proposition \ref{prop:existence} in Appendix \ref{sec:max-proof} showing that an optimal interval policy always exists (even for $T=\infty$). 
				\end{proof}
				\section{Application to Random Walks}
				\label{sec:transparancy}
				
				In this section, we restrict attention to information that is revealed to the sender according to a random walk. 
				We recall the notions of the grid points $\Gamma=\{z_j\}_{j\in Z}$ where $Z=\mathbb{Z}\cap[a,b]$ where $a$ can be either finite or $a=-\infty$ and $b$ can be either finite or $b=\infty$. We assume that $z_j<z_{j+1}$ for any $j\in \mathbb{Z}.$
				We normalize (by renaming the indexes) the threshold $l$ to satisfy $l\in (z_0,z_1]$.
				We define
				\begin{align}\label{eq:D}
					D = D(\Gamma) = \sup_{j\leq 0}\frac{l-z_{j-1}}{l-z_{j}}.   
				\end{align}
					Namely, $D\geq 1$ measures the factor by which the distance from the threshold $l$ may grow in a single down-word jump in the random walk, whenever the latter is located strictly below $l$.
				
				We study the case where  $w_t=\delta^{t}$ for some discount factor $\delta>0$ and $X_0=X_1\sim\delta_{z_j}$ for some grid point $z_j\in \Gamma$. That is, the first-day information of the sender is distributed
				as the Dirac measure on $z_j$. The time duration $T$ can be either finite or infinite.
				\begin{theorem}\label{theorem:greedy}
					For a sender whose martingale is a random walk on a grid $\Gamma$ and discounted utilities $w_t=\delta^t$ for $\delta \leq \frac{1}{D(\Gamma)}$ the greedy policy is optimal.
				\end{theorem}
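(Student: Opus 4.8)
The plan is to combine the reduction to interval policies from Theorem~\ref{theo:main} with a monotone improvement argument in which the hypothesis $\delta\le 1/D(\Gamma)$ is exactly what forbids any profitable postponement of persuasion. First I would invoke Theorem~\ref{theo:main}: a random walk on a grid is Blackwell order preserving, so an optimal interval policy exists, and by Lemma~\ref{lem:exist} an interval policy at time $t$ is pinned down by the single number $|\nu_t|$. Since $l\in(z_0,z_1]$, the grid points weakly above the threshold are $z_1,z_2,\dots$ and the point immediately below is $z_0$. I would also record the shape of the greedy run from $\delta_{z_j}$: if $z_j\ge l$ everything is persuaded at $t=1$ and the claim is trivial, so assume $z_j<l$; greedy then persuades nothing until the walk first puts mass on $z_1$ (the only point strictly above $l$ a greedy trajectory ever reaches, since all of the $z_1$-mass is removed at each visit), and from then on each period it removes all the mass currently at $z_1$ plus the minimal "subsidy'' mass, drawn from grid points just below $l$, needed to pull the conditional mean down to $l$; in particular the residual after any greedy step is supported on $\{z_i:i\le 0\}$.

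The heart of the proof is a one-step comparison: following greedy forever is at least as good as deviating at one period and then following greedy. Because the $\delta$-discounted problem with $T=\infty$ is time-homogeneous and finite-horizon optima converge to the infinite-horizon one (Proposition~\ref{prop:existence}), it suffices to establish the equivalent improvement statement: if an interval policy first disagrees with greedy at some period $t$ by persuading \emph{less} than greedy, moving it toward greedy at $t$ does not decrease the sender's value. The extra mass greedy would have taken is a measure $\xi$ with $\overline\xi=l$, supported on $z_1$ (a portion worth a full unit if persuaded now) and on a single grid point $z_i$ with $i\le 0$ (a "below-$l$'' portion). Rather than persuading $\xi$ at time $t$, the deviating policy carries it forward; I would then show that the best $\sigma\circ\xi$ can ever contribute, in time-$(t{+}1)$ units, is at most $D\,|\xi|$, so that discounted back to time $t$ it is at most $\delta D\,|\xi|\le|\xi|$ — precisely what greedy collects now by persuading $\xi$. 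The reason the factor is $D$ is the "subsidy-budget'' accounting: persuading a unit of below-$l$ mass located at $z_i$ requires co-persuading above-$l$ mass in proportion to $l-z_i$, so a unit of below-$l$ mass carries persuasion potential inversely proportional to its distance from $l$; a single step of the walk increases that distance only on a down-step, and by the very definition $D=\sup_{i\le 0}\frac{l-z_{i-1}}{l-z_i}$ it does so by a factor at most $D$, while up-steps and any time spent above $l$ only help and create no persuasion value beyond the mass involved.

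To make this quantitative I would introduce a potential $\Phi$ on measures on $\Gamma$, built from weights $c_i$ with $c_i=1$ for $i\ge 1$ and $c_i$ satisfying the discounted-harmonic recursion $c_i=\delta\bigl(p_i c_{i-1}+(1-p_i)c_{i+1}\bigr)$ for $i\le 0$ (with $p_i$ the down-step probability of the walk and the appropriate condition imposed at $z_a$), and verify two things using $\delta\le 1/D$: (a) $\Phi$ dominates every feasible one-period persuasion gain, so $\Phi(\mu_1)$ upper-bounds the sender's value; and (b) along the greedy trajectory the residual mass always lands on the set where $\Phi$ behaves "harmonically'' and the persuaded mass always lies on the set where $\Phi$ is "flat'', so that the greedy value telescopes to $\Phi(\mu_1)$. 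Items (a) and (b) together give that greedy is optimal for finite $T$, and the uniformity of the estimate together with Proposition~\ref{prop:existence} extends this to $T=\infty$.

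The main obstacle is the interaction between the carried-forward mass and the rest of the distribution: the above-$l$ portion of a delayed $\xi$ can serve as future subsidy for \emph{other} below-$l$ mass, and conversely $\xi$'s below-$l$ portion may eventually be persuaded on the back of above-$l$ mass unrelated to $\xi$. It is exactly this coupling that forces the "gain from delay'' to be bounded by $D$ rather than by the trivial constant $1$, and making the subsidy-budget bookkeeping flow correctly through the random walk — equivalently, verifying properties (a) and (b) of $\Phi$ under $\delta\le 1/D$ (with possible borderline issues at an absorbing endpoint $z_a$) — is the technical core. A secondary point is the finite-to-infinite horizon passage, which requires the one-step estimate to be uniform in the horizon.
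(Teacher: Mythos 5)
Your high-level plan matches the paper's in several respects: both reduce to interval policies via Theorem~\ref{theo:main}, both rely on a single-deviation / backward-induction structure, and both isolate a two-period inequality in which the bound $\delta\le 1/D(\Gamma)$ is exactly what makes acting greedily today at least as good as postponing (this is the paper's Lemma~\ref{prop:inf} and Corollary~\ref{cor:greedy}, and the mechanism you describe with the below-$l$ ``subsidy budget'' shrinking by at most a factor $D$ per down-step is the right reading of it). The divergence, and the genuine gap, is in how you propose to handle what you yourself flag as ``the technical core'': the coupling by which delayed above-$l$ mass can later subsidize unrelated below-$l$ mass, and vice versa.

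Your proposed fix — a potential $\Phi(\mu)=\sum_i c_i\,\mu(\{z_i\})$, linear in $\mu$, with $c_i=1$ for $i\ge 1$ and a discounted-harmonic recursion for $i\le 0$ — cannot simultaneously satisfy your properties (a) and (b). The obstruction is precisely the subsidy effect: persuading one unit of $z_1$-mass co-persuades $\tfrac{z_1-l}{l-z_0}$ units of $z_0$-mass, so a unit of $z_0$-mass that is persuaded \emph{now} contributes value $1$, not $c_0$. For $\Phi$ to upper-bound the achievable value (your (a)), one would therefore need $c_0\ge 1$; but the discounted-harmonic recursion with $\delta<1$ forces $c_0<1$, which is also what your (b) needs for the greedy telescope. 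A measure-linear potential cannot see that the marginal worth of $z_0$-mass depends on how much $z_1$-mass is present to pool it with, and this is not a borderline technicality but the central nonlinearity of the problem. The paper circumvents it without any potential function: it introduces a domination order $\preceq_D$ (an increasing-convex-order relation for positive measures, Definition~\ref{def:domination}), shows that the post-greedy residual after two steps FOSD- and hence $\preceq_D$-dominates the residual of any other interval policy (Lemma~\ref{lemma:gt}), that the random-walk kernel preserves $\preceq_D$ (Lemmas~\ref{lem:aditive} and~\ref{lemma:dominPreserv}), and that $\preceq_D$-dominance of the residual implies a weakly higher continuation value (Lemma~\ref{lemma:domination}). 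Backward induction then closes the finite-$T$ case, and the $T=\infty$ case follows by monotone convergence of $v_T$ and $\gamma_T$ (no uniform estimate is needed). You would need to replace the linear-potential step with something of this ordinal/coupling flavor, or find a genuinely nonlinear potential, for the argument to go through.
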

				
				The theorem states that for a sufficiently impatient sender (i.e., with a discount factor below $1/D$), the greedy policy is optimal. We first discuss the tight connection between the greedy policy and transparency. Thereafter, in Section \ref{sec:ex-trm2} we calculate $1/D$ in several examples. In Section \ref{sec:sub-optimal} we demonstrate an example in which the greedy policy is sub-optimal for a sufficiently large discount factor in the case where $T = \infty$.  Section \ref{sec:proof-idea} discusses the ideas of the proof of Theorem \ref{theorem:greedy}, while the formal proof appears in Section \ref{sec:proof-trm2}.

				\paragraph{The transparency of the greedy policy}
				As the greedy policy is defined (see Section \ref{sec:greedy}) the sender stays mute until she makes a recommendation of adoption, while by transparency we relate to the opposite extreme in which the sender reveals her private information in each period. Despite these seemingly two opposite extremes, the arguments below show that in cases in which the martingale is a random walk, there exists a policy that is equivalent (in terms of adoption) to the greedy policy and is almost fully transparent.
				
				Consider the policy that fully reveals the posterior $x_t$ (i.e., reveals the entire private information) as far as $x_t=z_j$ for $j<0$. Once her posterior reaches $x_t=z_{-1}$, at time $t+1$ she either reveals that $x_{t+1}=z_{-2}$ (with a weakly lower probability than it actually happens) or, alternatively, makes a recommendation of adoption to induce the posterior $l$.
				It is easy to see that in this policy the event of adoption is identical to the event of adoption under the greedy policy. But unlike the greedy policy, here the sender stays fully transparent about her private information in all periods until it becomes very close to the threshold (i.e., at the moment the threshold reaches $x_t=z_{-1}$).
				
				\subsection{The value of $1/D$ in Theorem \ref{theorem:greedy}}\label{sec:ex-trm2}
				
				Below we calculate in several examples the threshold $1/D$ for the discount factor below which the greedy policy is optimal by Theorem \ref{theorem:greedy}. In all these examples we have $\frac{1}{D}=\frac{l-z_{0}}{l-z_{-1}}$; namely, the supremum of Equation $\eqref{eq:D}$ is obtained at $j=0$.
				
				\paragraph{Standard grid.} For every $\epsilon>0$ let $\Gamma=\{n\epsilon: n\in \mathbb{N}_0, n\epsilon \leq 1 \}$ be the $\epsilon$-grid. If the threshold is located on the grid ($l\in \Gamma$) then $D=2$ and the maximum is obtained at $j=-1$. If however, the point $l$ is located out of the $\epsilon$-grid, say $l=n\epsilon+\epsilon'$ for $\epsilon'<\epsilon$ the bound obtained by Theorem \ref{theorem:greedy} is worse and equals $\frac{1}{D}=\frac{\epsilon'}{\epsilon'+\epsilon}$ and again is obtained at the value $i=n\epsilon$.
				
				\paragraph{Conditionally i.i.d. binary symmetric signals.} Consider a scenario in which the sender observes in each period a signal that equals the state $\omega$ with probability $p\in (\frac{1}{2},1)$ independently across periods (conditional on the states). For simplicity of the calculation let us assume that $\frac{1}{2}\in \Gamma$. In such a case the grid $\Gamma$ takes the simple form of $\Gamma=\{\frac{p^z}{p^z+(1-p)^z}:z\in \mathbb{Z}\}$. Assume that the initial prior $\pi<\frac{1}{2}$ and that $l=\frac{1}{2}$. In this case, again, the supremum of Equation \eqref{eq:D} is obtained at the grid point left to the threshold $l=\frac{1}{2}$ and  $$\frac{1}{D}=\frac{\frac{1}{2}-\frac{p^{-1}}{p^{-1}+(1-p)^{-1}}}{\frac{1}{2}-\frac{p^{-2}}{p^{-2}+(1-p)^{-2}}}=2p^2-2p+1.$$
				
				Notice that as the signals of the sender become more accurate ($p\to 1$) the greedy policy becomes optimal for discount factors that approach 1; namely $\lim_{p\to 1} \frac{1}{D}=1$.
				
				\paragraph{A standard grid with a hole} Another class of grids for which Theorem \ref{theorem:greedy} proves the optimality of the greedy policy for an arbitrary patient sender is the can where the standard grid omits grid points that are located close to $l$ but from the left of $l$. Let $\epsilon'>>\epsilon$ and let $\Gamma=\{n\epsilon: n\in \mathbb{N}\cap \{0\}, n\epsilon\leq 1, n\epsilon \notin (l-\epsilon',l)\}$. Namely, from the point $x_t=l-\epsilon'$ the belief of the sender either moves to the nearby point $x_{t+1}=l-\epsilon'-\epsilon$ or jumps to the far point which (weekly) exceeds $l$. In such a case $\frac{1}{D}\geq \frac{\epsilon'}{\epsilon'+\epsilon}$ which approaches 1 as $\frac{\epsilon}{\epsilon'} \to 0$.

				\subsection{Idea of the proof of Theorem \ref{theorem:greedy}}\label{sec:proof-idea}
				
				We start with the case $T<\infty$ and use Backward-induction. By the single deviation principle (or equivalently by the backward induction hypothesis) it is sufficient to prove that for every state $X_0\in \Delta(\Gamma)$ if starting from tomorrow the sender acts greedily, she will be better off acting greedily today as well. 
				
				Theorem \ref{theo:main} allows us to restrict the set of possible policies significantly: the decision in each state $X_1$ is characterized by an interval $[\underline q,\overline q]$ such that the conditional mean of $X_1$ over $[\underline q,\overline q]$ is $l$. Namely, there is no need to consider the set of all sub-measures whose mean is $l$ but only those that are supported on an interval.
				
				The proof that acting greedily today is indeed superior is done in two steps. First, we show that the utility of acting greedily in the first two days provides the best possible utility for the first two periods among all possible interval policies; see Lemma \ref{prop:inf} and Corollary \ref{cor:greedy}.
				Here we use the assumption that $\delta\leq \frac{1}{D}$. \footnote{Without this assumption this claim is false as can be seen in the proof of Proposition \ref{pro:sub-opt}.}
				
				In the second step, we prove that the remaining mass after two consecutive applications of the greedy action is superior to the remaining mass after any interval action followed by a single greedy action; see Lemma \ref{lemma:gt}. Here we shall clarify in which sense it is superior. The second-order stochastic dominance partial order is irrelevant here because we compare two measures with different masses. The first-order stochastic dominance (FOSD) is indeed satisfied, but there is an obstacle: the FOSD order is not necessarily preserved under the random walk over a grid. Therefore, a weaker (than FOSD) version of dominance is needed for our arguments; see Definition \ref{def:domination} which is related to to the notion known in the literature as the \emph{increasing convex order} (see, e.g., \citep{shaked2007stochastic}). Out notion of domination seem to extend this notion to general positive finite measure.
				
				This order exactly serves our purposes. On the one hand it is preserved by a random walk on a grid; see Lemma \ref{lemma:dominPreserv}. On the other hand, it is sufficiently powerful to deduce that the sender will be better off by remaining with the dominant measure; see Lemma \ref{lemma:domination}.
				
				To summarize, we show that the greedy policy is superior to any other interval policy in both aspects: it provides better utility in the first two periods (the first step above) and it leaves the sender with a measure that she can better utilize in the future periods (the second step above).
				
				\subsubsection{Extension to initially partially-informed sender setting}
				
				Note that our proof shows the optimality of the greedy policy not only for the case in which the sender is initially uninformed (i.e., $X_0=\delta_\pi$) but also for the case in which the partial initial information of the sender that might be captured by an arbitrary distribution $X_1\in \Delta(\Gamma)$ with $\Ex[X_1] \leq l$. 
				In Section \ref{sec:proof-trm2} we define the set of measures $\Delta^*(\Gamma)$ that comprises all positive measures that are supported on the even or odd points of the greed. 
				The following corollary follows from the proof of Theorem \ref{theorem:greedy}.
				\begin{corollary}\label{cor:partially}
					If the sender's private information martingal follows a random walk on the grid $\Gamma$ with initial distribution $X_1\sim \Delta^*(\Gamma)$ and her discount factor satisfies $\delta\leq \frac{1}{D}$ then the greedy policy is optimal. 
				\end{corollary}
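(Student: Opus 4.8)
The plan is to observe that the proof of Theorem \ref{theorem:greedy} never uses that $X_1$ is a Dirac mass at a grid point; it only uses that the current belief measure lies in $\Delta^*(\Gamma)$ and has mean at most $l$, a property reproduced at every stage. Concretely, for $T<\infty$ I would run the same backward induction as in Section \ref{sec:proof-trm2}, but with the induction hypothesis stated for an arbitrary starting measure: \emph{for every $1\leq t\leq T$ and every positive measure $\mu_t\in\Delta^*(\Gamma)$ with $\overline{\mu_t}\leq l$, the greedy continuation from time $t$ on is optimal}. The base case $t=T$ is immediate, since in the last period the greedy policy maximizes the single remaining weight $w_T$. For the inductive step I would invoke Theorem \ref{theo:main} to restrict attention to interval policies at time $t$; then the two ingredients recalled in Section \ref{sec:proof-idea} — Lemma \ref{prop:inf} with Corollary \ref{cor:greedy} (two consecutive greedy actions dominate, in two-period utility, any interval action, using $\delta\leq 1/D$), and Lemma \ref{lemma:gt} together with the preservation Lemma \ref{lemma:dominPreserv} and the exploitation Lemma \ref{lemma:domination} (two greedy actions leave a residual measure dominating, in the order of Definition \ref{def:domination}, the one left by any interval-then-greedy pair) — combine verbatim with the induction hypothesis applied to the residual measure to yield the single-deviation inequality. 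Since the residual measure after a mean-$l$ truncation followed by one random-walk kernel step is again in $\Delta^*(\Gamma)$ with mean $\leq l$ (the parity of its support merely flips), the induction hypothesis is applicable, and the step closes.

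The only points that require genuine checking are that each of the lemmas above is \emph{stated}, or can be restated without additional work, for an arbitrary $\mu_t\in\Delta^*(\Gamma)$ with $\overline{\mu_t}\leq l$, rather than merely for the measures arising along a random walk started from a single grid point. This is where the definition of $\Delta^*(\Gamma)$ does its job: it is exactly the class of measures closed under ``truncate a top interval of mean $l$, then apply the random-walk kernel,'' and the constant $D=D(\Gamma)$ in \eqref{eq:D} is a property of the grid alone, so the bound $\delta\leq 1/D$ that makes the first step work is insensitive to where the current mass sits. I would therefore expect the main (modest) obstacle to be bookkeeping: tracking the parity of supports across periods and confirming that the hypotheses of Lemma \ref{prop:inf} in particular do not implicitly assume the support is a block of consecutive grid points. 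Granting that, the finite-horizon corollary is literally the same argument with a weaker hypothesis on $\mu_1$.

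Finally, for $T=\infty$ I would pass to the limit exactly as in the proof of Theorem \ref{theorem:greedy} via Proposition \ref{prop:existence}: truncate the horizon at $T$, apply the finite-horizon statement just proved to the initial measure $X_1\in\Delta^*(\Gamma)$, and let $T\to\infty$; since $w_t=\delta^t\to 0$ and the finite-horizon greedy policies converge to the infinite-horizon greedy policy, optimality is preserved in the limit. This gives the corollary for all $T\in\mathbb{N}\cup\{\infty\}$.
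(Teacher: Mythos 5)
Your proposal is correct and takes essentially the same route as the paper: the paper's own proof of Corollary~\ref{cor:partially} is a one-line remark that the backward-induction proof of Theorem~\ref{theorem:greedy} in Section~\ref{sec:proof-trm2} was in fact already carried out for an arbitrary prior $\mu\in\Delta^*(\Gamma)$ (the induction is on $v_T(\mu)=\gamma_T(\mu)$ over all $\mu\in\Delta^*(\Gamma)$, and the constituent Lemmas~\ref{lemma:gt}, \ref{lemma:dominPreserv}, \ref{lemma:domination}, \ref{prop:inf} and Corollary~\ref{cor:greedy} are all stated for arbitrary measures in $\Delta^*(\Gamma)$, not for Dirac masses or consecutive-block supports). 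Your write-up unpacks exactly this, so it matches the paper's argument; the only cosmetic discrepancy is that you invoke Proposition~\ref{prop:existence} for the $T=\infty$ case, whereas the paper's Theorem~\ref{theorem:greedy} handles $T=\infty$ by the elementary monotone-convergence comparison $v_t\uparrow v_\infty$, $\gamma_t\uparrow\gamma_\infty$ without needing that proposition.
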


				This simply follows from the fact that the single deviation principle (Bellman equations) technique indeed shows the optimality of the greedy for every initial distribution.
				
				\subsection{Sub-optimality of the greedy policy for a patient sender}\label{sec:sub-optimal}
				
				Whether the greedy policy is optimal for an initially uninformed sender whose martingale of beliefs follows a random walk on a grid (either the standard grid or the one induced from conditionally i.i.d. binary signals) remains a major open problem. However, in this subsection, we show that the stronger claim of Corollary \ref{cor:partially} is invalid for sufficiently large discount factors. In other words, if we are interested in proving the optimality of the greedy policy for any initially partially informed sender, some restrictions on the patience of the sender are unavoidable. This observation, in particular, indicates the challenges of resolving the above-mentioned open problem; The single-deviation principle (Bellman equations) is a central tool for proving results in such dynamic settings. Our negative observation indicates that different tools will be needed to prove such a result if greedy is indeed the optimal policy. 
				
				We consider the standard $\epsilon$-grid; i.e., $\Gamma=\{n\epsilon: n \in \mathbb{N}_0, n\epsilon<1\}$.
				
				\begin{restatable}{proposition}{firstprop}\label{pro:sub-opt}
					Consider the case where $T=\infty$.   For every $\delta>\frac{1}{\sqrt{2}}$ there exists an $\epsilon'$ such that for every $\epsilon$-grid $\Gamma$ that contains the threshold $l\in \Gamma$ for $\epsilon<\epsilon'$ there exists an initial prior $X_0\in \Delta(\Gamma)$ for which the greedy policy is sub-optimal.
				\end{restatable}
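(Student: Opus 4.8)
The plan is to exhibit, for every $\delta\in(1/\sqrt2,1)$, an explicit initial belief distribution $X_0\in\Delta(\Gamma)$ and a policy that strictly beats the greedy one, the comparison collapsing to a single scalar inequality that flips sign exactly at $\delta=1/\sqrt2$. Fix an odd index $K$ (chosen large in terms of $\delta$ only) and set $X_0=A\,\delta_{z_K}+B\,\delta_{z_{-1}}$, where $z_{-1}=l-2\epsilon$ and $A,B>0$, $A+B=1$, are chosen so that the greedy time-$1$ persuasion mass $\alpha$ satisfies $0<\alpha<1$ (e.g.\ $A=\tfrac1{K+1}$). Since $z_K$ sits $K-1$ grid steps above $l$ while $z_{-1}$ sits two steps below, greedy at time $1$ persuades all of the $z_K$-mass together with a $\tfrac{K-1}{2}$-to-$1$ subsidy drawn from $z_{-1}$, leaving behind the entirely sub-threshold lump $(1-\alpha)\,\delta_{z_{-1}}$, which must now wander all the way back up to $l$ before any of it is adopted. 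Taking $K$ odd keeps $\supp X_0$ inside one parity class, so $X_0\in\Delta^*(\Gamma)$ and the instance will also show that Corollary \ref{cor:partially} does not extend to $\delta>1/\sqrt2$; taking $\epsilon$ small keeps $z_{K-1},z_K,z_{K+1}$ on the grid and $z_{-1}$ many steps away from the absorbing point $0$ (here $T=\infty$ is used, so that this excursion is genuinely long).

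\paragraph{The competing policy.}
For a small $\eta>0$ let the policy persuade at time $1$ the unique interval sub-measure $\nu_1$ of mass $\alpha-\eta$ below $X_0$ given by Lemma \ref{lem:exist} — this holds back a sliver of the $z_K$-mass and returns the matching subsidy to $z_{-1}$ — and from time $2$ on act greedily on $\mu_2=\sigma\circ(X_0-\nu_1)$. The point is that by time $2$ the held-back high mass has split into $z_{K-1}$ and $z_{K+1}$, while the returned subsidy now sits at $z_0$ (and $z_{-2}$), with deficit only $\epsilon$; the surplus carried by the high mass — unchanged in total, by mean preservation — can therefore subsidise about twice as much $z_0$-mass as it did at time $1$, so the greedy continuation persuades roughly $\tfrac{2K}{K+1}\eta$ of mass \emph{already at time $2$}, whereas plain greedy persuades nothing at time $2$ out of $(1-\alpha)\delta_{z_{-1}}$, whose descendants $z_{-2},z_0$ both lie below $l$.

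\paragraph{The computation.}
I would evaluate both values through the first-passage generating function $\phi=\tfrac{1-\sqrt{1-\delta^2}}{\delta}$ of the symmetric nearest-neighbour walk, which solves $\delta\phi^2-2\phi+\delta=0$: a mass lying $j$ grid steps below $l$ has greedy value $\delta\phi^{\,j}$, up to an error vanishing as $\epsilon\to0$ (the absorbing ends of $\Gamma$ are then far away), and — sub-threshold mass never being subsidised — the greedy value is linear in such a mass. Careful bookkeeping of the two processes over periods $1,2$ and of the ensuing continuation values gives, with $V_G$ the greedy value and $V(\eta)$ that of the competing policy (so $V(0)=V_G$), that for all sufficiently small $\eta>0$
\begin{equation*}
V(\eta)-V_G=\eta\,\delta\,E_K(\delta),\qquad E_K(\delta)=-1+\frac{2\delta K}{K+1}+\frac{\delta(K-1)\,\phi(\phi^2-3)}{2(K+1)}.
\end{equation*}
Using $\delta\phi^3-3\delta\phi=2\phi^2-4\delta\phi$ (immediate from the quadratic for $\phi$), one checks $E_K(\delta)\to E(\delta)=\frac{2(1-\delta)\bigl(\delta-\sqrt{1-\delta^2}\bigr)}{1+\sqrt{1-\delta^2}}$ as $K\to\infty$, which is strictly positive exactly when $1/\sqrt2<\delta<1$ (the three factors are then all positive). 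Hence, given such a $\delta$, fix $K$ odd and large enough that $E_K(\delta)>0$, then pick $\epsilon'$ small enough that for every $\epsilon$-grid with $\epsilon<\epsilon'$ and $l\in\Gamma$ the points $z_{K\pm1}$ lie on the grid and the boundary corrections to the $\delta\phi^{\,j}$ do not change the sign of $E_K(\delta)$; for such a grid and small $\eta>0$ one gets $V(\eta)>V_G$, so greedy is sub-optimal. (Already the first two terms of $E_K(\delta)$ are positive once $\delta>\tfrac12=1/D(\Gamma)$ and $K$ is large — this is exactly the failure, for $\delta>1/D$, of the two-period optimality of greedy (Lemma \ref{prop:inf}/Corollary \ref{cor:greedy}) flagged in the footnote of Section \ref{sec:proof-idea}.)

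\paragraph{Main obstacle.}
The hard part is the exact accounting of the greedy continuation on the grid: greedy alternates between persuading and waiting according to the position and parity of the surviving mass, and after its time-$2$ move the competing policy is left with a slightly re-weighted sub-threshold measure, so one must compare two such measures through the functions $\delta\phi^{\,j}$ and control the error from the absorbing ends of $\Gamma$ uniformly as $\epsilon\to0$. The clean collapse to the factor $\delta-\sqrt{1-\delta^2}$, and hence to the threshold $1/\sqrt2$, then drops out of the defining quadratic for $\phi$.
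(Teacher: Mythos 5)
Your proposal is correct and takes a genuinely different route from the paper's. The paper places a sliver of mass at the absorbing endpoint $1$ and the rest at $l-2\epsilon$, and compares greedy against the policy that stays fully mute at $t=1$ and is greedy thereafter; it evaluates both sides through $c=v(l-\epsilon)$ using the coupling estimate $v(l-2\epsilon)=c^2\pm O(\epsilon^2)$ and $v(l-3\epsilon)=c^3\pm O(\epsilon^2)$, and the inequality $\tfrac{\delta}{2}(1+c^3)>\tfrac14+\tfrac34 c^2$, with $c=\tfrac{1-\sqrt{1-\delta^2}}{\delta}$, produces the $1/\sqrt2$ threshold. You instead put the high mass at a finite grid point $z_K$ ($K$ odd and large) with the low mass again at $l-2\epsilon$, perturb greedy by $\eta$ at $t=1$, and read off the sign of the derivative at $\eta=0$. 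I checked your first-order accounting: the time-$1$ loss is $\eta\delta$, the time-$2$ gain is $\tfrac{2K\eta}{K+1}\delta^2$ (all of the held-back high mass plus $\tfrac{(2K-2)\eta}{K+1}$ of $z_0$-mass), the continuation shift is $-\tfrac{3(K-1)\eta}{2(K+1)}\delta_{z_0}+\tfrac{(K-1)\eta}{2(K+1)}\delta_{z_{-2}}$ whose value contribution is $\tfrac{\delta^2(K-1)\eta\phi(\phi^2-3)}{2(K+1)}$, and the stated $K\to\infty$ limit $E(\delta)=\tfrac{2(1-\delta)(\delta-\sqrt{1-\delta^2})}{1+\sqrt{1-\delta^2}}$ is correct (using $\delta\phi^2=2\phi-\delta$), positive precisely on $(1/\sqrt2,1)$. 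What each buys: the paper's construction avoids the extra parameter $K$ and the derivative bookkeeping because the mass at $1$ is absorbing, so the continuation collapses to a single $c^3$ term; yours stays on a bounded part of the grid, explicitly traces the failure to the two-period greedy comparison once $\delta>1/D=1/2$ (your first two terms of $E_K$), and, since $\supp X_0$ is in one parity class, also directly refutes any extension of Corollary~\ref{cor:partially} beyond $\delta>1/\sqrt2$. Both arguments treat the $O(\epsilon)$ boundary/absorption corrections — the paper's coupling error, your truncation of the $\delta\phi^j$ formula — informally via a ``small enough $\epsilon$'' clause, so neither is more rigorous on that point; otherwise your route is a valid and instructive alternative.
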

				The proof follows the following steps. We first assume by way of contradiction that the greedy policy is always optimal and denote by $v(j\epsilon)$ the sender's value from the prior $\delta_{j\epsilon}$ for any grid point $j\epsilon$. Using coupling considerations, we show that since the greedy policy is optimal, $v(l-2\epsilon) \approx (v(l-\epsilon))^2$. We use this to approximate $v(l-\epsilon)$ as a function of $\delta$ and $\epsilon$. We then consider a prior that is supported on the points $l-2\epsilon$ and $1$, and has an expectation below $l$. Since using the greedy policy leaves, at the next stage, only mass on the point $l-2\epsilon$, we use the above approximation to estimate the value of the greedy policy. We then show that waiting a single period without using any mass and then pooling the mass obtained at $l-\epsilon$ together with the mass at $1$ gives a better utility for the sender. The formal proof is relegated to Appendix \ref{ap:proof_pro_1}.
				
				In light of Proposition \ref{pro:sub-opt} an interesting question that we leave open is whether the bound in Theorem \ref{theorem:greedy} is tight.


				\subsection{Proof of Theorem \ref{theorem:greedy}}\label{sec:proof-trm2}
				Recall that given two positive measures $\phi, \mu$ on $[0,1]$ we denote $\phi\leq\mu$ if $\phi(B)\leq\mu(B)$ for any Borel measurable set $B\subseteq[0,1]$. 
				We now introduce relations between positive measures on $[0,1]$ that play a fundamental role in our analysis. The first is the well-known first order stochastic domination. A measure $\mu$ first order stochastic dominates (FOSD) a measure $\lambda$ ($\lambda\preceq_F \mu$) if $\mu([x,1])\geq\lambda([x,1])$ for any $x\in[0,1]$. A function $\rho:[0,1]\to\Delta([0,1])$ is called a FOSD kernel  if $\varphi(x)([x,1])=1$. It is easy to see that 
				$\lambda\preceq_F\mu$ iff there exists a FOSD kernel $\varphi$ and a measure $\phi\leq \mu$ such that $\varphi\circ\lambda=\phi$. 
				
				A central notion of our analysis is a notion that we call \emph{domination}. 
				\begin{definition}\label{def:domination}
					Say that a  measure $\mu$ \emph{dominates} a measure $\lambda$ ($\lambda\preceq_D \mu$) if there exists a FOSD kernel $\varphi:[0,1]\to\Delta([0,1])$ and a probability kernel $\rho:[0,1]\to\Delta([0,1])$ such that there exists a measure $\phi\leq\mu$ such that 
					$\rho\circ\varphi\circ\lambda=\phi$.    
				\end{definition}
				It follows directly from the definition that the notion of domination extends first and second-order stochastic domination. Namely, $\lambda\preceq_F \mu \Rightarrow \lambda\preceq_D \mu$ and $\lambda\preceq_B \mu \Rightarrow \lambda\preceq_D \mu$. In the special case where $|\lambda|=|\mu|$, the domination order is called in the \emph{increasing convex order} in the literature; see \citep{shaked2007stochastic}.\footnote{We omit the proof of the observation that our notion of domination is equivalent to the increasing convex order because our proof does not rely on this observation.}

				We start with the first lemma for proving our theorem.
				\begin{lemma}\label{lemma:gt}
					Consider an infinite Blackwell-order preserving martingale $\bX=(X_t)_{t\geq 1}$. Consider two interval policies $(\nu_1,\nu_2)$ and $(\nu'_1,\nu'_2)$ for the two first periods such that $(\nu_1,\nu_2)$ is the greedy policy and $(\nu'_1,\nu'_2)$ is any other policy where $\nu'_2$ is greedy. 
					Then $\mu'_2-\nu'_2\preceq_F\mu_2-\nu_2$.
				\end{lemma}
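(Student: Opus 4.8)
The plan is to collapse the whole comparison onto the single kernel $\sigma_1$, exploiting that at time $1$ the greedy measure $\nu_1$ is the \emph{top} part of $\mu_1$ while $\nu'_1$ is an interval around the threshold $l$ of no larger mass. Write $\alpha=|\nu_1|$ and $\beta=|\nu'_1|$. Since $\nu'_1$ is an interval measure and the greedy measure is the unique interval measure of maximal mass $\alpha$, we have $\beta\le\alpha$; and if $\beta=\alpha$ then $\nu'_1=\nu_1$ a.s.\ by Lemma~\ref{lem:exist}, so $\mu'_2=\mu_2$, $\nu'_2=\nu_2$ and there is nothing to prove. Assume $\beta<\alpha$. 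The goal is $(\mu'_2-\nu'_2)([x,1])\le(\mu_2-\nu_2)([x,1])$ for every $x\in[0,1]$.

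The first thing to establish, and the one genuinely geometric step, is the containment $\nu'_1\le\nu_1$. In the top-quantile representation, with $U$ uniform on $[0,1]$ and $X_1=F^{-1}(U)\sim\mu_1$, the greedy measure corresponds to $\{U>1-\alpha\}$ and $\nu'_1$ to $\{U\in(\underline p_1,\overline p_1]\}$ with $\overline p_1-\underline p_1=\beta$ and $\Ex[X_1\mid U\in(\underline p_1,\overline p_1]]=l=\Ex[X_1\mid U>1-\alpha]$. If $\nu_1=\alpha\,\delta_l$, then any interval measure of conditional mean $l$ and mass $\beta$ equals $\beta\,\delta_l\le\nu_1$ and we are done; otherwise $F^{-1}$ is not a.s.\ constant on $(1-\alpha,1]$, which (being nondecreasing with average $l$) forces the greedy cutoff $F^{-1}(1-\alpha)$ to lie strictly below $l$. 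Then $\overline p_1<1$, for otherwise $\nu'_1$ would be a top quantile of mass $\beta<\alpha$ and its conditional mean would exceed $l$. And $\underline p_1\ge1-\alpha$: if $\underline p_1<1-\alpha$, then on $(\underline p_1,\min\{1-\alpha,\overline p_1\}]$ we have $F^{-1}(U)\le F^{-1}(1-\alpha)<l$, while the rest of $(\underline p_1,\overline p_1]$ (if nonempty) is a lower sub-range of the mean-$l$ range $(1-\alpha,1]$ and so contributes conditional mean $\le l$; hence $\Ex[X_1\mid U\in(\underline p_1,\overline p_1]]<l$, a contradiction. Thus $(\underline p_1,\overline p_1]\subseteq(1-\alpha,1]$, and as both measures are the $F^{-1}(U)$-pushforwards of these sets, $\nu'_1\le\nu_1$.

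Put $\xi_1:=\nu_1-\nu'_1\ge0$, so $|\xi_1|=\alpha-\beta$ and, since $\overline{\nu_1}=\overline{\nu'_1}=l$, also $\overline{\xi_1}=l$. Writing $\mu_1-\nu'_1=(\mu_1-\nu_1)+\xi_1$ and applying $\sigma_1$, which is additive on measures and mean-preserving, we get
\[
\mu'_2=\mu_2+\xi,\qquad\xi:=\sigma_1\circ\xi_1\ge0,\quad|\xi|=\alpha-\beta,\quad\overline{\xi}=l .
\]
So the period-$2$ measure faced by the other policy is simply the greedy one plus a nonnegative mass $\xi$ whose mean is the threshold. (Only the structure of interval and greedy measures and the additivity and mean-preservation of $\sigma_1$ are used here and below.)

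Finally compare the two greedy residuals. Since $\overline{\xi}=l$, the greedy measure of $\xi$ is $\xi$ itself; as $\nu_2$ is greedy for $\mu_2$ and $\nu_2+\xi\le\mu_2+\xi=\mu'_2$ with $\overline{\nu_2+\xi}=l$, maximality of the greedy mass yields $|\nu'_2|\ge|\nu_2|+|\xi|=|\nu_2|+(\alpha-\beta)$. A greedy residual is obtained by deleting a top quantile, so for every $x$,
\[
(\mu_2-\nu_2)([x,1])=\bigl(\mu_2([x,1])-|\nu_2|\bigr)^+,\qquad(\mu'_2-\nu'_2)([x,1])=\bigl(\mu_2([x,1])+\xi([x,1])-|\nu'_2|\bigr)^+ .
\]
It suffices to check $\mu_2([x,1])+\xi([x,1])-|\nu'_2|\le(\mu_2([x,1])-|\nu_2|)^+$. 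If $\mu_2([x,1])\ge|\nu_2|$ this reduces to $\xi([x,1])\le|\nu'_2|-|\nu_2|$, which holds since $\xi([x,1])\le|\xi|\le|\nu'_2|-|\nu_2|$. If $\mu_2([x,1])<|\nu_2|$, then (as $|\nu_2|\le\mu_2([y_2,1])$ for the greedy cutoff $y_2$ of $\mu_2$) we must have $x>y_2$, so $\mu_2([x,1])\le\mu_2((y_2,1])\le|\nu_2|$—the last step because the top quantile contains all mass strictly above the cutoff—and hence $\mu_2([x,1])+\xi([x,1])\le|\nu_2|+|\xi|\le|\nu'_2|$, making the left side $\le0$. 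This gives $\mu'_2-\nu'_2\preceq_F\mu_2-\nu_2$. The main obstacle is the containment $\nu'_1\le\nu_1$ of the second paragraph: it is where the geometry (greedy $=$ top quantile, $\nu'_1$ an interval of no larger mass, greedy cutoff $<l$) enters, and where one must be careful about atoms of $\mu_1$ sitting at the relevant quantiles; everything afterward is bookkeeping with top quantiles.
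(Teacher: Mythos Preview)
Your proof is correct and follows essentially the same route as the paper: first show $\nu'_1\le\nu_1$, deduce $\mu'_2=\mu_2+\xi$ with $\overline{\xi}=l$, and then compare the two greedy residuals at time~2. The only real difference is in the last step: the paper observes directly that the greedy rule $\nu'_2$ first-order stochastically dominates every incentive-compatible rule for $\mu'_2$, in particular $\nu_2+\xi$, and then subtracts both sides from $\mu'_2([x,1])$; you instead write the greedy residuals explicitly as $(\mu([x,1])-|\nu|)^+$ and do a short case split, using only the mass inequality $|\nu'_2|\ge|\nu_2|+|\xi|$. Your version is a bit more computational but self-contained, while the paper's version is shorter once one accepts the FOSD-maximality of the greedy rule. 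You also spell out the containment $\nu'_1\le\nu_1$ in more detail than the paper, which simply asserts it.
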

				\begin{proof}
					Note first that by definition  $\mu_2=\sigma_1\circ(\mu_1-\nu_1)$ and $\mu'_2=\sigma_1\circ(\mu_1-\nu'_1)$.
					Since $\nu_1$ is the greedy policy and since $\nu'_1$ is an interval policy it follows that $\nu'_1\leq \nu_1$. Therefore, we can write $\nu_1=\nu'_1+\psi$ for some positive measure $\psi$. Note further that since $\overline{\nu}_1=\overline{\nu'}_1=l$ we must also have that $\overline{\psi}=l$.
					Therefore, since
					\begin{align*}
						\mu_2 = \sigma_1 \circ (\mu_1 - \nu_1) = \sigma_1 \circ (\mu_1 - \nu'_1-\psi) = \mu'_2 - \sigma_1 \circ \psi,
					\end{align*}
					we infer that that $\mu'_2= \mu_2+\phi$ for some positive measure $\phi$ satisfying $\overline{\phi}=l$ (by setting $\phi = \sigma_1 \circ \psi$). Next, by introducing $\lambda = \mu_2 - \nu_2$ we may further decompose $\mu'_2$ to 
					\begin{align}\label{Eq. decom. mu'}
						\mu'_2 = \lambda + \nu_2  + \phi.    
					\end{align}
					As $\overline{\nu_2} = l$ and $\overline{\phi}=l$, it follows easily from the linearity of the integral operator that $\overline{\nu_2+\phi}=l$ as well. Therefore, using decomposition \eqref{Eq. decom. mu'} we infer that $\nu_2+\phi$ defines a stopping rule with respect to $\mu'_2$. As $\nu'_2$ is the greedy policy with respect to $\mu'_2$,  $\nu'_2 ([x,1]) \geq (\nu_2+\phi)([x,1])$ for every $x \in [0,1]$. This is because that greedy stopping rule is maximal among all compatible stopping rules in first-order stochastic domination partial order relation. Thus, for every $x \in [0,1]$, \begin{align*}
						&(\mu'_2-\nu'_2)([x,1])=\mu'_2([x,1])-\nu'_2([x,1])\leq \\ 
						& \mu'_2([x,1])-(\nu_2+\phi)([x,1])=(\mu_2-\nu_2)([x,1]),
					\end{align*}
					where the last equality follows from $\mu'_2= \mu_2+\phi$.
					This indeed shows that $\mu_2-\nu_2$ first-order stochastically dominates  $\mu'_2-\nu'_2$, as desired.
				\end{proof}

				
				
				An important property of the domination relation defined above is that it is closed under addition. That is:
				\begin{lemma}\label{lem:aditive}
					Assume that $\psi\preceq_D \mu$ and that $\psi'\preceq_D \mu'$ for some positive measure $\mu,\mu',\psi,\psi'$ on a discrete grid $\Gamma$. Then it holds that  $\psi+\psi'\preceq_D\mu+\mu'.$ 
				\end{lemma}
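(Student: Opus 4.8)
The plan is to exhibit directly the data required by Definition~\ref{def:domination} for the pair $(\psi+\psi',\mu+\mu')$, by \emph{mixing} the two transport plans we are handed. From $\psi\preceq_D\mu$ and $\psi'\preceq_D\mu'$ fix FOSD kernels $\varphi_1,\varphi_2$, probability kernels $\rho_1,\rho_2$, and measures $\phi_1\leq\mu$, $\phi_2\leq\mu'$ with $\rho_1\circ\varphi_1\circ\psi=\phi_1$ and $\rho_2\circ\varphi_2\circ\psi'=\phi_2$. The witness for $\psi+\psi'\preceq_D\mu+\mu'$ will simply be $\phi:=\phi_1+\phi_2\leq\mu+\mu'$, so the task reduces to producing a \emph{single} FOSD kernel $\varphi$ and a \emph{single} probability kernel $\rho$ with $\rho\circ\varphi\circ(\psi+\psi')=\phi$. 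The point where the two given plans cannot be applied separately is that at a grid point carrying mass of both $\psi$ and $\psi'$ a kernel must do one thing, not two; mixing resolves this.

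First I would define $\varphi$ on $\Gamma$ by $\varphi(x):=p_x\,\varphi_1(x)+(1-p_x)\,\varphi_2(x)$, where $p_x:=\psi(\{x\})/\bigl(\psi(\{x\})+\psi'(\{x\})\bigr)$ when the denominator is positive and $\varphi(x):=\delta_x$ otherwise (a branch that carries no mass of $\psi+\psi'$). Being a convex combination of two probability measures each supported on $[x,1]$, $\varphi(x)$ is again such a measure, so $\varphi$ is a FOSD kernel. Using the pushforward identity $\varphi\circ(\psi+\psi')(B)=\sum_{x\in\Gamma}\bigl(\psi(\{x\})+\psi'(\{x\})\bigr)\varphi(x)(B)$ and the fact that $p_x\bigl(\psi(\{x\})+\psi'(\{x\})\bigr)=\psi(\{x\})$, each term collapses to $\psi(\{x\})\varphi_1(x)(B)+\psi'(\{x\})\varphi_2(x)(B)$, and summing gives $\varphi\circ(\psi+\psi')=(\varphi_1\circ\psi)+(\varphi_2\circ\psi')=:\eta_1+\eta_2$.

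Next I would run the same construction one level up. Put $q_y:=\eta_1(\{y\})/\bigl(\eta_1(\{y\})+\eta_2(\{y\})\bigr)$ (and $\rho(y):=\delta_y$ when the denominator vanishes), and set $\rho(y):=q_y\,\rho_1(y)+(1-q_y)\,\rho_2(y)$. The crucial observation is that $\rho_1(y)$ and $\rho_2(y)$ both have mean $y$, so their convex combination $\rho(y)$ does too; hence $\rho$ is a genuine mean-preserving probability kernel. The identical term-by-term computation yields $\rho\circ(\eta_1+\eta_2)=(\rho_1\circ\eta_1)+(\rho_2\circ\eta_2)=\phi_1+\phi_2$, whence $\rho\circ\varphi\circ(\psi+\psi')=\phi_1+\phi_2\leq\mu+\mu'$, which is exactly $\psi+\psi'\preceq_D\mu+\mu'$.

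There is no genuinely hard step here; the one thing I would check carefully — and the only possible obstacle — is that both closure properties survive mixing \emph{with these particular weights}: that $\supp\varphi(x)\subseteq[x,1]$ is preserved (immediate, being inherited by convex combinations) and that $\Ex_{Y\sim\rho(y)}[Y]=y$ is preserved (this genuinely uses that $\rho_1,\rho_2$ have the \emph{same} mean $y$; a convex combination of kernels with different means would break it), and that the weights $p_x,q_y$ are precisely those that make $\varphi\circ(\psi+\psi')$ and $\rho\circ(\eta_1+\eta_2)$ split additively. Working on the discrete grid $\Gamma$ makes all of this routine, as every measure in sight is a countable sum of atoms and the weights are ordinary ratios of masses, with points of zero mass handled by an arbitrary (e.g.\ Dirac) choice; I would remark in passing that only discreteness is being exploited and the same proof works for arbitrary positive finite measures on $[0,1]$ via Radon--Nikodym derivatives.
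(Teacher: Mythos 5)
Your proof is correct and is essentially identical to the paper's: both mix the two FOSD kernels with weights $\psi(\{x\})/(\psi+\psi')(\{x\})$ and likewise mix the two probability kernels with weights proportional to the intermediate pushforwards, then verify the pushforward identities term-by-term on atoms. The extra care you take in explicitly checking that the convex combinations remain FOSD / mean-preserving, and your closing remark that the argument extends beyond discrete grids via Radon--Nikodym derivatives, are welcome but do not change the approach.
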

				\begin{proof}
					By definition there exists $\varphi,\rho$ and $\varphi',\rho'$ such that $\rho\circ\varphi\circ\psi:=\phi\leq\mu$ and $\rho'\circ\varphi'\circ\psi':=\phi'\leq\mu'$. We first show that there exists a FOSD kernel $\tilde\varphi$ satisfying $\tilde \varphi\circ(\psi+\psi')=\varphi\circ\psi+\varphi'\circ\psi'$. Define the FOSD kernel $\tilde\varphi$ as follows: $\tilde \varphi(x)=\frac{\psi(x)}{(\psi+\psi')(x)}\varphi(x)+\frac{\psi'(x)}{(\psi+\psi')(x)}\varphi'(x)$ if $(\psi+\psi')(x)>0$ and  $\tilde \varphi(x)=\delta_x$ otherwise.
					
					We have that 
					\begin{align*}
						& \tilde\varphi\circ(\psi+\psi')(y)=\sum_{x\in \Gamma ,\ (\psi+\psi')(x)>0}(\psi+\psi')(x)\tilde\varphi(x)(y)=\\ &\sum_{x \in \Gamma,\ (\psi+\psi')(x)>0}(\psi+\psi')(x)\Big(\frac{\psi(x)}{(\psi+\psi')(x)}\varphi(x)(y)+\frac{\psi'(x)}{(\psi+\psi')(x)}\varphi'(x)(y)\Big)=\\ &\sum_{x \in \Gamma ,\ (\psi+\psi')(x)>0}\psi(x)\varphi(x)(y)+\psi'(x)\varphi'(x)(y)=\varphi\circ\psi(y)+\varphi'\circ\psi'(y). 
					\end{align*}
					Therefore $\tilde \varphi\circ(\psi+\psi')=\varphi\circ\psi+\varphi'\circ\psi'$.
					
					We next show that $\tilde\rho\circ(\varphi\circ\psi+\varphi'\circ\psi')=\rho\circ\varphi\circ\psi+\rho'\circ\varphi'\circ\psi$ for some probability kernel $\tilde\rho$.
					Similarly, define $$\tilde \rho(x)=\frac{\varphi\circ\psi(x)}{(\varphi\circ\psi+\varphi'\circ\psi')(x)}\rho(x)+\frac{\varphi'\circ\psi'(x)}{(\varphi\circ\psi+\varphi'\circ\psi')(x)}\rho'(x)$$ if $(\varphi\circ\psi+\varphi'\circ\psi')(x)>0$ and  $\tilde \rho(x)=\delta_x$ otherwise.
					A similar calculation as above shows that $\tilde{\rho} \circ (\varphi \circ \psi + \varphi' \circ \psi') = \rho \circ \varphi \circ \psi + \rho' \circ \varphi' \circ \psi'.
					$
					Overall we have shown that $$\tilde{\rho} \circ\tilde{ \varphi}\circ ( \psi + \psi') = \rho \circ \varphi \circ \psi + \rho' \circ \varphi' \circ \psi'=\phi+\phi'\leq\mu+\mu',
					$$
					as desired.
				\end{proof}
				The next lemma shows that the random walk kernel also preserves domination. 
				\begin{lemma}\label{lemma:dominPreserv}
					Let $\sigma$ be a random walk kernel and $\mu,\psi\in\Delta(\Gamma)$ be two positive measures on $\Gamma$ such that $\psi\preceq_D \mu$. Then  $\sigma\circ\psi\preceq_D \sigma\circ\mu$. 
				\end{lemma}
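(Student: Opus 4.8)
The plan is to factor the relation $\psi\preceq_D\mu$ into three elementary layers and to push the random-walk kernel $\sigma$ through each of them. Unwinding the definition of domination gives a FOSD kernel $\varphi$, a probability kernel $\rho$ and a measure $\phi\le\mu$ with $\rho\circ\varphi\circ\psi=\phi$; writing $\eta:=\varphi\circ\psi$, this is the chain $\psi\preceq_F\eta\preceq_B\phi\le\mu$, the middle relation because $\rho$ preserves total mass and mean. I would then prove: (i) $\sigma\circ\psi\preceq_D\sigma\circ\eta$; (ii) $\sigma\circ\eta\preceq_B\sigma\circ\phi$; (iii) $\sigma\circ\phi\le\sigma\circ\mu$. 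Of these, (iii) is just monotonicity of the pushforward, and (ii) follows from the fact that a martingale induced by a random walk on a grid is Blackwell order preserving, once one recalls that preservation of $\preceq_B$ passes from probability measures to positive measures of equal mass (and $\eta,\phi$ have equal mass and mean). The three relations are then glued together while staying inside the ``FOSD kernel, then probability kernel'' form that defines $\preceq_D$: if a measure $\phi_1\le\sigma\circ\eta$ with $\rho_1\circ\varphi_1\circ(\sigma\circ\psi)=\phi_1$ witnesses (i) and $\rho_2$ is the probability kernel with $\rho_2\circ(\sigma\circ\eta)=\sigma\circ\phi$ from (ii), then $(\rho_2\circ\rho_1)\circ\varphi_1\circ(\sigma\circ\psi)=\rho_2\circ\phi_1\le\rho_2\circ(\sigma\circ\eta)=\sigma\circ\phi\le\sigma\circ\mu$, and $\rho_2\circ\rho_1$ is again a probability kernel, so this is exactly $\sigma\circ\psi\preceq_D\sigma\circ\mu$.

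The substance of the argument is (i): $\sigma$ must turn an FOSD pair into a domination pair, and it cannot do better than that, because $\sigma$ does not preserve first-order stochastic dominance in general (as observed in Section~\ref{sec:proof-idea}). I would prove (i) by linearization. Writing $\eta=\theta\circ\psi$ for a grid-valued FOSD kernel $\theta$, one has
\[
\sigma\circ\psi=\sum_i\psi(z_i)\sum_{j\ge i}\theta(z_i)(z_j)\,\sigma(z_i),\qquad \sigma\circ\eta=\sum_i\psi(z_i)\sum_{j\ge i}\theta(z_i)(z_j)\,\sigma(z_j),
\]
so comparing term by term, the additivity of $\preceq_D$ (Lemma~\ref{lem:aditive}, applied termwise and, for an infinite grid, extended to countable families exactly as in its proof) together with its invariance under nonnegative scaling reduces (i) to the single claim that $\sigma(z_i)\preceq_D\sigma(z_j)$ whenever $z_i\le z_j$.

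To prove this claim I would interpose the Dirac mass $\delta_{z_j}$. When $z_i$ is an interior grid point, $\sigma(z_i)$ is supported on the two neighbours $z_{i-1},z_{i+1}$, so an elementary check gives $\sigma(z_i)\preceq_F\delta_{z_{i+1}}\preceq_F\delta_{z_j}$; and since $\sigma(z_j)$ is a measure with mean $z_j$, the Dirac mass at its mean satisfies $\delta_{z_j}\preceq_B\sigma(z_j)$. Composing an FOSD step with a Blackwell step then yields $\sigma(z_i)\preceq_D\sigma(z_j)$; the case $z_i=z_j$ and the boundary grid points (where $\sigma$ acts as the identity) are trivial. I expect this to be the main obstacle: one must locate a single intermediate measure — here a Dirac mass — that sits simultaneously FOSD-above ``one random-walk step up from $z_i$'' and Blackwell-below ``one random-walk step from $z_j$'', and the existence of such a measure is precisely the feature special to random walks on a grid (for a general Blackwell order preserving kernel it fails, which is why Theorem~\ref{theorem:greedy} is confined to random walks).

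Two remarks on the bookkeeping. First, the auxiliary measure $\eta=\varphi\circ\psi$ need not be supported on $\Gamma$; this is handled by first replacing the witnessing kernels $\varphi,\rho$ of $\psi\preceq_D\mu$ by grid-valued ones (which can be arranged, since in the intended applications the domination is obtained by composing FOSD and Blackwell relations between grid measures), after which every measure in the argument lives on $\Gamma$. Second, an alternative to (i) that avoids this point altogether: since $\preceq_D$ between equal-mass measures is the increasing convex order, it suffices that $x\mapsto\int g\,\mathrm{d}\sigma(x)$ be non-decreasing and convex along the grid for every non-decreasing convex $g$, and the non-decreasing part — the only non-obvious one — is a short computation using the precise weights in $\sigma(z_i)=\tfrac{z_{i+1}-z_i}{z_{i+1}-z_{i-1}}\delta_{z_{i-1}}+\tfrac{z_i-z_{i-1}}{z_{i+1}-z_{i-1}}\delta_{z_{i+1}}$.
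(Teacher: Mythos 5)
Your argument is correct and rests on exactly the same ingredients as the paper's proof: the reduction to Dirac masses via the additivity Lemma~\ref{lem:aditive}, the observation that a random-walk step from $z_i$ is supported weakly below any grid point $z_j>z_i$ (your $\sigma(z_i)\preceq_F\delta_{z_j}$, the paper's ``realizations of $\sigma\circ\delta_x$ lie weakly below $y$''), and the Blackwell-preserving property of $\sigma$. Where you differ is purely organizational: the paper proves the single combined claim $\sigma\circ\delta_x\preceq_D\sigma\circ\rho\circ\delta_y$ for $x\leq y$ (splitting into the cases $x=y$ and $x<y$, and using $\tilde\rho=\sigma\circ\rho$), whereas you fully factor $\psi\preceq_D\mu$ into an FOSD layer, a Blackwell layer, and a $\leq$ layer, push $\sigma$ through each, and reassemble the three witnesses into one $\preceq_D$ witness by composing the two probability kernels. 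Your gluing step is valid since $\rho_2\circ\rho_1$ is again a probability kernel and the FOSD-then-kernel shape of Definition~\ref{def:domination} is preserved. The one issue you flag — that the witnessing kernels for $\psi\preceq_D\mu$ need not be grid-valued, so $\eta=\varphi\circ\psi$ may leave $\Gamma$ — is equally present in the paper's proof (which tacitly assumes the support of $\varphi(x)$ lies in $\Gamma$ so that $\sigma\circ\rho\circ\delta_y$ makes sense); it is a genuine but repairable technicality, and your suggestion to replace $\varphi,\rho$ by grid-valued kernels is the right fix.
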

				
				\begin{proof}
					Let  $\varphi$ and $\rho$ such that $\rho\circ\varphi\circ\psi \leq\mu$. 
					We first show that if $y\in \Gamma$ and $x\leq y$ then $\sigma\circ\rho\circ\delta_y$ dominates $\sigma\circ\delta_{x}.$

					Assume $x=y$ then we need to show that $\sigma\circ\rho\circ\delta_x$ dominates $\sigma\circ\delta_{x}.$
					This follows since $\rho\circ\delta_x$ second-order stochastically dominates $\delta_x$ and $\sigma$ is  Blackwell order-preserving and so $\sigma\circ\rho\circ\delta_x$ second-order stochastically dominates $\sigma\circ\delta_{x}.$ Since domination generalizes second-order stochastic domination, the result follows.

					Consider the case where $x<y$. We will show that there exists probability kernel $\tilde\rho$ and FOSD kernel $\tilde\varphi$ such that 
					$\tilde\rho\circ\tilde\varphi\circ(\sigma\circ\delta_x)=\sigma\circ\rho\circ\delta_y.$ Define $\tilde\varphi(x')=\delta_y$ for all $x'<y$ and $\tilde\varphi(x')=\delta_{x'}$ for $x'\geq y$. 
					Let $\tilde\rho=\sigma\circ\rho$. 
					Note that since $x<y$ it holds that the realizations of $\sigma\circ\delta_x$ lie weakly below $y$. Therefore, $\tilde\varphi\circ(\sigma\circ\delta_x)=\delta_y$.
					We have that $\tilde\rho\circ\tilde\varphi\circ(\sigma\circ\delta_x)=\tilde\rho\circ\delta_y=\sigma\circ\rho\circ\delta_y$.
						Hence  $\sigma\circ\delta_{x}\preceq_D\sigma\circ\rho\circ\delta_y.$ 
						
						We are now in position to prove the Lemma. Since $\varphi(x)([x,1])=1$ there must be $\beta_y\geq 0$ for any $y\geq x$ such that $\sum_{y\geq x}\beta_y=1$ and  $\varphi(x)=\sum_{y\geq x}\beta_y\delta_y$. It now follows that $\sigma\circ\rho\circ\varphi\circ\delta_x$ dominates $\sigma\circ\delta_x$. To see this note that $\sigma\circ\rho\circ\varphi\circ\delta_x=\sum_{y\geq x}\beta_y\sigma\circ\rho\circ\delta_y$
						and $\sigma\circ\rho\circ\delta_y$ dominates $\sigma\circ\delta_x$ for any $y\geq x$. Since we can write $\sigma\circ\delta_x=\sum_{y\geq x}\beta_y\sigma\circ\delta_x$ Lemma \ref{lem:aditive} implies that $\sigma\circ\delta_x\preceq_D\sigma\circ\rho\circ\varphi\circ\delta_x$.
						
						Let $\psi=\sum_{x}\alpha_x\delta_x$. Note that $\sigma\circ\psi=\sum_{x}\alpha_x\sigma\circ\delta_x$.
						Let $\phi\leq\mu$ such that $\rho\circ\varphi\circ\psi=\phi$. It holds that 
						$\sigma\circ\phi=\sum_x\alpha_x\sigma\circ\rho\circ\varphi\circ\delta_x$.
						Since $\sigma\circ\delta_x\preceq_D\sigma\circ\rho\circ\varphi\circ\delta_x$ it follows that from Lemma \ref{lem:aditive} that 
						$\sigma\circ\psi\preceq_D\sigma\circ\phi$. Since $\sigma\circ\phi\leq \sigma\circ\mu$ it also follows that 
						$\sigma\circ\psi\preceq_D\sigma\circ\mu$ as desired. 
						

						
					\end{proof}
					The following lemma shows that domination plays in favor of the sender. 
					\begin{lemma}\label{lemma:domination}
						Let $\bX=(X_t)_{t=1,\ldots}$ and $\bX'=(X'_t)_{t=1,\ldots}$ be two random walks on $\Gamma$ with the same kernel $\sigma$. Assuming that $\mu_1$ dominates $\mu'_1$, then the optimal policy under $\bX$ yields a higher payoff for the sender than the optimal policy under $\bX'$.   
					\end{lemma}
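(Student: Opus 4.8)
The plan is to prove a stronger, self‑contained monotonicity statement and then read off the lemma. For $s\ge 1$ and $n\ge 0$, let $V^{(s)}_n(\mu)$ denote the value of the sub‑problem of Lemma~\ref{lem:altern} in which the sender starts from the positive measure $\mu$ as her time‑$s$ belief distribution, has the $n$ remaining periods $s,s+1,\dots,s+n-1$, and earns weights $w_s,w_{s+1},\dots$; throughout, $\sigma$ is the (common) random‑walk kernel of $\bX$ and $\bX'$. By the principle of optimality (and since the supremum is a maximum for finite horizons) it obeys
\[ V^{(s)}_n(\mu)=\max_{\nu\le\mu,\ \overline{\nu}\ge l}\Big(w_s\,|\nu|+V^{(s+1)}_{n-1}\big(\sigma\circ(\mu-\nu)\big)\Big),\qquad V^{(s)}_0\equiv 0 . \]
I would show, by induction on $n$, that for every $s$ the map $\mu\mapsto V^{(s)}_n(\mu)$ is monotone with respect to $\preceq_D$, i.e.\ $\mu'\preceq_D\mu$ forces $V^{(s)}_n(\mu')\le V^{(s)}_n(\mu)$.

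The engine of the induction is a one–period mimicking step. Suppose $\mu'\preceq_D\mu$ and fix witnesses of this: a FOSD kernel $\varphi$ and a probability kernel $\rho$ with $\phi:=\rho\circ\varphi\circ\mu'\le\mu$. Given any $\nu'\le\mu'$ with $\overline{\nu'}\ge l$ (a feasible first–period move for the $\mu'$–sender), put $\nu:=\rho\circ\varphi\circ\nu'$. I would then verify the elementary kernel facts: (a) $\nu\le\rho\circ\varphi\circ\mu'=\phi\le\mu$, by monotonicity and linearity of $\varphi,\rho$ on positive measures, so $\nu$ is admissible for the $\mu$–sender; (b) $|\nu|=|\nu'|$, by mass preservation; (c) $\overline{\nu}\ge\overline{\nu'}\ge l$, since a FOSD kernel cannot decrease the mean and a probability kernel preserves it; and, crucially, (d) $\rho\circ\varphi\circ(\mu'-\nu')=\phi-\nu\le\mu-\nu$, so the \emph{same} pair $(\varphi,\rho)$ witnesses $\mu'-\nu'\preceq_D\mu-\nu$ — no transitivity of $\preceq_D$ is invoked. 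Feeding (d) into Lemma~\ref{lemma:dominPreserv} (the one place the random‑walk structure is used) upgrades it to $\sigma\circ(\mu'-\nu')\preceq_D\sigma\circ(\mu-\nu)$.

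Given the mimicking step the inductive step is immediate. For $n=0$ there is nothing to prove. For $n\ge 1$, let $\nu'$ attain the maximum defining $V^{(s)}_n(\mu')$, so that $V^{(s)}_n(\mu')=w_s|\nu'|+V^{(s+1)}_{n-1}(\sigma\circ(\mu'-\nu'))$, and let $\nu$ be the measure produced above. Admissibility of $\nu$, the equality $|\nu|=|\nu'|$, the relation $\sigma\circ(\mu'-\nu')\preceq_D\sigma\circ(\mu-\nu)$, and the induction hypothesis at time $s+1$ together give
\[ V^{(s)}_n(\mu')=w_s|\nu'|+V^{(s+1)}_{n-1}\big(\sigma\circ(\mu'-\nu')\big)\le w_s|\nu|+V^{(s+1)}_{n-1}\big(\sigma\circ(\mu-\nu)\big)\le V^{(s)}_n(\mu). \]
For $T=\infty$, note that $V^{(1)}_n\le V^{(1)}_{n+1}$ (an extra period may always be left unused) and that the total stopped mass $\sum_t|\nu_t|$ of any policy is at most $|\mu_1|$; hence $\sum_{t>n}|\nu_t|\,w_t\le w_1\sum_{t>n}|\nu_t|\to 0$, so finite truncations of a near‑optimal infinite policy lose a vanishing amount and $V^{(1)}_n(\mu)\uparrow V^{(1)}_\infty(\mu)$. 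Monotonicity therefore passes to the limit, and specializing to $\mu=\mu_1\succeq_D\mu'_1=\mu'$ (and the relevant horizon) gives the lemma.

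The step most in need of care is (d), together with checking that $\nu=\rho\circ\varphi\circ\nu'$ is a bona fide stopping measure: the linearity $\rho\circ\varphi\circ(\mu'-\nu')=\rho\circ\varphi\circ\mu'-\rho\circ\varphi\circ\nu'$ is legitimate only because $\nu'\le\mu'$ makes $\mu'-\nu'$ a genuine positive measure, and one must confirm in (c) that the mean of $\nu$ does not slip below $l$. The remaining ingredients — the finite‑horizon Bellman recursion, the kernel bookkeeping (a)–(c), the single invocation of Lemma~\ref{lemma:dominPreserv}, and the truncation argument for $T=\infty$ — are routine.
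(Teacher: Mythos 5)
Your proof is correct and takes essentially the same route as the paper's: both mimic a (near-)optimal policy for the dominated measure by applying the witness kernels $\rho\circ\varphi$ to each $\nu'_t$, verify that the pushforward is feasible with the same mass and a mean $\geq l$, and invoke Lemma~\ref{lemma:dominPreserv} to propagate domination to the next period. The only cosmetic difference is that you phrase the induction via Bellman value functions and spell out the $T=\infty$ truncation, whereas the paper constructs the mimicking policy directly and concludes without the explicit limit argument.
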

					\begin{proof}
						The proof uses similar considerations as the proof of Theorem \ref{theo:main}. Let $\{\nu'_t,\mu'_t\}$ be an optimal interval strategy of the sender for $\bX'$. We will show that if $\mu_t$ dominates $\mu'_t$ then there exists $\nu_t$ such that $|\nu_t|=|\nu'_t|$, $\overline{\nu}_t\geq l$, and $\mu_t-\nu_t$ dominates $\mu'_t-\nu'_t$. By definition of domination there exists $\varphi$, $\rho$, and $\phi\leq \mu_t$ such that $\varphi\circ\rho\circ\mu'_t=\phi\leq\mu_t$. Consider $\nu_t=\varphi\circ\rho\circ\nu'_t$. It holds 
						that $\nu_t\leq\phi\leq \mu_t$. Since $\varphi$ is a FOSD kernel and $\rho$ is a probability kernel, it holds that $\overline{\nu}_t\geq l$ and $|\nu_t|=|\nu'_t|$.
						Finally, we have that by construction that $$\varphi\circ\rho\circ(\mu'_t-\nu'_t)=\varphi\circ\rho\circ\mu'_t-\varphi\circ\rho\circ\nu'_t=\phi-\nu_t\leq \mu_t-\nu_t.$$
						Therefore $\mu_t-\nu_t$ dominates $\mu'_t-\nu'_t$ as desired. It follows from Lemma \ref{lemma:dominPreserv} that $\mu_{t+1}=\sigma\circ(\mu_t-\nu_t)$ dominates $\mu'_{t+1}=\sigma\circ(\mu'_t-\nu'_t)$. Therefore we can construct a policy $\{\mu_t,\nu_t\}_{t=1,\ldots}$ such that $|\nu_t|=|\nu'_t|$ and $\overline{\nu}_t\geq l$. This concludes the proof of the lemma.
					\end{proof}    
					
					We call a grid point $z_j$ for $a<j<b$ odd (even) if $j$ is odd (even). Note that for $j\in\{a,b\}$ such that $j\neq\infty,-\infty$ the point $z_j$ will be defined to be both odd and even.
					We denote by $\Delta^*(\Gamma)$ the set of positive finite measures that are supported on  either the even or odd points of the grid. Note that if $\mu\in\Delta^*(\Gamma)$, then $\sigma\circ\mu\in\Delta^*(\Gamma)$.  
					
					The following lemma provides an upper bound on how much can the sender improve her persuasion mass by waiting one additional period. For any measure $\mu\in\Delta(\Gamma)$ let $\nu$ be the greedy policy with respect to $\mu$, and set  $g(\mu):=|\nu|$.
					
					\begin{lemma}\label{prop:inf}
						For any measure $\mu\in \Delta^*(\Gamma)$ and every $\delta\leq\frac{1}{D}$ (see Equation \eqref{eq:D} for the definition of $D$) we have
						$$ \delta g(\sigma\circ\mu) \leq g(\mu)+\delta g(\sigma\circ(\mu-\nu)).$$
					\end{lemma}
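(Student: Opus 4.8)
The plan is to recast the greedy mass $g(\cdot)$ through linear‑programming duality and then track the effect of a single random‑walk step on the greedy measure, with the single‑step ratio $D$ absorbing exactly one grid‑step of slack.

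\emph{Step 1 (a dual formula for $g$).} First I would prove that for every finite positive measure $\rho$ on $[0,1]$,
$$g(\rho)=\min_{\lambda\ge 0}\int_{[0,1]}\bigl(1+\lambda(x-l)\bigr)^{+}\,\mathrm{d}\rho(x),$$
the minimum being attained whenever $\rho$ carries mass at or above $l$. Write $f_\lambda(x):=(1+\lambda(x-l))^{+}$, a nondecreasing convex function, identically $0$ on $[0,l-1/\lambda]$ and affine on $[l-1/\lambda,1]$. The bound ``$\le$'' is immediate: for any feasible $\nu\le\rho$ with $\overline{\nu}\ge l$ one has $|\nu|=\int 1\,\mathrm{d}\nu\le\int(1+\lambda(x-l))\,\mathrm{d}\nu\le\int f_\lambda\,\mathrm{d}\rho$. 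For ``$\ge$'' and attainment I would exhibit a primal--dual optimal pair: the greedy measure $\nu_\rho$ together with $\lambda_\rho=1/(l-z^{*})$, where $z^{*}$ is the largest grid point below $l$ on which $\nu_\rho$ pools; complementary slackness holds because $f_{\lambda_\rho}$ vanishes precisely on the part of $\rho$ not used by $\nu_\rho$, is affine where it is used, and the constraint $\overline{\nu_\rho}=l$ is tight. (Checking that greedy pooling does not overshoot $z^{*}$, so that $\lambda_\rho$ is genuinely optimal, reduces after inserting the random‑walk weights to elementary inequalities of the form $0\le (l-z_0)(z_1-z_{-1})$.)

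\emph{Step 2 (reduction).} Let $\nu$ be the greedy measure of $\mu$. If $\overline{\mu}\ge l$ then $\nu=\mu$ and the claim is the trivial $\delta g(\sigma\circ\mu)\le\delta|\mu|\le|\mu|=g(\mu)$, so assume $\overline{\mu}<l$, hence $\overline{\mu-\nu}<l$. Pick $\lambda^{*}\ge0$ with $\int f_{\lambda^{*}}\,\mathrm{d}(\sigma\circ(\mu-\nu))=g(\sigma\circ(\mu-\nu))$ (taking $\lambda^{*}=1/(l-c)$, with $c$ the top of $\supp(\sigma\circ(\mu-\nu))$, when that measure has no mass $\ge l$, in which case the integral and $g$ are both $0$). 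Since $\sigma\circ\mu=\sigma\circ(\mu-\nu)+\sigma\circ\nu$, using $\lambda^{*}$ as a sub‑optimal test point for $\sigma\circ\mu$ gives
$$g(\sigma\circ\mu)\le\int f_{\lambda^{*}}\,\mathrm{d}(\sigma\circ\mu)=\int f_{\lambda^{*}}\,\mathrm{d}(\sigma\circ(\mu-\nu))+\int f_{\lambda^{*}}\,\mathrm{d}(\sigma\circ\nu)=g(\sigma\circ(\mu-\nu))+\int f_{\lambda^{*}}\,\mathrm{d}(\sigma\circ\nu).$$
Multiplying by $\delta$, the lemma reduces to the single inequality $\delta\int f_{\lambda^{*}}\,\mathrm{d}(\sigma\circ\nu)\le|\nu|$.

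\emph{Step 3 (no straddling; evaluation; closing).} This is where $\mu\in\Delta^{*}(\Gamma)$ and $\delta\le 1/D$ enter. Because $\nu$ strips \emph{all} mass of $\mu$ at or above $l$, the residual $\mu-\nu$ sits at grid points $\le z_0$, so the largest grid point strictly below $l$ in $\supp(\sigma\circ(\mu-\nu))$ — call it $c=z_j$, with $j\le 0$ — is the kink of $f_{\lambda^{*}}$, i.e.\ $\lambda^{*}=1/(l-z_j)$. Since $\sigma$ flips the parity class, $c$ lies in the parity class opposite to that of $\mu$, hence $c\notin\supp(\nu)$; moreover the structure of the greedy measure shows that $\supp(\nu)$ contains at most one atom strictly below $c$, namely the grid point $z_{j-1}$ immediately below $c$, all other atoms lying at $\ge z_j$. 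As $\sigma$ moves by one grid step, for each $x\in\supp(\nu)$ the measure $\sigma(x)$ is supported either in $[0,c]$ (when $x=z_{j-1}$) or in $[c,1]$ (otherwise); on either interval $f_{\lambda^{*}}$ is affine, so the martingale property of $\sigma$ gives $\int f_{\lambda^{*}}\,\mathrm{d}\sigma(x)=f_{\lambda^{*}}(x)$, whence $\int f_{\lambda^{*}}\,\mathrm{d}(\sigma\circ\nu)=\int f_{\lambda^{*}}\,\mathrm{d}\nu$. Writing $\theta=\nu(\{z_{j-1}\})$ and using $\overline{\nu}=l$,
$$\int f_{\lambda^{*}}\,\mathrm{d}\nu=\nu([c,1])+\lambda^{*}\!\!\int_{[c,1]}(x-l)\,\mathrm{d}\nu=|\nu|-\theta+\lambda^{*}\theta(l-z_{j-1})=|\nu|+\theta\cdot\frac{z_j-z_{j-1}}{l-z_j}.$$
Finally, $z_{j-1},z_j$ are consecutive grid points with $j\le0$, so $\dfrac{l-z_{j-1}}{l-z_j}\le D\le\dfrac1\delta$, i.e.\ $\delta\le\dfrac{l-z_j}{l-z_{j-1}}$; therefore $\delta\theta\dfrac{z_j-z_{j-1}}{l-z_j}\le\theta\dfrac{z_j-z_{j-1}}{l-z_{j-1}}$ while $(1-\delta)|\nu|\ge|\nu|\dfrac{z_j-z_{j-1}}{l-z_{j-1}}\ge\theta\dfrac{z_j-z_{j-1}}{l-z_{j-1}}$, and adding these gives $\delta\int f_{\lambda^{*}}\,\mathrm{d}\nu\le|\nu|$, as required.

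\emph{Main obstacle.} The conceptual core — the dual formula, the ``plug the residual's dual witness into $\sigma\circ\mu$'' trick, and the fact that only one grid‑step of slack survives, which $\delta\le1/D$ exactly covers — is short. The real work is the bookkeeping in Step 3: identifying the kink $c$ of $\lambda^{*}$ as the top point of $\supp(\sigma\circ(\mu-\nu))$ below $l$, and verifying, case by case on the parity of $\supp(\mu)$ and on how deep the greedy pooling reaches (including the absorbing endpoints of $\Gamma$, where $\sigma$ is the identity), that $\nu$ has at most one atom below $c$ and that it is the grid point adjacent to $c$, so that no random‑walk increment issued from $\supp(\nu)$ straddles $c$. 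This last point is precisely where the single‑parity hypothesis $\mu\in\Delta^{*}(\Gamma)$ is indispensable.
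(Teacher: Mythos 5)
Your proposal is correct in its essential reasoning, and it takes a genuinely different route from the paper's. The paper argues entirely on the primal side: it splits into two cases according to whether the greedy measure $\nu$ dips strictly below $z_0$ (in which case $g(\sigma\circ(\mu-\nu))=0$ and one bounds $g(\sigma\circ\mu)/g(\mu)$ directly) or stops at $z_0$ (in which case one establishes the exact identity $g(\mu)+g(\sigma\circ(\mu-\nu))=g(\sigma\circ\mu)$), each time by explicitly decomposing the greedy measures for $\mu$ and $\sigma\circ\mu$ and comparing atom by atom. You instead dualize: the representation $g(\rho)=\min_\lambda\int(1+\lambda(x-l))^+\,\mathrm{d}\rho$ lets you plug the residual's dual witness $\lambda^{*}$ into $\sigma\circ\mu$ and collapse everything into the single inequality $\delta\int f_{\lambda^{*}}\,\mathrm{d}(\sigma\circ\nu)\le|\nu|$, after which the parity hypothesis guarantees that no $\sigma(x)$ for $x\in\supp(\nu)$ straddles the kink $c=z_j$, so $f_{\lambda^{*}}$ is effectively linear along the walk and the whole bound reduces to the one grid-step ratio $(l-z_{j-1})/(l-z_j)\le D$. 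This is cleaner at the top level (no case split on whether the residual has positive greedy mass), and it makes transparent why exactly one factor of $D$ worth of slack appears, but it relies on a structural claim — that the dual optimizer for $\sigma\circ(\mu-\nu)$ has its kink at the top sub-threshold support point $c$, and that $\nu$ has at most one atom, at $z_{j-1}$, below $c$ — which is genuine bookkeeping; I checked it holds in the interior cases (the key inequality $(z_0-z_{-1})(z_1-l)\le(z_1-z_0)(l-z_{-1})$ reduces to $z_0\le l$, so the greedy of $\sigma\circ(\mu-\nu)$ never cuts below $z_{-1}$), and the absorbing-endpoint cases degenerate harmlessly to $\theta=0$. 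The paper's approach, conversely, avoids dual machinery at the cost of its own explicit decompositions $\nu=\alpha\delta_{z_j}+\lambda$ and $\nu'=\beta\delta_{z_{j+1}}+\sigma\circ\lambda$. Both proofs locate the binding constraint in the same single grid step; you reach it through complementary slackness, the paper through direct mass accounting. One small note: in your closing line ``adding these'' should read ``chaining these,'' since the two displayed inequalities combine by transitivity rather than addition.
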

					This essentially says that in a two-period discount factor model if the discount factor satisfies $\delta\leq\frac{1}{D}$, then the greedy policy is better than waiting and not revealing any information at the first period and then applying the greedy in the second.

					\begin{proof}
						We first consider the case where 
						the lowest point in the support of the greedy measure $\nu\leq\mu$ is a point $z_j<z_0$. 
						In this case, it must hold in particular that $g(\sigma\circ(\mu-\nu))=0$. We therefore need to show that $\frac{g(\sigma\circ\mu)}{g(\mu)}\leq D$. We next approximate $g(\sigma\circ\mu)$. 
						

						Let $\nu'$ be the greedy measure for $\sigma\circ\mu$. 
						We consider two cases. The first is that the lowest point in the support of $\nu'$ is $z_{j-1}$. In this case, it must hold that $\nu'=\sigma\circ\nu$ since $\overline{\sigma\circ\nu}=l$ on the one hand and on the other hand all the mass that lies strictly above $z_{j-1}$ is contained in $\sigma\circ\nu.$
						
						Consider the case where the lowest point in the support of $\nu'$ is not $z_{j-1}$. In this case, it cannot lie below $z_{j-1}$ since the conditional expectation of such a measure lies strictly below $l$.  Since $\sigma\circ\mu$ is supported either on the odd or the even points, the lowest point in the support of $\nu'$ must be $z_{j+1}$. 
						
						We denote by $\lambda$ the submeasure of $\nu$ containing all mass of points lying strictly above $z_j$. We note that $\nu=\alpha\delta_{z_j}+\lambda $ for some $\alpha>0.$ Since $\overline{\nu}=l$ it holds that  $$\alpha=\frac{\overline{\lambda}-l}{l-z_j}|\lambda|.$$
						
						We note that one can write $\nu'=\beta\delta_{z_{j+1}}+\sigma\circ\lambda$ for some $\beta>0$. To see this note that the lowest point in the support of $\sigma\circ\lambda$ lies (weakly) above $j+1$ and that $\overline{\sigma\circ\lambda}=\overline{\lambda}>l$, therefore, $\sigma\circ \lambda\leq \nu'$.
						Again since $\overline{\nu'}=l$, we must have that
						$$\beta=\frac{\overline{\sigma\circ\lambda}-l}{l-z_{j+1}}|\sigma\circ\lambda|=\frac{\overline{\lambda}-l}{l-z_{j+1}}|\lambda|.$$
						
						Therefore, it holds that
						
						$$\frac{g(\sigma\circ\mu)}{g(\mu)} = \frac{|\nu'|}{|\nu|}\leq\frac{\beta}{\alpha}=\frac{l-z_{j+1}}{l-z_j}\leq D.$$

						The second case to consider is where the left-most point in $\nu$ is $z_0$. We claim that in this we must have that $g(\mu)+g(\sigma\circ(\mu-\nu))=g(\sigma\circ\mu)$. 
						We first show that $\sigma\circ\nu\leq \nu'$. Let $\lambda$ be the submeasure that contains all points that lie strictly above $z_0$ in $\mu$. It must hold that $\sigma\circ \lambda\leq \nu'$. In addition it must contain all mass that arrives from $z_0$.  Because this is the right-most mass of $\sigma\circ(\mu-\lambda)$.  Therefore $\sigma\circ\nu\leq \nu'$.

						As a result $\sigma\circ\mu\geq \nu'=\sigma\circ\nu+(\nu'-\sigma\circ\nu)$ where $\nu'-\sigma\circ\nu\geq 0$ is positive and $\overline{\nu'-\sigma\circ\nu}=l$. Therefore $\sigma\circ(\mu-\nu)=\sigma\circ\mu-\sigma\circ\nu\geq \nu'-\sigma\circ\nu$. This means that $g(\sigma\circ(\mu-\nu))\geq |\nu'-\sigma\circ\nu|$ and therefore $g(\mu)+g(\mu-\nu)\geq |\nu|+|\nu'-\sigma\circ\nu|=|\nu'|=g(\sigma\circ\mu)$.
						
						This completes the proof of the lemma. 
					\end{proof}
					
					The following corollary essentially shows that the greedy policy is optimal for the case where $T=2$.
					\begin{corollary}\label{cor:greedy}
						For any measure $\mu\in \Delta^*(\Gamma)$, $\alpha\leq g(\mu)$, and every $\delta\leq\frac{1}{D}$ we have
						$$\alpha+\delta g(\sigma\circ(\mu-\nu_\alpha)) \leq g(\mu)+\delta g(\sigma\circ(\mu-\nu)),$$
						where $\nu_\alpha\leq\mu$ is the measure associated with the unique interval policy for $\mu$ of mass $\alpha$.
					\end{corollary}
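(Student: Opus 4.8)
The plan is to derive Corollary~\ref{cor:greedy} from Lemma~\ref{prop:inf} by a simple reduction. Note that Lemma~\ref{prop:inf} is \emph{exactly} the case $\alpha=0$ of the corollary, since then $\nu_\alpha=0$ and $\mu-\nu_\alpha=\mu$. The idea for the general case is to apply Lemma~\ref{prop:inf} not to $\mu$ but to the measure $\mu-\nu_\alpha$ (i.e. to the mass that survives an interval action of size $\alpha$), so that the statement of Lemma~\ref{prop:inf} reads as the desired comparison between ``first act of size $\alpha$, then greedy'' and ``act greedily twice''. For this to go through I need to identify the greedy measure of $\mu-\nu_\alpha$.

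First I would record the structural fact that $\nu_\alpha\leq\nu$ pointwise, where $\nu$ is the greedy measure of $\mu$; this is the same nesting of an interval measure inside the greedy measure that is used in the proof of Lemma~\ref{lemma:gt}, and it follows from the uniqueness statement of Lemma~\ref{lem:exist}. Setting $\psi:=\nu-\nu_\alpha\geq0$, linearity of the integral together with $\overline{\nu}=\overline{\nu_\alpha}=l$ gives $\overline{\psi}=l$ and $|\psi|=g(\mu)-\alpha$. Also $\mu-\nu_\alpha\leq\mu$, hence $\mu-\nu_\alpha\in\Delta^*(\Gamma)$ and Lemma~\ref{prop:inf} is applicable to it. We may assume $\alpha<g(\mu)$, the case $\alpha=g(\mu)$ being the trivial equality $\nu_\alpha=\nu$.

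The heart of the argument is to show that $\psi$ is precisely the greedy measure of $\rho:=\mu-\nu_\alpha$, so that $g(\rho)=|\psi|=g(\mu)-\alpha$. Let $y$ be the greedy cutoff for $\mu$, so that $\nu$ equals $\mu$ restricted to $(y,1]$ together with a fraction of the atom at $y$, and $\overline{\nu}=l$. Since $\nu_\alpha\leq\nu$, the measure $\nu_\alpha$ is supported on $[y,1]$ and removes at most the greedy fraction of the atom at $y$; hence $\mu$ and $\rho$ coincide strictly below $y$, while restricting $\nu=\nu_\alpha+\psi$ to $(y,1]$ and using $\nu|_{(y,1]}=\mu|_{(y,1]}$ shows that $\psi$ carries \emph{all} of the mass of $\rho$ strictly above $y$ and a sub-fraction of the atom of $\rho$ at $y$. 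Thus $\psi$ is a top-quantile measure of $\rho$ with cutoff $y$, and since $\overline{\psi}=l$ it must be the greedy measure of $\rho$ by uniqueness of the greedy policy. I expect the bookkeeping around the atom at $y$ (checking the relevant fractions lie in $[0,1]$, and the degenerate cases $l\in\Gamma$ or $\rho$ having no mass strictly above $y$) to be the only delicate point of the proof; everything else is formal.

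Finally, since $(\mu-\nu_\alpha)-\psi=\mu-\nu_\alpha-(\nu-\nu_\alpha)=\mu-\nu$, applying Lemma~\ref{prop:inf} to $\mu-\nu_\alpha\in\Delta^*(\Gamma)$ gives, for every $\delta\leq\frac1D$,
\[
\delta\, g\big(\sigma\circ(\mu-\nu_\alpha)\big)\ \leq\ g(\mu-\nu_\alpha)+\delta\, g\big(\sigma\circ((\mu-\nu_\alpha)-\psi)\big)\ =\ \big(g(\mu)-\alpha\big)+\delta\, g\big(\sigma\circ(\mu-\nu)\big).
\]
Adding $\alpha$ to both sides yields exactly the inequality in the statement.
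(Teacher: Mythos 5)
Your proof is correct and follows essentially the same route as the paper: both apply Lemma~\ref{prop:inf} to $\mu-\nu_\alpha$ after observing that $\nu_\alpha$ is nested in the greedy measure $\nu$, so that $\nu-\nu_\alpha$ is the greedy measure for $\mu-\nu_\alpha$, with mean $l$ and mass $g(\mu)-\alpha$. The only difference is that you spell out the nesting and top-quantile verification in more detail than the paper, which simply asserts $\nu=\nu_\alpha+\nu'$ ``by the interval property.''
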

					\begin{proof}
						Let $\mu'=\mu-\nu_\alpha$ and denote $\nu'$ to be the greedy measure for $\mu'$.  By the interval property of $\nu_{\alpha}$ we have $\nu = \nu_{\alpha} + \nu'$.
						Therefore, $g(\mu')= |\nu'|=g(\mu)-\alpha$, and $\mu'-\nu'=\mu-\nu$.
						Using Lemma \ref{prop:inf} for the measure $\mu'$, together with the latter two properties we obtain
						\begin{align*}
							\delta(\sigma\circ\mu')& \leq g(\mu')+\delta g(\sigma\circ(\mu'-\nu'))\\
							& = g(\mu)-\alpha +\delta g(\sigma\circ(\mu-\nu)),
						\end{align*}
						thus giving the desired result.
					\end{proof}
					
					We are now ready to prove Theorem \ref{theorem:greedy}.
					\begin{proof}[\textbf{Proof of Theorem \ref{theorem:greedy}}]
						Fix a discount function $\delta\leq\frac{1}{D}.$
						We prove Theorem \ref{theorem:greedy} first for $T<\infty$ under the assumption that the prior $\mu\in \Delta^*(\Gamma)$. Denote by $v_t(\mu)$ be the value of the sender's problem as a function of the prior $\mu$ (i.e., $X_1 \sim \mu$) and the number of periods $t$. Also, denote $\gamma_t(\mu)$ to be the payoff obtained along the first $t$ periods by following the greedy policy starting from the prior $\mu$. 
						
						We prove by induction on $T$ that $v_T(\mu)=\gamma_T(\mu)$.
						For the case $T=1$ the claim trivially holds. Assume the claim holds for any $t<T$. As by Theorem \ref{theo:main} we may restrict attention to interval policies, the induction hypothesis implies
						\begin{align}\label{eq:bi}
							v_T(\mu)=\max_{\alpha\leq g(\mu)}\lbrace \alpha+\delta\gamma_{T-1}(\sigma\circ(\mu-\nu_\alpha)) \rbrace ,   
						\end{align}
						where  $\nu_\alpha$ is the measure associated with the unique interval policy for $\mu$ of mass $\alpha\leq g(\mu)$. 
						
						\bigskip
						
						Fix $\alpha\leq g(\mu)$. Denote $\mu'_2=\sigma\circ(\mu-\nu_\alpha)$ and let $\nu'_2$ be the greedy measure with respect to $\mu'_2$. By the definition of the greedy policy we have, 
						\begin{align}\label{eq:rew}
							\alpha+\delta\gamma_{T-1}(\sigma\circ(\mu-\nu_\alpha)) & = \alpha + \delta \left( g(\mu'_2)+ \gamma_{T-1}(\mu'_2-\nu'_2)\right) \nonumber \\
							& = \alpha + \delta g(\mu'_2)+ \delta v_{T-1}(\mu'_2-\nu'_2)
						\end{align}
						where the last equality follows from the induction hypothesis. 
						
						Denote $\mu_2 = \sigma \circ (\mu - \nu)$, where we recall that $\nu$ is the greedy measure for $\mu$. Let $\nu_2$ be the greedy measure for $\mu_2$. As  Lemma \ref{lemma:dominPreserv} implies that $\mu'_2-\nu'_2 \preceq_D \mu_2 - \nu_2$, we get by Lemma \ref{lemma:domination} that $v_{T-1}(\mu'_2-\nu'_2) \leq v_{T-1}(\mu_2-\nu_2)$. This together with \eqref{eq:rew} and Corollary \ref{cor:greedy} implies that for any $\delta \leq 1/D$:
						\begin{align}\label{eq:rew2}
							\alpha + \delta g(\mu'_2)+ \delta v_{T-1}(\mu'_2-\nu'_2) & \leq g(\mu)+\delta g(\sigma\circ(\mu-\nu)) + \delta v_{T-1}(\mu_2-\nu_2) \nonumber \\
							& =  g(\mu)+\delta g(\sigma\circ(\mu-\nu)) + \delta \gamma_{T-1}(\mu_2-\nu_2)\\
							& = \gamma_T (\mu), \nonumber 
						\end{align}
						where the first equality follows from the induction hypothesis, and the second equality follows from the definition of the greedy policy. Combining eqs.\ (\ref{eq:rew}) and (\ref{eq:rew2}) we obtain that $\alpha+\delta\gamma_{T-1}(\sigma\circ(\mu-\nu_\alpha)) \leq \gamma_T (\mu)$. As $\alpha$ was arbitrary, the latter in turn with \eqref{eq:bi} implies that $v_T(\mu) \leq \gamma_T (\mu)$. As the reverse inequality holds as well, we've completed the induction step, and thus also the proof of Theorem \ref{theorem:greedy} for the case $T<\infty$. 
						\bigskip 
						
						As for the case $T=\infty$, we let $v_\infty(\mu)$ and $\gamma_\infty(\mu)$ denote the value and the greedy payoff for $T=\infty$. Note that $v_t(\mu) \uparrow v_\infty(\mu)$ and $\gamma_t(\mu) \uparrow \gamma_{\infty}(\mu)$ as $t \uparrow \infty$. Therefore it must hold that $\gamma_\infty(\mu)=v_\infty(\mu)$ as otherwise  we would get that $\gamma_t(\mu)<v_t(\mu)$ for some $t<\infty$, arriving at a contradiction.

					\end{proof}
					
					\bibliography{bib}
					
					\appendix
					\section{Ommited Proofs from Section \ref{sec:preliminaries}}\label{ap:sec2}
					This appendix contains all the proofs that have been omitted in the body of the paper.
					
					\firstlem*
					
					\begin{proof}
						
						We first show that any stopping rule defines the above sequences of measures. Given a stopping rule $\tau$ we define for every $t=1,\ldots, T$, the positive measure $\mu_t=\Pro_{X_t}[\ \cdot\ |\tau\geq t]\, \Pro[\tau\geq t]$ describing the unconditional distribution of $X_t$ on the event $\{\tau\geq t\}$. Also, let $\nu_t=\Pro_{X_t}[\ \cdot\ |\tau=t]\, \Pro_{X_t}[\tau=t]$ be the measure corresponding to the unconditional distribution of $X_t$ on the event $\{\tau = t\}$.  Clearly, $\overline{\nu_t}\geq l$. Moreover, it follows from the definitions that  $\mu_t=\sigma_{t-1} \circ (\mu_{t-1} - \nu_{t-1})$ and $\nu_t\leq\mu_t$ for every $t\geq 1$, so that $\{\mu_t\}_{t\geq 1}$ and $\{\nu_t\}_{t\geq 1}$ meet the required constraints. Lastly, note that as $\Pro(\tau= t)=|\nu_t|$ we have $\sum_{t=1}^T\Pro(\tau=t)w_t=\sum_{t=1}^T |\nu_t|w_t$, as desired.
						
						Conversely, for every two sequences of positive measures $\{\mu_t\}_{t\geq 1}$, $\{\nu_t\}_{t\geq 1}$ that satisfy the above relations we will show that there exists a stopping rule $\tau$ such that $\mu_t=\Pro_{X_t}[\ \cdot\ |\tau\geq t]\Pro[\tau\geq t]$ and $\nu_t=\Pro_{X_t}[\ \cdot\ |\tau=t]\, \Pro[\tau=t] $.  
						
						To see this, we define the stopping rule recursively such that $\tau_t$ depends only on the realization $x_t$ of $X_t$. Since $\nu_1\leq \mu_1$ the Radon Nikodym derivative $\frac{d\nu_1}{d\mu_1} : [0,1] \to \R$ satisfies $\frac{d\nu_1}{d\mu_1}(x)\leq 1$ for $\mu_1$ almost every $x\in[0,1]$. By setting $\tau_1 := \frac{d\nu_1}{d\mu_1}$, we have $\nu_1 = \Pro_{X_1}[\ \cdot\ |\tau=1]\, \Pro[\tau=1]$ as required. We proceed inductively. Assume that we have defined $(\tau_j)_{j=1,\ldots,t-1}$ such that $\mu_j=\Pro_{X_j}[\ \cdot\ |\tau \geq j]\Pro[\tau \geq j]= \Pro_{X_j}[\ \cdot\ |\tau > j-1]\Pro[\tau > j-1]$ 
						and $\nu_j=\Pro[\tau=j]\Pro_{X_j}[\ \cdot\ |\tau=j]$ for every $j<t$.
						Again we can set $\tau_t(x_t) :=\frac{d\nu_t}{d\mu_t}$ so that $\nu_t(B)=\int_B \tau_t(x)d\mu_t(x)$ for every Borel set $B\subseteq [0,1]$. We have for any Borel set $B$, 
						\begin{align}\label{Induction Step 1}
							\nu_t(B)=\Pro(\tau=t,X_t\in B)=\Pro[\tau=t]\Pro_{X_t}[X_t\in B |\tau=t].
						\end{align}
						The recursive relations between $\{\mu_t\}_{t\geq 1}$ and $\{\nu_t\}_{t\geq 1}$, coupled with Eq.\ \eqref{Induction Step 1} and the induction assumption yield
						\begin{align}\label{Induction Step 2}
							\mu_{t+1} & = \sigma_t \circ (\mu_t - \nu_t)  \nonumber \\
							& = \sigma_t \circ \left( \mu_t - \Pro_{X_t} [\ \cdot\ |\tau = t]\, \Pro[\tau =t] \right) \nonumber \\
							& = \sigma_t \circ \left( \Pro_{X_t} [\ \cdot\ |\tau \geq  t]\, \Pro[\tau \geq t] - \Pro_{X_t} [\ \cdot\ |\tau =  t]\, \Pro[\tau = t]   \right) \\
							& = \sigma_t \circ \Pro_{X_t} [\ \cdot \ | \tau > t]\, \Pro[\tau > t] \nonumber \\
							& = \Pro_{X_{t+1}}[\ \cdot \ | \tau\geq t+1] \, \Pro[\tau \geq t+1]. \nonumber
						\end{align}
						The combination of Eqs.\ \eqref{Induction Step 1} and \eqref{Induction Step 2} completes the induction step. 
						
					\end{proof}
					\seclem*
					\begin{proof}[Proof of Lemma \ref{lemma:blacwell preservingK}]
						The condition of the lemma assrests that for every $0\leq y'\leq y\leq y''\leq 1$ for the measure $\mu=\alpha\delta_{y'}+(1-\alpha)\delta_{y''}$ with expectation $y$ it holds that $\sigma\circ\delta_y\preceq_B\sigma\circ\mu$.
						
						Assume next that the measure $\nu\preceq_B \mu$. Then there exists a probability kernel $\rho:[0,1]\to\Delta([0,1])$ such that $\rho\circ\nu=\mu$. 
						Assume first that $\rho(x)$ has a support $2$ for any $x$. That is, $\rho(x)=\alpha_x\delta_{x'}+(1-\alpha_{x})\delta_{x''}$
						for $0\leq x'\leq x\leq x''\leq 1$ such that $\alpha_{x} x'+(1-\alpha_{x})x''=x.$
						Thus we can write for any Borel measurable set $B\subseteq [0,1]$
						$$\mu(B)=\int_{[0,1]}\alpha_{x}\delta_{x'}(B)+(1-\alpha_{x})\delta_{x''}(B)d\nu(x).$$ 
						
						Therefore, in particular 
						$\sigma\circ\nu=\int_{[0,1]}\sigma\circ\delta_x d\nu(x)$ and\\ 
						$\sigma\circ\mu=\int_{[0,1]}\sigma\circ(\alpha_x\delta_{x'}+(1-\alpha_{x})\delta_{x''})d\nu(x)$. In this case the result follows since $\sigma\circ\delta_x\preceq_B\sigma\circ(\alpha_x\delta_{x'}+(1-\alpha_{x})\delta_{x''})$ for every $x$.
						
						For a general $\rho$ we note that, by definition $\rho(x)\in\Delta([0,1])$ is a probability distribution on $[0,1]$ with expectation $x\in[0,1]$. The extreme points of all measures on $[0,1]$ with expectation $x$ is the set of all measures with binary support. Therefore it follows from Choquet's Theorem (see \cite{phelps2001lectures}) that $\rho(x)$ can be represented as a probability measure over binary support measures for every $x$. That is if we let $B_x\subset\Delta([0,1])$ be the set of all binary support measures with expectation $x$, then there exists $\lambda_x\in\Delta(B_x)$ such that 
						$$\rho(x)=\int_{B_x} \alpha_x\delta_{x'}+(1-\alpha_x)\delta_{x''}d\lambda_x(x',x'')$$
						
						That is, we can write 
						$\mu=\int_{[0,1]}\Big(\int_{B_x} \alpha_x\delta_{x'}+(1-\alpha_x)\delta_{x''}d\lambda_x(x',x'',\alpha_x)\Big)d\nu(x).$ Therefore since
						$\sigma\circ\nu=\int_{[0,1]}\sigma\circ\delta_x d\nu(x)$
						We can write 
						$$\sigma\circ\mu=\int_{[0,1]}\sigma\circ\rho(x)= \int_{[0,1]}\Big(\int_{B_x}\sigma\circ( \alpha_x\delta_{x'}+(1-\alpha_x)\delta_{x''})d\lambda_x(x',x'',\alpha_x)\Big)d\nu(x).$$
						The result follows since for every $x$ and every $\alpha_x\delta_{x'}+(1-\alpha_x)\delta_{x''}\in B_x$ it holds that $\delta_x\preceq_B\sigma\circ( \alpha_x\delta_{x'}+(1-\alpha_x)\delta_{x''})$.
					\end{proof}
					\thlem*
					
					\begin{proof}
						We will show that if $\sigma$ is a probability kernel that represents a conditionally independent signal, then $\sigma$ is a Blackwell preserving kernel.
						Consider first the case where $\mu = \delta_x$ for some $x \in (0,1)$ and $\nu$ is some mean-preserving spread of $\mu$. In random variable form, there exists a random variable $Y$ such that $\Ex (Y\,|\, x) = x$, where $x$ is the constant random variable supported on $\{x\}$. By \cite{AM95}, there exists a probability kernel $F_{\nu}:\{0,1\} \to [0,1]$ (referred to as a \emph{state dependent lottery} in Aumann and Maschler) so that $Y = p_x (m) :=  \Pro_x (\omega=1\,|\,m)$, where $m \in [0,1]$ is the signal whose distribution is generated by $x$ and $F_{\nu}$. Consider the random variables $X' = \Pro_x (\omega=1\,|\, s)$ and $Y' = \Pro_x (\omega=1\,|\, m, s) $, where $s$ is the signal generated by $x$ and the probability kernel $G$ corresponding to $\s$. By definition, $X' \sim \sigma \circ \mu $. Also, as the signals $m$ and $s$ may be chosen to be conditionally independent given the state $\omega \in \{0,1\}$ we have for any Borel set $B \subset [0,1]$:
						\begin{align*}
							\Pro_x [Y' \in B] & = \int_{[0,1]} \Pro_{p_x (m)} \left[ \Pro_{p_x (m)}[\omega=1\,|\, s] \in B \right] \mathrm{d}\nu (p_x (m))\\
							& = \sigma \circ \nu  (B),
						\end{align*}
						so that $Y' \sim \sigma \circ \nu$. As the tower property for conditional expectations implies $\Ex [Y'|X'] = X'$ we deduce that $\sigma \circ \nu$ is a mean-preserving spread of $ \sigma \circ \delta_x$, as required.
						
						We move on to the general case. Assume that $\mu \preceq_B \nu$ and let $\tau:[0,1] \to \Delta([0,1])$ be a probability kernel with $\nu = \tau \circ \mu$. Then, for every non-decreasing and concave $f:[0,1]\to \R$ it holds:
						\begin{align}\label{Eq. Conditional Independent Signals}
							(\sigma \circ \mu)(f) & = \int_{[0,1]}\left( \int_{[0,1]} f(t)\mathrm{d}(\sigma\circ \delta_{x})(t) \right) \mathrm{d}\mu (x) \nonumber \\
							& \geq  \int_{[0,1]}\left( \int_{[0,1]} f(t)\mathrm{d}(\sigma\circ \tau(x))(t) \right) \mathrm{d}\mu (x)\\
							& = (\sigma \circ \nu)(f) \nonumber,
						\end{align}
						where the first and last equality follow from disintegration formulas, whereas the inequality follows from the fact that $\delta_x\preceq_B \tau(x)$ together with the first part of the proof which implies $\sigma\circ \delta_x\preceq_B \sigma\circ \tau(x)$ for every $x$. As Eq.\ \eqref{Eq. Conditional Independent Signals} is equivalent to $\sigma \circ \mu \preceq_B \sigma \circ \nu$ the proof is concluded. 
					\end{proof}
					\folem*
					\begin{proof}
						Consider the case where $y'=z_{j}$, $y=z_i$, and $y''=z_{l}$ for some $j<i<l$ and $i,j,l\in Z$. As $j \leq i-1$ and $l \geq i+1$ we may write:
						\begin{align*}
							\delta_{z_{i-1}} \preceq_B \frac{z_l-z_{i-1}}{z_{l}-z_{j}}\delta_{z_{j}} + \frac{z_{i-1}-z_{j}}{z_{l}-z_{j}}\delta_{z_{l}}
						\end{align*}
						and
						\begin{align*}
							\delta_{z_{i+1}} \preceq_B \frac{z_l-z_{i+1}}{z_{l}-z_{j}}\delta_{z_{j}} + \frac{z_{i+1}-z_{j}}{z_{l}-z_{j}}\delta_{z_{l}}.
						\end{align*} 
						The above relations, coupled with the definition of $\sigma$ and simple algebraic manipulations suffice to deduce:
						\begin{align*}
							\sigma \circ \delta_{z_i} \preceq_B \frac{z_{l}-z_i}{z_{l}-z_{j}}\delta_{z_{j}}+\frac{z_i-z_{j}}{z_{l}-z_{j}}\delta_{z_{l}}
						\end{align*}
						As $\mu \preceq_B \sigma \circ \mu$ for every probability measure $\mu$, we deduce from the above relation and the fact that $\preceq_B$ is transitive that 
						\begin{align*}
							\sigma \circ \delta_{z_i} \preceq_B \sigma \circ \Bigg\lbrace \frac{z_{l}-z_i}{z_{l}-z_{j}}\delta_{z_{j}}+\frac{z_i-z_{j}}{z_{l}-z_{j}}\delta_{z_{l}} \Bigg\rbrace
						\end{align*}
						thus showing that $\s$ preserves Blackwell's order on binary-supported measures, which by Lemma \ref{lemma:blacwell preservingK} is sufficient to deduce the $\s$ is Blackwell preserving. 
					\end{proof}
					\lemexist*
					\begin{proof}
						
						Let $p_0$ be the quantile that defines the greedy policy. As mentioned above, if we let $F$ be the CDF of $\mu_1$ and $U$ is uniformly distributed random variable on $[0,1]$, then for $X=F^{-1}(U)$ it holds that $X\sim\mu_1$. It holds that $\Ex[X|U\geq p_0]=l$. Consider the function $f(p,x)=\Ex[X|U\in(p,x)]$. Note that over the domain $0\leq p<x\leq 1,$ the function $f( \cdot \ , \cdot )$ is continuous and strictly increasing both in $p$ and $x$. Let $p_1$ satisfy $F^{-1}(p_1)=l$. 
						It follows from the properties of $f$ that for every $p_0<p<p_1$ there exists a unique $x(p)$ such that $f(p,x(p))=l$. Moreover, $x(p)$ is a continuous function of $p$.  
						Therefore, the function $x(p)-p$ is continuous it attained the value $\alpha$ at 
						$p_0$ and has a left limit $0$ at $p_1$. By the intermediate value theorem it follows that for every $0<\beta\leq\alpha$ there exists 
						$(p,x(p))$ such that $\mathbb{E}[X|U\in(p,x(p))]=l$ and $x(p)-p=\beta$. Since  $x(p)-p$ is strictly decreasing in $p$ there exists a unique such $p$ for every $\beta$. Denote it by $p_{\beta}$. The two quantiles determined by $p_{\beta}$ and $x(p_{\beta})$ define a unique stopping rule $\tau$ (up to measure zero) with $\Pro(\tau=1)=\beta$.
						This concludes the proof of the lemma.
					\end{proof}
					\lemods*
					
					\begin{proof}
						
						Let $\tilde F$ and $F'$ be the CDFs of $\frac{1}{|\tilde\nu|}\tilde\nu$ and $\frac{1}{|\nu'|}\nu'$, respectively. To show $\tilde \nu\preceq_B \nu'$ it is necessary and sufficient to show that $\int_0^x\tilde F(y)dy \leq \int_0^x\tilde F'(y)dy$ for every $x\in[0,1]$. Assume by way of contradiction that $\int_0^x\tilde F(y)dy > \int_0^x\tilde F'(y)dy$
						for some $x\in[0,1]$. Then it must be that $x > \underline{y}$ since
						$\int_0^{\underline y}\tilde F(y)dy=0$.
						
						Next, we argue that $x < \overline y$. First, we note that $\tilde F (z) = 1$ for every $z \geq \overline{y}$. Thus, assuming to the contrary that $x \geq \overline y$ we obtain
						\begin{align*}
							\int_0^1 \tilde F (y) dy & = \int_0^x \tilde F(y)dy + \int_x^1 1dy\\
							& > \int_0^x  F'(y)dy + \int_x^1 F'(y)dy = \int_0^1 F' (y) dy.
						\end{align*}
						The above relation stands in contradiction with the fact that $\int_0^1 \tilde F (y) dy  = 1 - \overline{\tilde \nu} = 1 -\overline{\nu'} = \int_0^1 F' (y) dy$, thus establishing $x \in (\underline y, \overline y)$.
						
						We can further assume that  $\tilde F(x)\geq F'(x)$ because otherwise can just decrease the point $x$ until we reach such a point without violating the strict integral inequality. Let $F$ be the CDF of the probability measure $\mu_1$. On the one hand, since $\nu'\leq \mu_1$ it holds that $F'(x')- F'(x'')\leq  F(x')- F(x'')$ for every
						$x',x''\in (\underline y,\overline y)$ such that $x'>x''$. On the other hand, by construction $F(x')- F(x'')=\tilde F(x')-\tilde F(x'')$ for every $x',x''\in (\underline y,\overline y)$ such that $x'>x''$. Therefore for every $x'\in (\underline y,\overline y)$ such that $x'>x$ it holds that $$F'(x')=F'(x)+ F'(x')-F'(x)\leq \tilde F(x)+ F(x')-F(x)=\tilde F(x)+ \tilde F(x')-\tilde F(x)=\tilde F(x').$$
						Thus $F'(x')\leq \tilde F(x')$ for every $x'\in (\underline y,\overline y)$ such that $x'>x$. Therefore we must have that $\int_0^{\overline y}\tilde F(y)dy> \int_0^{\overline y}\tilde F'(y)dy$. Since $\tilde F(x)=1$ for every $x\geq\overline{y}$ we infer that $\int_0^1\tilde F(y)dy> \int_0^1\tilde F'(y)dy $ arriving at a contradiction.
						This concludes the proof of the lemma.
					\end{proof}
					
					\section{Proof of Proposition \ref{pro:sub-opt}}\label{ap:proof_pro_1}
					\firstprop*
					
					\begin{proof}
						
						Assume by way of contradiction that the greedy policy is optimal for the $\epsilon$-grid (for sufficiently small $\epsilon$, which will be specified later). For every grid point $g\in \Gamma$ and $g<l$ we denote by $v(g)$ the value of the sender (for the entire process) that initiates at $g$ under the greedy policy.
						
						We denote $c=v(l-\epsilon)<1$. We argue that 
						\begin{align}\label{eq:v}
							v(l-2\epsilon)=c^2\pm O(\epsilon^2).
						\end{align}
						To see it, we notice that if the random walk that starts at $l-2\epsilon$ reaches $l-\epsilon$ we can refer to it as if the game terminates and she receives a utility of $c$. We know that jumping to the next point above $l-\epsilon \to l$ takes an expected discount time $c$. So, jumping to the next point above $l-2\epsilon \to l-\epsilon$ should also take an expected discount time of approximately $c$. Indeed we can couple the two random walks: The one that starts at $l-\epsilon$ and reaches $l$ and the one that starts at $l-2\epsilon$ and reaches $l-\epsilon$. Failing of this coupling happens with probability $O(\epsilon^2)$; this failure occurs only if the realization of the random walk goes $l-\epsilon \to \epsilon \to l$ without visiting $l$ or $0$ in the middle. The coupled process will fail to do the same trajectory because it will move $l-\epsilon \to 0$ and $0$ is an observing state. 
						
						Using similar arguments one can show that $v(l-3\epsilon)=c^3+O(\epsilon^2)$.
						
						By the definition of value, we know that 
						\begin{align}\label{eq:v2}
							v(l-\epsilon)=\frac{\delta}{2}(1+v(l-2\epsilon))
						\end{align}
						because with probability $\frac{1}{2}$ we will reach $l$ tomorrow and the discounted value would be $\delta$ and with probability $\frac{1}{2}$ we will reach $l-2\epsilon$ tomorrow and the discounted value would be $\delta v(l-2\epsilon)$. By combining equations \eqref{eq:v} with \eqref{eq:v2} we get
						$c=\frac{\delta}{2}(1+c^2)\pm O(\epsilon^2)$ which implies that 
						\begin{align}\label{eq:c}
							c=\frac{1-\sqrt{1-\delta^2}}{\delta}\pm O(\epsilon^2).
						\end{align}
						
						Consider the initial belief $X_0$ which is distributed as follows: With probability $p=1-\frac{\epsilon}{2-2l +\epsilon}$ the belief is $l-2\epsilon$ and with probability $1-p=\frac{\epsilon}{2-2l +\epsilon}$ the belief is $1$. 
						
						If the sender uses the greedy policy at time $t=1$, she pools together $\frac{p}{4}$ of the mass located at $l-2\epsilon$ with the $1-p$ mass located at $1$ and her utility is $$1-\frac{3p}{4}+\frac{3p}{4} c^2 \pm O(\epsilon^2),$$
						where the term $\frac{3p}{4}c^2$ captures the $\frac{3p}{4}$ mass that remains at $l-2\epsilon$. 
						
						If, instead the sender stays mute at time $t=1$ and for time $t\geq 2$ she behaves greedily the following will happen. At time $t=2$ she pools together the $\frac{p}{2}$ mass at $l-\epsilon$ with the $1-p$ mass at $1$. This leaves a mass of $\frac{p}{2}$ at $l-3\epsilon$ for future utilization of adoption. In total, her value is $$\delta(1-\frac{p}{2}) + \delta \frac{p}{2} c^3 \pm O(\epsilon^2).$$
						
						In order for such a deviation from the greedy policy to be profitable we should have
						$$
						\delta(1-\frac{p}{2}) + \delta \frac{p}{2} c^3 > 1-\frac{3p}{4}+\frac{3p}{4} c^2  \pm O(\epsilon^2)
						$$
						Since $p=1-O(\epsilon)$ it is sufficient to have 
						$$
						\delta \frac{1}{2} + \delta \frac{1}{2} c^3 > \frac{1}{4}+\frac{3}{4} c^2  \pm O(\epsilon)
						$$
						
						Using Equation \eqref{eq:c} and neglecting the $O(\epsilon^2)$ and $O(\epsilon)$ error terms the inequality above becomes an inequality of $\delta$ only. One can verify that for $\delta>\frac{1}{\sqrt{2}}$ we have $\delta \frac{1}{2} + \delta \frac{1}{2} c^3 > \frac{1}{4}+\frac{3}{4} c^2$. Finally, we set $\epsilon'$ such that the total sum of all the $O(\epsilon^2)$ and $O(\epsilon)$ error terms along the proof will not exceed the gap $\delta \frac{1}{2} + \delta \frac{1}{2} c^3 - \frac{1}{4}-\frac{3}{4} c^2>0$ for every $\epsilon\leq \epsilon'$. In such cases the deviation from the greedy policy is profitable. 
					\end{proof}

					\section{Existence of a Maximum for Interval Policies}\label{sec:max-proof}
					\begin{proposition}\label{prop:existence}
						Every martingale $\bX=(X_t)_{t=1,\ldots,T}$ for the sender has an optimal interval policy. 
					\end{proposition}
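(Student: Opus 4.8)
The plan is to recast the optimization over interval policies as a maximization over stopping-mass sequences and then apply a compactness argument.

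\emph{Reduction.} By Lemma~\ref{lem:exist}, applied in each period, an interval policy is uniquely determined (up to a $\mu$-null modification) by its sequence of stopping masses $b=(b_t)_{t=1}^T$, $b_t=|\nu_t|$, subject to the recursive constraints $0\le b_t\le g(\mu_t)$, where $\mu_1$ is given, $\nu_t$ is the unique interval sub-measure of $\mu_t$ of mass $b_t$ with barycentre $l$, and $\mu_{t+1}=\sigma_t\circ(\mu_t-\nu_t)$. Since $\Pro[\tau=t]=b_t$ and $|\mu_{t+1}|=1-\sum_{s\le t}b_s$, the sender's objective is $J(b)=\sum_{t=1}^T b_t w_t$, and feasibility already forces $\sum_t b_t\le 1$. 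Hence it suffices to show that $J$ attains its supremum over the feasible set $\mathcal B\subseteq[0,1]^T$.

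\emph{Compactness.} First I would show that the greedy-mass functional $\mu\mapsto g(\mu)$ (the mass of the greedy sub-measure) is upper semicontinuous for weak convergence: if $\mu_n\to\mu$ weakly and $\nu_n\le\mu_n$ is greedy for $\mu_n$, then a weak subsequential limit $\nu$ of $\nu_n$ satisfies $\nu\le\mu$ and $\overline\nu\ge l$, so $g(\mu)\ge|\nu|=\limsup_n g(\mu_n)$. Next I would show that $b\mapsto\nu^{b}$ and, inductively in $t$, $(b_1,\dots,b_{t-1})\mapsto\mu_t$ are weakly continuous: the former from the continuity of the quantile data $(\underline p,\overline p,\underline\gamma,\overline\gamma)$ in Lemma~\ref{lem:exist} as the mass varies; the latter provided the push-forward $\rho\mapsto\sigma_t\circ\rho$ is weakly continuous on sub-measures of $\mu_t$, which is automatic whenever $\supp(X_t)$ is discrete (as in all our applications: conditionally independent signals, and random walks on a grid), since there weak convergence is atom-wise and $\sigma_t\circ\rho=\sum_x\rho(\{x\})\sigma_t(x)$ converges by dominated convergence with the summable sequence $x\mapsto\mu_t(\{x\})$ as dominant. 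Together, upper semicontinuity of $g$ and continuity of $(b_1,\dots,b_{t-1})\mapsto\mu_t$ show that the constraint $b_t\le g(\mu_t(b_{<t}))$ cuts out a closed, hence compact, set $\mathcal B$.

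\emph{Conclusion.} For $T<\infty$ the objective $J$ is continuous on the compact set $\mathcal B$, so its maximum is attained. For $T=\infty$ (where I would assume, as for $w_t=\delta^t$, that $w_t\downarrow 0$), I would take a maximizing sequence of interval policies with mass sequences $b^{(n)}$, extract by a diagonal argument a subsequence with $b^{(n)}_t\to b^*_t$ for every $t$, and use the continuity statements above to deduce that the induced measures converge, $\mu^{(n)}_t\to\mu^*_t$ weakly; upper semicontinuity of $g$ then makes $b^*$ feasible. That $b^*$ is optimal follows from the tail estimate $\sum_{t=1}^N b^{(n)}_t w_t\ge J(b^{(n)})-w_{N+1}\sum_{t>N}b^{(n)}_t\ge J(b^{(n)})-w_{N+1}$: letting $n\to\infty$ and then $N\to\infty$ gives $J(b^*)\ge\sup J$.

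\emph{Main obstacle.} I expect the real work to be in the regularity inputs of the compactness step — the upper semicontinuity of the greedy mass and, above all, the weak continuity of the kernel push-forward along the sequences that arise (this is where discreteness of $\supp(X_t)$, or a Feller-type assumption on the kernels, is used) — together with excluding the escape of persuasion mass ``to infinity'' when $T=\infty$, which is exactly what the bound $\sum_t b_t\le 1$ and the hypothesis $w_t\to 0$ are there to rule out.
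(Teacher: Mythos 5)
Your argument follows essentially the same blueprint as the paper's proof: reduce to interval policies, parametrize each period by a one‑dimensional quantity (you use stopping masses, the paper uses quantile pairs, but the two are interchangeable via Lemma~\ref{lem:exist}), take a maximizing sequence, extract a diagonal subsequence, and pass to the limit using continuity of the period‑by‑period map. So I classify this as the same route.

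One technical remark worth flagging, since you raise it yourself under ``Main obstacle'': you work with weak convergence of the measures $\mu_t$, which forces you to assume a Feller‑type property (or discreteness of the support) so that the push‑forward $\rho\mapsto\sigma_t\circ\rho$ is weakly continuous. The paper instead tracks total‑variation convergence. In TV the push‑forward by \emph{any} probability kernel is automatically a contraction, $\|\sigma\circ\rho-\sigma\circ\rho'\|_{TV}\le\|\rho-\rho'\|_{TV}$, so no regularity assumption on the kernel is needed; what the paper then has to control is the joint TV‑continuity of the quantile map $(\mu,\underline q,\overline q)\mapsto T_\mu(\underline q,\overline q)$, which is a different (and, one could argue, also not fully spelled out) continuity burden. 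So the two proofs put the regularity weight in different places: you pay with a Feller hypothesis on $\sigma_t$, the paper pays with a continuity claim for $T_\mu$. Your upper‑semicontinuity argument for $g$ and the explicit tail estimate $\sum_{t\le N}b^{(n)}_tw_t\ge J(b^{(n)})-w_{N+1}$ for the $T=\infty$ case are clean additions that the paper leaves implicit; they do not change the route but make the last step more transparent.
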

					\begin{proof}
						We consider the case where $T=\infty$.
						Led $\mu$ be a positive measure on $[0,1]$ with $\mu([0,1])=r$ and let $D=\{(x,y)\in[0,r]^2:x\leq y\}$. Define a mapping $T_\mu$ from $D$ to the set of positive measures over $[0,1]$ by letting $T_\mu(x,y)$ be the measure $\nu$ that contains all mass in $\mu$ that lies between the quantiles $x$ and $y$ of $\nu$. That is for any measurable function $f$:
						$$\int_{[0,1]}f(z) d\nu(z)=\int_x^yf(F^{-1}(z))dz,$$
						where $F^{-1}$ is the inverse of the CDF of $\mu$. Note that, as in the proof of Lemma \ref{lem:exist}, it follows that for any interval measure $\nu\leq \mu$ it holds that  $\nu=T_\mu(\underline q,\overline{q})$ 
						for some $0\leq\underline q\leq\overline{q}\leq r$.
						
						It readily follows that 
						$T_\mu$ is a continuous mapping from $D$ to the set of positive measures that are endowed with the total variation norm. 
						To see this note that if $|x-x'|\leq \epsilon$, $|y-y'|\leq \epsilon$, and $f:[0,1]\to [-1,1]$, then $\|T_\mu(x,y)-T_\mu(x',y')\|_{TV}\leq 2\epsilon.$
						
						As in Lemma \ref{lem:exist} we can identify an interval policy $\{(\mu_t,\nu_t)\}_{t=1,\ldots}$ with a sequence of quantiles $\{\underline q_t,\overline{q}_t\}_{t=1,\ldots}$ such that the interval measure 
						$\nu_t\leq \mu_t$ equals $T_{\mu_t}(\underline q_t,\overline{q}_t).$
						
						Let $\{(\mu_{t,n},\nu_{t,n})_{t=1,\ldots}\}_{n=1,\ldots,\infty}$ be a sequence of interval policies that attains the supremum in the limit across all interval policies. Let $\{\{\underline q_{t,n},\overline{q}_{t,n}\}_{t=1,\ldots}\}_{n=1,\ldots}$ be the corresponding quantile representation. 
						We assume first that for every $t$ it holds that $\lim_{n\to\infty}\underline q_{t,n}=\underline{p}_t$ and $\lim_{n\to\infty}\overline{q}_{t,n}=\overline{q}_t$. Let $\{(\mu_{t},\nu_t)\}_{t=1,\ldots}$ be the measure representation of the limit policy. We claim that the policy  $\{\underline p_t,\overline{p}_t\}_{t=1,\ldots}$ achieves the optimal payoff.
						
						To see this we prove by induction 
						on $t$ that $\lim_{n\to\infty}\mu_{t,n}=\mu_t$, that $\lim_{n\to\infty}\nu_{t,n}=\nu_t$, and that $\overline{\nu_t}=l$. 
						
						Note that $\mu_{1,n}=\mu_1$ is the same measure for any $n$.
						The fact that $\lim_{n\to\infty}\nu_{1,n}=\nu_1$ follows from the above observation since $\lim_{n\to\infty}\underline q_{1,n}=\underline{p}_1$ and $\lim_{n\to\infty}\overline{q}_{1,n}=\overline{q}_1$. Since 
						$\lim_{n\to\infty}\nu_{1,n}=\nu_1$ it must hold that either $
						\nu_1$ is the zero measure or else $
						\overline{\nu_1}=l$.
						
						Assume that the claim holds for $t-1$. That is $\lim_{n\to\infty}\mu_{t-1,n}=\mu_{t-1}$, $\lim_{n\to\infty}\nu_{t-1,n}=\nu_{t-1}$, and $\overline{\nu}_{t-1}=l$. Since $\mu_{t,n}=\sigma_{t-1}\circ(\mu_{t-1,n}-\nu_{t-1,n})$ it follows that
						$\lim_{n\to\infty}\mu_{t,n}=\lim_{n\to\infty}\sigma_{t-1}\circ(\mu_{t-1,n}-\nu_{t-1,n})=\sigma_{t-1}\circ(\mu_{t-1}-\nu_{t-1})=\mu_{t}$. It therefore follows from the fact that $\lim_{n\to\infty}\underline q_{t,n}=\underline{p}_t$ and $\lim_{n\to\infty}\overline{q}_{t,n}=\overline{q}_t$ that 
						\[
						\lim_{n \to \infty} \nu_{n,t} = \lim_{n \to \infty} T_{\mu_{n,t}}(\underline{q}_{t,n}, \overline{q}_{t,n}) = T_{\mu_{t}}(\underline{q}_{t}, \overline{q}_{t}) = \nu_{t}.
						\] In addition since $\overline{\nu}_{t,n}=l$ we have that either $\nu_t=0$ or $\overline{\nu}_t=l$ as desired.
						Since the utility of the sender from each policy is 
						$$\sum_{t=1}^\infty w_t|\nu_{t,n}|$$
						we have that the utility converges to the utility of the limit policy $\sum_{t=1}^\infty w_t|\nu_{t}|$.
						The claim that the limit policy achieves the optimum now readily follows.
						
						We next show that the quantile converges assumption holds without loss. We can take the original sequence of policies $\{(\mu_{t,n},\nu_{t,n})_{t=1,\ldots}\}_{n=1,\ldots}$
						and take a subsequence $\{(\mu_{t,n_{i_1}},\nu_{t,n_{i_1}})_{t=1,\ldots}\}_{i_1=1,\ldots}$ such that  $\lim_{i_1\to\infty}\underline q_{1,n_{i_1}}=\underline{p}_1$ and $\lim_{i_1\to\infty}\overline{q}_{1,n_{i_1}}=\overline{q}_1$.
						
						We proceed inductively to get a sequence of refinements such that\\ $\{(\mu_{t,n_{i_k}},\nu_{t,n_{i_k}})_{t=1,\ldots}\}_{i_k=1,\ldots}$ is a refinement of $\{(\mu_{t,n_{i_{k-1}}},\nu_{t,n_{i_{k-1}}})_{t=1,\ldots}\}_{i_2=1,\ldots}$ and\\ 
						$\lim_{i_2\to\infty}\underline q_{k,n_{i_k}}=\underline{p}_k$ and $\lim_{i_k\to\infty}\overline{q}_{k,n_{i_2}}=\overline{q}_k$. It is now easy to see that the diagonal subsequence has the desired properties. 
					\end{proof}
					
				\end{document}